\newcommand{\sv}[1]{}
\newcommand{\lv}[1]{#1}
\theoremstyle{plain}
\newtheorem{THE}{Theorem}
\newtheorem{LEM}[THE]{Lemma}
\newtheorem{COR}[THE]{Corollary}
\newtheorem*{THE*}{Theorem}
\newtheorem*{fact*}{Fact}
\newtheorem{fact}[THE]{Fact}
\theoremstyle{definition}
\newtheorem{DEF}[THE]{Definition}
\newtheorem*{DEF*}{Definition}
\newtheorem{OBS}[THE]{Observation}
\newtheorem{CLM}{Claim}
\newcommand{\SB}{\{\,}
\newcommand{\SM}{\;{|}\;}
\newcommand{\SE}{\,\}}
\newcommand{\MMM}{\mathcal{M}}
\newcommand{\PPP}{\mathcal{P}}
\newcommand{\SSS}{\mathcal{S}}
\newcommand{\PE}{\textup{PE}}
\newcommand{\Nat}{\mathbb{N}}
\newcommand{\cc}[1]{{\mbox{\textnormal{\textsf{#1}}}}\xspace}  
\newcommand{\NP}{\cc{NP}}
\newcommand{\FPT}{\cc{FPT}}
\newcommand{\XP}{\cc{XP}}
\newcommand{\Weft}{{\cc{W}}}
\newcommand{\W}[1]{{\Weft}{{[#1]}}}
\newcommand{\paraNP}{\cc{paraNP}}
\newcommand{\cch}[1]{{\mbox{\textnormal{\textbf{\textsf{#1}}}}}\xspace}  
\newcommand{\FPTh}{\cch{FPT}}
\newcommand{\XPh}{\cch{XP}}
\newcommand{\Wefth}{{\cch{W}}}
\newcommand{\Wh}[1]{{\Wefth}{{\textbf{[#1]}}}}
\newcommand{\vc}{\mathsf{vc}}
\newcommand{\tw}{\mathsf{tw}}
\newcommand{\hy}{\hbox{-}\nobreak\hskip0pt}
\newcommand{\nn}{\mathbb{N}}
\newcommand{\bigoh}{\mathcal{O}}
\newcommand{\probfont}[1]{\textnormal{\textsc{#1}}}
\newcommand{\yes}{\textsc{Yes}}
\newcommand{\true}{\textsc{True}}
\newcommand{\GRAPHCON}{\textsc{TSS}}
\newcommand{\PSS}{\probfont{Partitioned Subset Sum}}
\newcommand{\MDPSS}{\probfont{MPSS}}
\newcommand{\SMDPSS}{\probfont{SMPSS}}
\newcommand{\SSP}{\probfont{Subset Sum}}
\newcommand{\mc}{\textsc{Partitioned Clique}}
\newcommand{\Gasp}{\probfont{Gasp}}
\newcommand{\sGasp}{\probfont{sGasp}}
\newcommand{\cGasp}{g\probfont{Gasp}}
\newcommand{\pref}{\succeq}
\newcommand{\prefs}{\succ}
\newcommand{\pbDef}[3]{%
\noindent
\begin{center}
\begin{boxedminipage}{0.98 \columnwidth}
#1\\[5pt]
\begin{tabular}{l p{0.73 \columnwidth}}
Input: & #2\\
Question: & #3
\end{tabular}
\end{boxedminipage}
\end{center}
}
\newcommand{\pbDefP}[4]{%
\noindent
\begin{center}
\begin{boxedminipage}{0.98 \columnwidth}
#1\\[5pt]
\begin{tabular}{l p{0.80 \columnwidth}}
Input: & #2\\
Parameter: & #3\\
Question: & #4
\end{tabular}
\end{boxedminipage}
\end{center}
}
\newcommand\xleftrightarrow[2][]{%
  \ext@arrow 9999{\longleftrightarrowfill@}{#1}{#2}}
\newcommand\longleftrightarrowfill@{%
  \arrowfill@\leftarrow\relbar\rightarrow}
\title{Group Activity Selection with Few Agent Types}
\author[1]{Robert Ganian}
\author[2]{Sebastian Ordyniak}
\author[3]{C. S. Rahul}
\affil[1]{TU Wien, Vienna, Austria (rganian@gmail.com)}
\affil[2]{University of Sheffield, Sheffield, UK (sordyniak@gmail.com)}
\affil[3]{University of Warsaw, Warsaw, Poland (rahulcsbn@gmail.com)}
\date{}
\begin{document}

\maketitle

\begin{abstract}
  The Group Activity Selection Problem (GASP) models situations 
  where a group of agents needs to be distributed to a set of
  activities while taking into account preferences of the agents
  w.r.t.\ individual activities and activity sizes. The problem, along
  with its two previously proposed variants sGASP and gGASP, has been studied in the parameterized complexity setting with various parameterizations, such as \emph{number of agents}, \emph{number of activities} and \emph{solution size}. However, the complexity of the problem parameterized by the \emph{number of types of agents}, a parameter motivated and proposed already in the paper that introduced GASP, has so far remained open.
  
In this paper we establish the complexity map for GASP, sGASP and gGASP when the number of types of agents is the parameter. Our positive results, consisting of one fixed-parameter algorithm and one XP algorithm, rely on a combination of novel Subset Sum machinery (which may be of general interest) and identifying certain \emph{compression steps} which allow us to focus on solutions which are ``acyclic''. These algorithms are complemented by matching lower bounds, which among others answer an open question of Gupta, Roy, Saurabh and Zehavi (2017). In this direction, the techniques used to establish W[1]-hardness of sGASP are of particular interest: as an intermediate step, we use Sidon sequences to show the W[1]-hardness of a highly restricted variant of multi-dimensional Subset Sum, which may find applications in other settings as well.
%
%

\end{abstract}


\section{Introduction}
\label{sec:intro}

In this paper we consider the \textsc{Group Activity Selection
  Problem} (\Gasp{}) together with its two most
prominent variants, the \textsc{Simple Group Activity Selection Problem}
(\sGasp{}), and the \textsc{Group Activity Selection Problem with
  Graph Structure} (\cGasp{})~\cite{DarmannEKLSW12,IgarashiPE17}. Since their
  introduction, these problems have become the focal point of extensive research~\cite{Darmann15,DarmannDockerDornLangSchneckenburger17,darmann2017,LeeW17GASP,GuptaRoySaurabhZehavi17,IBR17}.
In \Gasp{} one is given a set of agents, a set of activities, and a set of preferences for
each agent in the form of a complete transitive relation (also called the
\emph{preference list}) over the set of pairs
consisting of an activity $a$ and a number $s$, expressing the
willingness of the agent to participate in the activity $a$ if it has
$s$ participants. The aim is to find
a ``good'' assignment from agents to activities subject to certain
\emph{rationality} and \emph{stability} conditions.
Specifically, an assignment is individually rational if agents that are assigned to
an activity prefer this outcome over not being assigned to any
activity, and an assignment is (Nash) stable if every agent prefers
its current assignment over moving to any other activity.
In this way GASP captures a wide range of real-life
situations such as event organization and work delegation.

\sGasp{} is a simplified variant of \Gasp{} where the preferences of
agents are expressed in terms of \emph{approval sets} containing
(activity,size) pairs  instead of preference lists. In essence
\sGasp{} is \Gasp{} where each preference list has only two
equivalence classes: the class of the approved (activity,size)
pairs
(which contains all pairs
that are preferred  over not being assigned
to any activity), and the class of disapproved (activity,size) pairs
(all possible remaining pairs). 
On the other hand, \cGasp{} is a generalization of \Gasp{} where one is additionally
given an undirected graph (network) on the set of all agents that can
be employed to model for instance acquaintanceship or
physical distance between agents. Crucially, in \cGasp{} one only considers
assignments for which the subnetwork induced by all agents assigned to
some activity is connected. 
Note that 
if the network
forms a complete graph, then \cGasp{} is equivalent to the underlying
\Gasp{} instance.


\paragraph{Related Work.}
\sGasp{}, \Gasp{}, and \cGasp{}, are
known to be \NP-complete even in very restricted
settings~\cite{DarmannEKLSW12,IgarashiPE17,GuptaRoySaurabhZehavi17,IBR17}. It is therefore natural
to study these problems through the lens of parameterized complexity~\cite{DowneyFellows13,CyganFKLMPPS15}.
%
Apart from parameterizing by the solution size (i.e., the number of agents
assigned to any activity in a solution)~\cite{LeeW17GASP}, the perhaps 
%
most prominent parameterizations
thus far have been the number of activities, the number of agents, and in the case of
\cGasp{} structural parameterizations tied to the structure of the network such as
treewidth~\cite{DarmannEKLSW12,IgarashiPE17,GuptaRoySaurabhZehavi17,EibenGanianOrdyniakActivity18}.
Consequently, the parameterized complexity of all three variants of \Gasp{} w.r.t.\
the number of activities and/or the number of agents 
is now almost completely understood. 

Computing a stable assignment for a given instance of \Gasp{} is known to be \W{1}\hy hard and contained in \XP{}
parameterized by either the number of
activities~\cite{DarmannEKLSW12,IgarashiPE17} or the number of
agents~\cite{IgarashiPE17} and known to be fixed-parameter tractable
parameterized by both parameters~\cite{IgarashiPE17}.
Even though it has never been explicitly stated, the same results also
hold for \cGasp{} when parameterizing by the number of agents as well
as when using both parameters. This is because both the
\XP{} algorithm for the number of agents as well as the fixed-parameter algorithm
for both parameters essentially brute-force over every possible
assignment and are hence also able to find a solution for
\cGasp{}. Moreover, the fact that \cGasp{} generalizes \Gasp{} implies
that the \W{1}\hy hardness result for the number of agents also carries over to
\cGasp{}. 
On the other hand, if we consider the number of activities as a
parameter then \cGasp{} turns out to be harder than \Gasp{}: Gupta et al.~(\cite{GuptaRoySaurabhZehavi17}) showed that
\cGasp{} is \NP-complete even when restricted to instances with a single activity.
The hardness of \cGasp{} has inspired a series of tractability
results~\cite{GuptaRoySaurabhZehavi17,IgarashiPE17} obtained by
employing additional restrictions on the structure of the network.
One prominent result in this direction has been recently obtained by Gupta et al.~(\cite{GuptaRoySaurabhZehavi17}), showing
that \cGasp{} is fixed-parameter tractable parameterized by the number
of activities if the network has constant treewidth. 
For \sGasp{}, it was recently shown that the problem is also \W{1}-hard when parameterized by the number of activities~\cite{EibenGanianOrdyniakActivity18}, and hence the only small gap left 
was the complexity of this problem parameterized by the number of agents.
For completeness, we resolve this question in our concluding remarks by giving a fixed-parameter algorithm.

Already with the introduction of \Gasp{}~\cite{DarmannEKLSW12} the authors argued that 
instead of putting restrictions on the total number
of agents, which can be very large in general, it might
be much more useful to consider the number of distinct types of
agents. It is easy to imagine a setting
with large groups of agents that share the same preferences
(for instance due to inherent limitations of how preferences are
collected). In contrast to the related parameter number of activity
types, where it is known
that \sGasp{} remains \NP-complete even for a
constant number of activity types~\cite{DarmannEKLSW12}, the complexity of the problems
parameterized by the number of agent types (with or without restricting the number of activities) has remained wide open thus far.

\paragraph{Our Results.}
In this paper we obtain a complete classification of the complexity of \Gasp{} and its variants \sGasp\ and \cGasp\ when parameterized by the number of agent types ($t$) alone, and also when parameterized by $t$ plus the number of activities ($a$). In particular, for each of the considered problems and parameterizations, we determine whether the problem is in \FPT, or \W{1}-hard and in \XP, or para\NP-hard. One distinguishing feature of our lower- and upper-bound results is that they make heavy use of novel Subset-Sum machinery. Below, we provide a high-level summary of the individual results presented in the paper.

\begin{itemize}
\item[\textbf{Result 1.}] \textbf{\sGasp{} is fixed-parameter tractable when parameterized by $t+a$.}
\end{itemize}

This is the only fixed-parameter tractability result presented in the paper, and is essentially tight: it was recently shown that \sGasp{} is \W{1}-hard when parameterized by $a$ alone~\cite{EibenGanianOrdyniakActivity18}, and the \W{1}-hardness of the problem when parameterized by $t$ is obtained in this paper. Our first step towards obtaining the desired fixed-parameter algorithm for \sGasp{} is to show that every YES-instance has a solution which is \emph{acyclic}---in particular, a solution with no cycles formed by interactions between activities and agent types (captured in terms of the \emph{incidence graph} $G$ of an assignment). This is proved via the identification of certain \emph{compression steps} which can be applied on a solution in order to remove cycles. 

Once we show that it suffices to focus on acyclic solutions, we branch over all acyclic incidence graphs (i.e., all acyclic edge sets of $G$); for each such edge set, we can reduce the problem of determining whether there exists an assignment realizing this edge set to a variant of \textsc{Subset Sum} embedded in a tree structure. The last missing piece is then to show that this problem, which we call \textsc{Tree Subset Sum}, is polynomial-time tractable; this is done via dynamic programming, whereas each step boils down to solving a simplified variant of \textsc{Subset Sum}. 

\begin{itemize}
\item[\textbf{Result 2.}] \textbf{\sGasp{} is \W{1}-hard when parameterized by $t$.}
\end{itemize}
Our second result complements Result 1. As a crucial intermediate step towards Result 2, we 
%
obtain the \W{1}-hardness of a variant of \textsc{Subset Sum} with three distinct ``ingredients'':
\begin{enumerate}
\item \label{prop1} \textbf{Partitioning}: items are partitioned into sets, and precisely one item must be selected from each set,
\item \label{prop2} \textbf{Multidimensionality}: each item is a $d$-dimensional vector ($d$ being the parameter) where the aim is to hit the target value for each component, and
\item \label{prop3} \textbf{Simplicity}: each vector contains precisely one non-zero component.
\end{enumerate}
We call this problem \textsc{Simple Multidimensional Partitioned Subset Sum} (\textsc{SMPSS}). Note that \textsc{SMPSS} is closely related to \textsc{Multidimensional Subset Sum} (\textsc{MSS}), which (as one would expect) merely enhances \textsc{Subset Sum} via Ingredient~\ref{prop2}. \textsc{MSS} has recently been used to establish \W{1}-hardness for parameterizations of \textsc{Edge Disjoint Paths}~\cite{GanianOrdyniakRamanujan17} and \textsc{Bounded Degree Vertex Deletion}~\cite{GanianKO18}. However, Ingredient~\ref{prop1} and especially Ingredient~\ref{prop3} are critical requirements for our reduction to work, and establishing the \W{1}-hardness of \textsc{SMPSS} was the main challenge on the way towards the desired lower-bound result for \sGasp{}. Since \textsc{MSS} has already been successfully used to obtain lower-bound results and \textsc{SMPSS} is a much more powerful tool in this regard, we believe that \textsc{SMPSS} will find applications in establishing lower bounds for other problems in the future.

\begin{itemize}
\item[\textbf{Result 3.}] \textbf{\Gasp{} is in \XP\ when parameterized by $t$.}
\end{itemize}
This is the only \XP\ result required for our complexity map, as it
implies \XP\ algorithms for \sGasp{} parameterized by $t$ and for
\Gasp{} parameterized by $t+a$. We note that the techniques used to
obtain Result 3 are disjoint from those used for Result 1; in particular, our first
step is to reduce \Gasp{} parameterized by $t$ to solving ``\XP-many''
instances of \sGasp{} parameterized by $t$. This is achieved by showing
that once we know a ``least preferred alternative'' for every agent
type that is active in an assignment, then the \Gasp{} instance
becomes significantly easier---and, in particular, can be reduced to a
(slightly modified version of) \sGasp{}. It is interesting to note that
the result provides a significant conceptual improvement over the known brute
force algorithm for \Gasp{} parameterized by the number of agents which
enumerates all possible assignments of agents to
activities~\cite[Theorem 3]{IgarashiBredereckElkind17arv} (see also~\cite{IBR17}): instead of
guessing an assignment for all agents, one merely needs to guess a least
preferred alternative for every agent type.

The second part of our journey towards Result 3 focuses on obtaining
an \XP\ algorithm for \sGasp{} parameterized by $t$. This algorithm
has two components. Initially, we show that in this setting one can
reduce \sGasp{} to the problem of finding an assignment which is
individually rational (i.e., without the stability condition) and
satisfies some additional minor properties. To find
such an assignment, we once again make use of \textsc{Subset Sum}: in
particular, we develop an \XP\ algorithm for the \textsc{MPSS} problem
(i.e., \textsc{Subset Sum} enhanced by ingredients~\ref{prop1}
and~\ref{prop2}) and apply a final reduction from finding an
individually rational assignment to \textsc{MPSS}.

\begin{itemize}
\item[\textbf{Result 4.}] \textbf{\Gasp{} is \W{1}-hard when parameterized by $t+a$.}
\item[\textbf{Result 5.}] \textbf{\cGasp{} is \W{1}-hard when parameterized by $t+a$ and the \emph{vertex cover number}~\cite{FellowsLokshtanovMisra08} of the network.}
\end{itemize}

The final two results are hardness reductions which represent the last
pieces of the presented complexity map. Both are obtained via
reductions from \textsc{Partitioned Clique} (also called
\textsc{Multicolored Clique} in the
literature~\cite{CyganFKLMPPS15}), and both reductions essentially use
$k+\binom{k}{2}$ activities whose sizes encode the vertices and edges
forming a $k$-clique. The main
challenge lies in the design of (a bounded number of) agent types
whose preference lists ensure that the chosen vertices are indeed
endpoints of the chosen edges. The reduction for \cGasp{} then
becomes even more involved, as it can only employ a limited number of connections
between the agents in order to ensure that vertex cover of the network is bounded.

We note that Result 5 answers an open question raised by Gupta, Roy, Saurabh and Zehavi~\cite{GuptaRoySaurabhZehavi17}, who showed that \cGasp{} is fixed-parameter tractable parameterized by the number of activities
if the network has constant treewidth and wondered whether their
result can be improved to a more efficient fixed-parameter algorithm
parameterized by the number of activities and treewidth. In this sense, our hardness result
represents a substantial shift of the boundaries of (in)tractability: in addition to the
setting of Gupta et al., it also rules out the use of agent types as a
parameter and replaces treewidth by the more restrictive vertex cover number.

An overview of our results in the context of related work is provided in Table~\ref{tbl:intro-res}. 

\begin{table}[ht]
  \begin{tabular}{r|c|cc}
    & Parameterization & Lower Bound & Upper Bound \\\toprule
    \sGasp{} & \multirow{3}{*}{$n$} & \multicolumn{2}{c}{{\color{white}$^1$}\FPTh{}$^0$} \\
    \Gasp{} & & {\color{white}$_a$}\W{1}$_a$ & \XP \\
    \cGasp{} & & \W{1} & {\color{white}$_a$}\XP{}$_a$ \\ \midrule

        \sGasp{} & \multirow{3}{*}{$a$} & {\color{white}$_b$}\W{1}$_b$  & \XP{} \\
    \Gasp{} & & \W{1} &  {\color{white}$_a$}\XP{}$_a$ \\
    \cGasp{} & & \multicolumn{2}{c}{{\color{white}$_c$}\paraNP{}$_c$} \\\midrule

    \sGasp{} & \multirow{3}{*}{$n+a$} & \multicolumn{2}{c}{\multirow{3}{*}{\FPT{}$_d$}}\\
    \Gasp{} & & \\
    \cGasp{} & & \\\midrule

    \sGasp{} & \multirow{3}{*}{$t$} & {\color{white}$^2$}$\Wh{1}^2$ & \XPh{}  \\
    \Gasp{} & & \Wh{1} & {\color{white}$^3$}\XPh{}$^3$\\
    \cGasp{} & & \multicolumn{2}{c}{{\color{white}$_c$}\paraNP{}$_c$} \\\midrule
        
    \sGasp{} & \multirow{3}{*}{$t+a$} & \multicolumn{2}{c}{{\color{white}$^1$}\FPTh{}$^1$}\\
    \Gasp{} & & {\color{white}$^4$}\Wh{1}$^4$ & \XPh{}\\
    \cGasp{} & & \multicolumn{2}{c}{{\color{white}$_c$}\paraNP{}$_c$}\\\midrule
    
    \cGasp{} & $t+a+\vc$ (or $t+a+\tw$) & {\color{white}$^5$}\Wh{1}$^5$ & \XP{}$_c$
  \end{tabular}\vspace{-0.2cm}
  \caption[]{Lower and upper bounds for \sGasp{}, \Gasp{}, and \cGasp{} parameterized
  by the number of agents ($n$), agent types ($t$), and the number
  of activities ($a$). In the case of
    $\cGasp{}$, also the parameters vertex cover number ($\vc$) and
    treewidth ($\tw$) of the network are considered. Entries in bold are shown in this paper, and the numbers $1$ to $5$ in the upper index are used to identify results 1 to 5. The result marked with $^0$ is provided in the concluding remarks.\\
    References: $a$ is~\cite{IgarashiPE17}, $b$ is~\cite{EibenGanianOrdyniakActivity18}, $c$ is~\cite{GuptaRoySaurabhZehavi17}, $d$ is folklore.}
  \label{tbl:intro-res}
\end{table}

\paragraph{Organization of the Paper.}
After introducing the required preliminaries in Section~\ref{sec:prelims}, we present all of our Subset Sum machinery in the dedicated Section~\ref{sec:ssm}. Each subsequent Section $i\leq 8$ then focuses on obtaining Result $i-3$.

\section{Preliminaries}
\label{sec:prelims}

For an integer $i$, we let $[i]=\{1,2,\dots,i\}$ and $[i]_0=[i]\cup\{0\}$.
We denote by $\Nat$ the set of natural numbers, by $\Nat_0$ the set
$\Nat \cup \{0\}$. For a set $S$ and an integer $k$, we denote by
$S^k$ and $2^S$ the set of all $k$ dimensional vectors over $S$ and the
set of all subsets of $S$, respectively. 
For a vector $\bar{p}$ of integers, we use $|\bar{p}|$ to denote the
sum of its elements.

We refer to the handbook by Diestel~(\cite{Diestel12}) for
standard graph terminology. The \emph{vertex cover number} of a graph $G$ is the size of a minimum vertex cover of $G$.

\subsection{Parameterized Complexity}

In parameterized
algorithmics~\cite{DowneyFellows13,CyganFKLMPPS15,Niedermeier06}
the run-time of an algorithm is studied with respect to a parameter
$k\in\nn_0$ and input size~$n$.
The basic idea is to find a parameter that describes the structure of
the instance such that the combinatorial explosion can be confined to
this parameter.
In this respect, the most favourable complexity class is \FPT (\textit{fixed-parameter tractable})
which contains all problems that can be decided by an algorithm
running in time $f(k)\cdot n^{\bigoh(1)}$, where $f$ is a computable
function.
Algorithms with this running time are called \emph{fixed-parameter algorithms}. A less favourable outcome is an \XP \emph{algorithm}, which is an algorithm running in time $\mathcal O(n^{f(k)})$; problems admitting such algorithms belong to the class \XP.
\sv{
To obtain our lower bounds, we will need the notion of a
  \emph{parameterized reduction} and the complexity class
  $\W{1}$. Generally speaking, a parameterized reduction is a variant
  of the standard polynomial-time reduction which retains bounds on
  the parameter, and $\W{1}$\hy hardness rules out the existence of fixed-parameter algorithms under the Exponential Time Hypothesis~\cite{ImpagliazzoPaturiZane01}.
  }

To obtain our lower bounds, we will need the notion of a parameterized reduction.
Formally, a {\em parameterized problem\/} is a subset of $\Sigma^*\times\nn$, 
where $\Sigma$ is the input alphabet.
Let $L_1\subseteq \Sigma_1^*\times\nn$ and $L_2\subseteq \Sigma_2^*\times\nn$ be parameterized problems.
A \textit{parameterized reduction} (or FPT-reduction) from $L_1$ to $L_2$ is a mapping $P:\Sigma_1^*\times\nn\rightarrow\Sigma_2^*\times\nn$ such that
\begin{inparaenum}[(i)]
  \item $(x,k)\in L_1$ iff $P(x,k)\in L_2$,
  \item the mapping can be computed by an FPT-algorithm w.r.t.\ parameter $k$, and
  \item there is a computable function $g$ such that $k'\leq g(k)$, where $(x',k')=P(x,k)$.
\end{inparaenum}

Finally, we introduce the complexity class used to describe our lower bounds.
The class $\W{1}$ captures parameterized intractability and contains
all problems that are FPT-reducible to \probfont{Independent Set} 
(parameterized by solution size).

\subsection{Group Activity Selection.}
The task in the \textsc{Group Activity Selection Problem} (\textsc{Gasp}) is to compute a \textit{stable assignment} $\pi$ from a given set $N$ of
\textit{agents} to a set $A$ of \textit{activities}, where each agent
participates in at most one activity in $A$. The assignment
$\pi$ is (Nash) stable if and only if it is \textit{individually
  rational} and no agent has an \textit{NS-deviation} to any other activity (both of these
stability rules are defined in the next paragraph). We use a
dummy activity $a_\emptyset$ to capture all those agents that do not
participate in any activity in $A$ and denote by $A^*$ the set $A\cup\{a_\emptyset\}$.  Thus, an assignment $\pi$ is a mapping from $N$ to $A^*$, and for an activity $a\in A$ we use $\pi^{-1}(a)$ to denote the set of agents assigned to $a$ by $\pi$; we set $|\pi^{-1}(a_{\emptyset})|=1$ if there is at least one agent assigned to $a_\emptyset$ and $0$ otherwise.    

The set $X$ of \emph{alternatives} is defined as $X=(A\times [|N|])\cup\{(a_\emptyset,1)\}$. Each agent is associated with its own \textit{preferences} defined on the set $X$. 
In the case of the standard \Gasp{} problem, an instance $I$ is of the form $(N,A,(\pref_{n})_{n\in N})$ where each agent $n$ is associated with a complete transitive preference relation (list) $\pref_n$ over the set $X$. An assignment $\pi$ is \emph{individually  rational} if for every agent $n \in N$ it holds that if $\pi(n)=a$ and $a \neq
a_\emptyset$, then $(a,|\pi^{-1}(a)|)\pref_n
(a_\emptyset,1)$ (i.e., $n$ weakly prefers staying in $a$ over moving to $a_\emptyset$). 
An agent $n$ where $\pi(n)=a$  is defined
to have an \textit{NS-deviation} to a different
activity $a'$ in $A$ if $(a',|\pi^{-1}(a')|+1) \prefs_n
(a,|\pi^{-1}(a)|)$ (i.e., $n$ prefers moving to an activity $a'$ over staying in $a$). 
The task in \Gasp{} is to compute a stable assignment.

\cGasp{} is defined analogously to \Gasp{}, however where one is additionally given a set $L$ of
\emph{links} $L \subseteq \SB \{n,n'\}\SM
n,n'\in N \land n\neq n'\SE$ between the agents on the input; specifically, $L$ can be viewed as a set of undirected edges and $(N,L)$ as a simple undirected graph. In \cGasp{}, the task is to find an assignment $\pi$ which is not only stable but also connected; formally, for every $a \in A$ the set of agents $\pi^{-1}(a)$ induces a connected subgraph of $(N,L)$. Moreover, an agent $n \in N$ only has an NS-deviation to some activity $a\neq \pi(n)$ if (in addition to the conditions for NS-deviations defined above) $n$ has an edge to at least one agent in $\pi^{-1}(a)$.

In \sGasp{}, an instance $I$ is of the form $(N,A,(P_n)_{n \in N})$, where each agent has an \emph{approval set} $P_{n}\subseteq X\setminus \{(a_\emptyset,1)\}$ of preferences (instead of an ordered preference list).
We denote by $P_n(a)$ the set $\SB i \SM (a,i)\in P_n\SE$ for an activity $a \in A$. An assignment $\pi : N \rightarrow A^*$ is said to be \emph{individually rational} if every agent $n \in N$ satisfied the following: if $\pi(n)=a$ and $a \neq
a_\emptyset$, then $|\pi^{-1}(a)| \in P_n(a)$. Further, an agent $n \in N$ where 
$\pi(n)=a_\emptyset$,  is said to have an \emph{NS-deviation}  to an
activity $a$ in $A$ if $(|\pi^{-1}(a)|+1) \in P_n(a)$.

We now introduce the notions and definitions required for our
main parameter of interest, the ``number of agent types''.
We say that two agents $n$ and $n'$ in $N$ have the same \emph{agent type} if they have the same preferences. To be specific,  
$P_n=P_{n'}$ for \sGasp{} and $\pref_n=\pref_{n'}$ for \Gasp{} and \cGasp{}. In the case of \sGasp{} and \Gasp{} $n$ and $n'$ are indistinguishable, while in \cGasp{} $n$ and $n'$ can still have different links to other agents.
For a subset $N' \subseteq N$, we denote by $T(N')$
the set of agent types occurring in $N'$. Note that this notation
requires that the instance is clear from the context. If this is not
the case then we denote by $T(I)$ the set $T(N)$ if $N$ is the set of
agents for the instance $I$ of \sGasp{}, \Gasp{}, or \cGasp{}. 

For every agent type $t \in T(I)$, we denote by $N_t$ the
subset of $N$ containing all agents of type $t$; observe that $\SB N_t
\SM t \in T(I) \SE$ forms a partition of $N$. 
For an agent type $t \in T(I)$ we denote by $P_t$ (\sGasp{}) or $\pref_t$ (\Gasp{}) the preference list assigned to all agents of type $t$ and we use $P_t(a)$ (for an activity $a \in A$) to denote $P_t$ restricted to activity $a$, i.e., $P_t(a)$ is equal to
$P_n(a)$ for any agent $n$ of type $t$.
For an assignment $\pi : N \rightarrow A^*$, $t \in T(I)$,
and $a \in A$ we denote by $\pi_{t,a}$ the set $\SB n
\SM n \in N_t \land \pi(n)=a\SE$ and by $\pi_t$ the set $\bigcup_{a \in
  A}\pi_{t,a}$.
Further, $\pi(t)$ is the set of all activities that have at least
one agent of type $t$ participating in it.
We say that $\pi$ is a perfect assignment for some agent type $t \in T(I)$
if $\pi(n)\neq a_\emptyset$ for every $n \in N_t$. We denote by
$\PE(I,\pi)$ the subset of $T(I)$ consisting of all agent types that are
perfectly assigned by $\pi$, and say that $\pi$ is a \emph{perfect assignment} if $\PE(I,\pi)=T(I)$.

One notion that will appear throughout the paper is that of \emph{compatibility}:
for a subset $Q \subseteq T(I)$, we say that $\pi$ is
\emph{compatible} with $Q$ if $\PE(I,\pi)=Q$.
We conclude this section with a technical lemma
which provides a preprocessing procedure that will be used as a basic tool for obtaining our algorithmic results. In particular, Lemma~\ref{lem:fpt-prepro} allows us 
to reduce the problem of
computing a stable assignment for a \sGasp{} instance compatible with $Q$ to the problem
of finding an individually rational assignment.

\begin{LEM}
\label{lem:fpt-prepro}
  Let $I=(N,A,(P_n)_{n \in N})$ be an instance of \sGasp{} and
  $Q \subseteq T(I)$. Then in time $\bigoh(|N|^2 |A|)$
  one can compute an instance $\gamma(I,Q)=(N,A,(P_n')_{n
    \in N})$ and $A_{\neq \emptyset}(I,Q)\subseteq A$ with the
    following property: for every assignment $\pi : N \rightarrow A^*$ that is compatible
  with $Q$, it holds that $\pi$ is stable for $I$ if and
  only if $\pi$ is individually rational for $\gamma(I,Q)$ and 
  $\pi^{-1}(a)\neq \emptyset$ for every $a \in A_{\neq
    \emptyset}(I,Q)$.
\end{LEM}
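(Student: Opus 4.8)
The plan is to fold the absence of NS-deviations into a modified notion of individual rationality by deleting, for every activity, exactly those sizes at which some agent currently sitting in $a_\emptyset$ would want to join it. First I recall that an assignment $\pi$ is stable for a \sGasp{} instance precisely when it is individually rational and, for every agent $n$ with $\pi(n)=a_\emptyset$ and every $a\in A$, we have $(|\pi^{-1}(a)|+1)\notin P_n(a)$; note that in \sGasp{} only agents mapped to $a_\emptyset$ can have an NS-deviation. Since $\pi$ is compatible with $Q$, i.e.\ $\PE(I,\pi)=Q$, the set of agent types possessing at least one agent mapped to $a_\emptyset$ is exactly $T(I)\setminus Q$, and all agents of a type $t$ share the approval set $P_t$; hence the no-NS-deviation requirement for $\pi$ is equivalent to: for every $t\in T(I)\setminus Q$ and every $a\in A$, $|\pi^{-1}(a)|\notin\{i-1 : i\in P_t(a)\}$.

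This motivates defining, for each $a\in A$, the \emph{blocked set}
$F(a)=\SB i\in\Nat_0 \SM \exists\, t\in T(I)\setminus Q \text{ with } i+1\in P_t(a)\SE$,
and setting $\gamma(I,Q)=(N,A,(P_n')_{n\in N})$ with $P_n'(a)=P_n(a)\setminus F(a)$ for every agent $n$ and activity $a$ (so agents of the same type still coincide in $\gamma(I,Q)$), together with $A_{\neq\emptyset}(I,Q)=\SB a\in A \SM 0\in F(a)\SE$. Intuitively, removing $F(a)$ from the approval sets kills every size at which an $a_\emptyset$-agent could deviate to $a$ \emph{provided $a$ is nonempty}, so that some agent's modified approval set is actually consulted for $a$; the one situation the pruning cannot detect is a blocked size equal to $0$ with $a$ left empty, and this is exactly what $A_{\neq\emptyset}(I,Q)$ rules out.

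It then remains to verify the stated equivalence for every $\pi$ compatible with $Q$; note that compatibility with $Q$ is a purely combinatorial property of $\pi$ and is unaffected by passing from $I$ to $\gamma(I,Q)$. For the forward direction, if $\pi$ is stable for $I$ then individual rationality gives $|\pi^{-1}(a)|\in P_t(a)$ for every type $t$ with an agent in $a$, while the no-NS-deviation property gives $|\pi^{-1}(a)|\notin F(a)$ for every $a\in A$; together these yield $|\pi^{-1}(a)|\in P_t'(a)$, so $\pi$ is individually rational for $\gamma(I,Q)$, and in particular $|\pi^{-1}(a)|\neq 0$ whenever $0\in F(a)$, i.e.\ $\pi^{-1}(a)\neq\emptyset$ for all $a\in A_{\neq\emptyset}(I,Q)$. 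For the converse, individual rationality for $\gamma(I,Q)$ immediately implies individual rationality for $I$ since $P_n'(a)\subseteq P_n(a)$; and if some $t\in T(I)\setminus Q$ had an NS-deviation to $a$, i.e.\ $|\pi^{-1}(a)|\in F(a)$, then either $a$ is nonempty---whence any type assigned to $a$ contradicts individual rationality for $\gamma(I,Q)$, as that size lies in $F(a)$ and is therefore absent from $P'$---or $a$ is empty, forcing $0\in F(a)$, hence $a\in A_{\neq\emptyset}(I,Q)$, contradicting $\pi^{-1}(a)\neq\emptyset$. So $\pi$ is stable for $I$. The only genuinely delicate point here is this last case distinction: one must check that $A_{\neq\emptyset}(I,Q)$ captures precisely the NS-deviations that pruning the approval sets is bound to miss.

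Finally, for the running time: there are at most $|N|$ agent types, so for each of the $|A|$ activities the set $F(a)$ can be built as a Boolean array indexed by $[|N|]_0$ by scanning $P_t(a)$ over all $t\in T(I)\setminus Q$ in time $\bigoh(|N|^2)$; each $P_n'(a)$ is then obtained by a single pass over $P_n(a)$ with $\bigoh(1)$ membership queries into $F(a)$, costing $\bigoh(|N|)$, and $A_{\neq\emptyset}(I,Q)$ is read off in time $\bigoh(|A|)$. Summing over all activities and types gives the claimed $\bigoh(|N|^2|A|)$ bound.
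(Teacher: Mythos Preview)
Your proposal is correct and follows essentially the same approach as the paper: you define the same forbidden-size sets $F(a)$ (the paper computes these as a preprocessing step without naming them), remove them from all approval sets to obtain $\gamma(I,Q)$, and set $A_{\neq\emptyset}(I,Q)=\{a:0\in F(a)\}$, which coincides with the paper's $\{a:\exists\,t\in T(N)\setminus Q,\ 1\in P_t(a)\}$. Your equivalence argument spells out the case distinction (nonempty versus empty $a$) that the paper leaves as ``straightforward to verify,'' and your running-time analysis matches.
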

\begin{proof}
  Let $\pi: N \rightarrow A^*$ be an assignment that is compatible
  with $Q$. Then there is an agent of type $t\in T(N)$ assigned to 
  $a_\emptyset$ if and only if $t \notin Q$. Hence $\pi$ is stable for $I$
  if and only if $\pi$ is individually rational and additionally it
  holds that for every agent type $t \in T(N) \setminus Q$ and every
  activity $a \in A$, $|\pi^{-1}(a)|+1 \notin P_t(a)$. This naturally leads us to a
  certain set of ``forbidden'' sizes for activities, and we will obtain
  the desired instance $\gamma(I,Q)$ by simply removing all tuples 
  from all preference lists that would allow activities to reach a forbidden size.
  Formally, we obtain the desired instance $\gamma(I,Q)$ from $I$ removing all tuples
  $(a,n)$ from every preference list $P_t$, where $t \in T(N)$ such
  that there is an agent type $t' \in T(N) \setminus Q$ with $n+1 \in
  P_{t'}(a)$.  
  This construction prevents the occurrence of all forbidden sizes of activities 
  \emph{except} for forbidding activities of size $0$; that is where we use the
  set $A_{\neq \emptyset}(I,Q)$.
  Formally, the set $A_{\neq \emptyset}(I,Q)$ consists of
  all activities $a$ such that there is an agent type $t \in T(N)
  \setminus Q$ with $1 \in P_t(a)$.  
  It is now straightforward to
  verify that $\gamma(I,Q)$ and $A_{\neq \emptyset}(I,Q)$ satisfy the claim
  of the lemma.

  Finally the running time of $\bigoh(|N|^2|A|)$ for the algorithm can be achieved as follows.
  In a preprocessing step we first compute for every activity $a \in
  A$ the set of all forbidden numbers, i.e., 
  the set of all numbers $n$ such that there is an agent type $t
  \in T(N) \setminus Q$ with $(a,n+1) \in P_{t'}(a)$. For every
  activity $a$, we store the resulting set of numbers in such a way
  that determining whether a number $n$ is contained in the set for
  activity $a$ can be achieved in constant time; this can for instance
  be achieved by storing the set for each activity $a$ as a Boolean array with $|N|$
  entries, whose $i$-th entry is \true{} if and only if $i$ is contained in
  the set of numbers for $a$. This preprocessing step takes time at
  most $\bigoh(|N|^2 |A|)$ and after it is completed we
  can use the computed sets to test for every agent type $t
  \in T(N)$, every activity $a \in A$, and every $n \in P_t(a)$, whether
  there is an agent type $t' \in T(N) \setminus Q$ such that $(a,n+1)
  \in P_{t'}(a)$ in constant time. If so we remove $n$ from $P_t(a)$,
  otherwise we continue. This shows that $\gamma(I,Q)$ can be computed in
  $\bigoh(|N|^2\cdot |A|)$ time. The computation of $A_{\neq
    \emptyset}$ only requires to check for every activity $a \in A$
  whether $1$ is contained in the set of forbidden numbers for $a$; if
  so $a$ is contained in $A_{\neq \emptyset}$ and otherwise it is
  not. After preprocessing, this can be achieved in time $\bigoh(|A|)$.
\end{proof}

\section{Subset Sum Machinery}\label{sec:ssm}

In this section we introduce the subset sum machinery required for our algorithms and lower bound results. 
In particular, we introduce three variants of \textsc{Subset Sum}, obtain algorithms for two of them, and provide a \W{1}\hy hardness result for the third.
We note that it may be helpful to read the following three subsections in the context of the individual sections where they are used: in particular, Subsection~\ref{ssec:tss} is used to obtain Result~1 (Section~\ref{sec:res1}), Subsection~\ref{ssec:mpss} is used as a preprocedure for Result~3 (Section~\ref{sec:res3}) and Subsection~\ref{ssec:smpss} (which is by far the most difficult of the three) is a crucial step in the reduction used for Result~2 (Section~\ref{sec:res2}).


\subsection{Tree Subset Sum}\label{ssec:tss}

Here we introduce a useful generalization of
\SSP, for which we obtain polynomial-time tractability under the assumption that
the input is encoded in unary.
Intuitively, our problem asks us to assign values to edges 
while meeting a simple criterion on the values of edges incident to each vertex.

\pbDef{\textsc{Tree Subset Sum} (\GRAPHCON)}{A vertex-labeled undirected tree $T$ with labeling
  function $\lambda : V(T) \rightarrow 2^{\Nat}$.}
{Is there an assignment $\alpha: E(T) \rightarrow \Nat$ such
  that for every $v \in V(T)$ it holds that $\sum_{e\in E(T) \land v \in
    e}\alpha(e)\in \lambda(v)$.}

Let us briefly comment on the relationship of \GRAPHCON{} with \SSP{}.
Recall that given a set $S$ of natural numbers and a natural number
$t$, the \SSP{} problem
asks whether there is a subset $S'$ of $S$ such that $\sum_{s \in
  S'}s=t$. One can easily construct a simple instance
$(G,\lambda)$ of \GRAPHCON{} that is equivalent to a given instance
$(S,t)$ of \SSP{} as follows. $G$ consists of a star having
one leaf $l_s$ for every $s \in S$ with $\lambda(l_s)=\{0,s\}$ and
$\lambda(c)=\{t\}$ for the center vertex $c$ of the star. 
Given this
simple reduction from \SSP{} to \GRAPHCON{} it becomes clear that
\GRAPHCON{} is much more general than \SSP{}.
In particular, instead
of a star \GRAPHCON{} allows for the use of an arbitrary tree
structure and moreover one can use arbitrary subsets of natural
numbers to specify the constrains on the vertices.
The above reduction in combination with the fact that \SSP{} is weakly
NP-hard implies that \GRAPHCON{} is also weakly NP-hard. 

In the remainder of this section we will
show that \GRAPHCON{} (like \SSP{}) can be solved in
polynomial-time if the input is given in unary.

Let $I=(T,\lambda)$ be an instance of \GRAPHCON. We denote by
$\max(I)$ the value of the maximum number occurring in any vertex
label. The main idea behind our algorithm for \GRAPHCON{} is to apply 
leaf-to-root dynamic programming. 
In order to execute our dynamic programming procedure,
we will need to solve a special case of \GRAPHCON{} which we call \PSS{}; 
this is the problem that will later arise when computing
the dynamic programming tables for \GRAPHCON.
\sv{In the \PSS{} problem one is given a target set $R$ of natural numbers and $\ell$ source sets
$S_1, \dotsc S_\ell$ of natural numbers and the aim is to
compute the set $S$ of all natural numbers $s$ such that there are
$s_1,\dotsc,s_\ell$, where $s_i \in S_i$ for every $i$ with $1 \leq i
\leq \ell$, satisfying $(\sum_{1 \leq i \leq \ell}s_i)+s \in R$.}

\pbDef{\PSS}{A target set $R$ of natural numbers and $\ell$ source sets
  $S_1, \dotsc S_\ell$ of natural numbers.}
{Compute the set $S$ of all natural numbers $s$ such that there are
  $s_1,\dotsc,s_\ell$, where $s_i \in S_i$ for every $i$ with $1 \leq i
  \leq \ell$, satisfying $(\sum_{1 \leq
    i \leq \ell}s_i)+s \in R$.}
For an instance $I=(R,S_1,\dotsc,S_\ell)$ of \PSS{}, we denote by
$\max(I)$ the value of the maximum number occurring in $R$.
\begin{LEM}
  \label{lem:solve-pss}
  An instance $I=(R,S_1,\dotsc,S_\ell)$ of \PSS{} can be solved in
  time $\bigoh(\ell \cdot \max(I)^2)$.
\end{LEM}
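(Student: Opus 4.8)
The plan is to solve \PSS{} by a left-to-right dynamic programming over the source sets $S_1,\dots,S_\ell$, maintaining at each stage the set of all \emph{partial sums} that are reachable (and relevant), capped so that the table stays small. Concretely, let $m=\max(I)$. First I would observe that any partial sum exceeding $m$ can never be completed to a value in $R\subseteq\{0,\dots,m\}$, since all numbers are natural (nonnegative); hence we only ever need to track partial sums in $[m]_0$, a set of size $m+1$. Define $D_0=\{0\}$ and, for $i\ge 1$,
\[
D_i=\SB\, x+s_i \SM x\in D_{i-1},\ s_i\in S_i,\ x+s_i\le m \,\SE.
\]
Each $D_i$ is computed from $D_{i-1}$ by, for every $x\in D_{i-1}$ and every $s_i\in S_i\cap[m]_0$, adding $x+s_i$ if it is at most $m$; using a Boolean array of length $m+1$ to represent $D_i$, each such pair costs $\bigoh(1)$ and membership tests are $\bigoh(1)$. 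Finally, having computed $D_\ell$, the answer set is
\[
S=\SB\, s\in\Nat \SM \exists x\in D_\ell,\ \exists r\in R \text{ with } x+s=r \,\SE,
\]
which, since $x\ge 0$ and $r\le m$, forces $s\le m$; so $S=\bigcup_{r\in R}\SB\, r-x \SM x\in D_\ell,\ x\le r\,\SE$, again computable with a Boolean array of length $m+1$.

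The correctness argument is the routine induction: I would prove that $D_i$ is exactly the set of values $(\sum_{1\le j\le i}s_j)$ that are $\le m$ and are extensions of \emph{some} choice $s_1\in S_1,\dots,s_{i-1}\in S_{i-1}$ whose partial sums never exceeded $m$ — together with the observation that discarding a partial prefix sum once it passes $m$ loses nothing, because later summands are nonnegative and the final target is $\le m$. Hence $x\in D_\ell$ iff there exist $s_1\in S_1,\dots,s_\ell\in S_\ell$ with $\sum s_i=x\le m$, and then $s\in S$ iff additionally $x+s\in R$, which is precisely the definition of the output. For the running time: $|D_{i-1}|\le m+1$ and $|S_i\cap[m]_0|\le m+1$, and we should also first intersect each $S_i$ with $[m]_0$ in time $\bigoh(|S_i|)$ (which is $\bigoh(m)$ in the unary encoding, or we simply skip elements $>m$ on the fly), so computing each $D_i$ from $D_{i-1}$ costs $\bigoh(m^2)$; over $\ell$ stages this is $\bigoh(\ell\cdot m^2)=\bigoh(\ell\cdot\max(I)^2)$, and the final extraction of $S$ costs an additional $\bigoh(|R|\cdot m)=\bigoh(m^2)$, which is absorbed.

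I do not expect a genuine obstacle here; the only things to be careful about are (i) explicitly justifying the truncation at $m$ (the ``values are natural, target is $\le m$'' point, which is why the bound is $\max(I)$ rather than $\sum_i\max S_i$), (ii) handling the degenerate cases where some $S_i$ is empty (then $S=\emptyset$) or $\ell=0$ (then $S=\SB\, r \SM r\in R\,\SE$), and (iii) making sure the data structure bookkeeping — Boolean arrays indexed by $[m]_0$ — indeed gives the claimed $\bigoh(1)$ per update and per query, so that the nested loops over $D_{i-1}$ and $S_i$ multiply out to $\bigoh(m^2)$ per stage. The mild subtlety worth a sentence is that the problem statement quantifies $s$ over all of $\Nat$, so one must argue that the output set is automatically bounded by $\max(I)$ before claiming it can be represented in $\bigoh(m)$ space.
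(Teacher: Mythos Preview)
Your proposal is correct and takes essentially the same approach as the paper: a standard subset-sum dynamic program over the $\ell$ stages with a table bounded by $\max(I)$, yielding the stated $\bigoh(\ell\cdot\max(I)^2)$ bound. The only cosmetic difference is that the paper runs the DP ``backwards'' (initializing with $R$ and subtracting elements of the $S_i$), whereas you run it forwards from $\{0\}$ and extract $S$ at the end; the invariant, the truncation argument, and the running-time analysis are the same.
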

  

\begin{proof}
  Here we also use a dynamic programming approach similar to the approach
  used for the well-known \SSP{} problem~\cite{GareyJohnson79}. Let
  $I=(R,S_1,\dotsc,S_\ell)$ be an instance of \PSS{}.

  We first
  apply a minor modification to the instance which will allow us to
  provide a cleaner presentation of the algorithm.
  Namely, let $P_0,
  P_1,\dotsc,P_\ell$ be sets of integers defined as
  follows: $P_0=R$, and for every $i\in [\ell]$,  
  we set $P_i=\SB -s \SM s\in S_i \SE$. Then the solution $S$ for
  $I$ is exactly the set of all natural numbers $n$ for which there are
  $p_0,\dotsc,p_{\ell}$ with $p_i \in P_i$ for every $i$ with $0 \leq i
  \leq \ell$ such that $\sum_{0\leq i \leq \ell}p_i=n$.
  
  In order to compute the solution $S$ for $I$ (employing the
  above characterization for $S$), we
  compute a table $D$ having one binary entry $D[i,n]$ for every $i$
  and $n$ with $0 \leq i
  \leq \ell$ and $0\leq n \leq \max(I)$ such that $D[i,n]=1$ if and
  only if there are $p_0,\dotsc,p_i$ with $\sum_{0\leq j \leq
    i}p_j=n$. Note that the solution $S$ for $I$ can be obtained from
  the table $D$ as the set of all numbers $n$ such that $D[\ell,n]=1$. It
  hence remains to show how $D$ can be computed.

  We compute $D[i,n]$ via
  dynamic programming using the following recurrence relation.
  We start by setting $D[0,n]=1$ for every $n$ with $0 \leq n \leq
  \max(I)$ if and only if $n \in P_0$. Moreover, for every $i$
  with $1 \leq i \leq \ell$ and every $n$ with $0\leq n \leq
  \max(I)$, we set $D[i,n]=1$ if and only if
  there is an $n'$ with $n \leq n' \leq \max(I)$ and a $p \in
  P_i$ such that $n'+p=n$ and $D[i-1,n']=1$. 
  
  The running time of the algorithm is $\bigoh(\ell \cdot \max(I)^2)$
  since we require $\bigoh(\ell
  \cdot \max(I))$ to initialize the table $D$ and each of the
  $\ell$ recursive steps requires time $\bigoh(\max(I)^2)$.
\end{proof}

With Lemma~\ref{lem:solve-pss} in hand, we can proceed to a polynomial-time algorithm for \GRAPHCON{}.

\lv{\begin{LEM}}
\sv{\begin{LEM}[$\star$]}
\label{lem:graphcon-poly}
  An instance $I=(T,\lambda)$ of \GRAPHCON{} can be solved
  in time $\bigoh(|V(T)|^2\cdot \max(I)^2)$.
\end{LEM}
\begin{proof}
   As mentioned earlier, the main idea behind our 
   algorithm for \GRAPHCON{} is to use a dynamic programming algorithm
  working from the leaves to an arbitrarily chosen root $r$ of the tree
  $T$. Informally, the algorithm computes a set of numbers for each
  non-root vertex  $v$ of $T$
  representing the set of all assignments of the edge from $v$ to its
  parent that can be extended
  to a valid assignment of all edges in the subtree of $T$ rooted at
  $v$. Once this set has been computed for all children of the root we
  can construct a simple \PSS{} instance (given below) to decide whether $I$ has a
  solution.
  
  More formally, for a vertex $v$ of $T$ we denote by $T_v$ the subtree of $T$ rooted
  at $v$ and by $T_v^*$ the subtree of $T$ consisting of $T_v$ plus
  the edge between $v$ and its parent in $T$; for the root $r$ of $T$
  it holds that $T_v^*=T_v$.
  For every non-root vertex $v$ with parent $p$ we will
  compute a set $R(v)$ of numbers. Informally, $R(v)$ contains all
  numbers $n$ such that the assignment setting $\{p,v\}$ to $n$ can be
  extended to an assignment for all the edges in $T_v^*$ satisfying
  all constrains given by the vertices in $T_v$. More formally,
  $n \in R(v)$ if and only if there is an assignment $\alpha :
  E(T_v^*) \rightarrow \Nat$ with $\alpha(\{p,v\})=n$ such that
  for every $v \in V(T_v)$ it holds that
  $\sum_{e\in E(T_v^*) \land v \in e}\alpha(e)=\sum_{e\in E(T) \land v
    \in e}\alpha(e)\in \lambda(v)$.

  As stated above we will compute the sets $R(v)$ via a bottom-up
  dynamic programming algorithm starting at the leaves of $T$ and
  computing $R(v)$ for every inner node $v$ of $T$ using solely the
  computed sets $R(c)$ of each child $c$ of $v$ in $T$. Note
  that having computed $R(c)$ for every child $c$ of the root $r$ of
  $T$ we can decide whether $I$ has a solution as follows.
  Let $c_1,\dotsc, c_\ell$ be the children of $r$ in $T$; then $I$ has a
  solution if and only if the solution set for the instance
  $(\lambda(r),R(c_1),\dotsc,R(c_\ell))$ of \PSS{} contains $0$.

  It remains to show how to compute $R(v)$ for the leaves and inner
  nodes of $T$. If $v$ is a leaf then $R(v)$ is simply equal to
  $\lambda(v)$. Moreover, if $v$ is an inner node with children
  $c_1,\dotsc,c_\ell$, then $R(v)$ is equal to the solution set for the
  instance $(\lambda(v),R(c_1),\dotsc,R(c_\ell))$ of \PSS{}.
  This completes the description of the algorithm.

  The running time of the algorithm is at most $\bigoh(|V(T)|^2\cdot \max(I)^2)$
  since the time required at a leaf $q$ of $T$ is at most $\bigoh(\max(\lambda(q)))$ and
  the time required at any none-leaf node $t$ of $T$ with children
  $t_1,\dotsc, t_\ell$ is at most the time required to solve the instance
  $(\lambda(t),R(t_1),\dotsc,R(t_\ell))$ of \PSS{}, which is at most
  $\bigoh(|V(T)|\cdot \max(\lambda(v))^2)$ due to Lemma~\ref{lem:solve-pss}.
\end{proof}

\subsection{Multidimensional Partitioned Subset Sum}
\label{ssec:mpss}

Our second generalization of \SSP{} is a multi-dimensional variant of the problem that allows to separate the input set of
numbers into several groups, and restricts the solution to take at most
$1$ vector from each group. For technical reasons,
we will only search for solutions of size at most $r$.

\pbDef{\textsc{Multidimensional Partitioned Subset Sum} (\MDPSS)}{$k\in \Nat$, $r\in \Nat_0$, and a family $\PPP=\{P_1, \dotsc P_l\}$ of
  sets of vectors over $\Nat_0^k$.}
{Compute the set of all vectors $\bar{t} \in \{0,\dots,r\}^k$ such that
  there are $\bar{p}_1,\dotsc, \bar{p}_l$ with $\bar{p}_i \in P_i$ for
  every $i$ with $1 \leq i \leq l$ such that $\sum_{1\leq i \leq l}\bar{p}_i=\bar{t}$.}
  
It is easy to see that \textsc{Subset Sum} is a special case of
\MDPSS: given an instance of \textsc{Subset Sum}, we can create an
equivalent instance of \MDPSS\ by setting $r$ to a sufficiently large number and simply making each group $P_i$
contain two vectors: the all-zero vector and the vector that is equal
to the $i$-th number of the \SSP{} instance in all entries.

\lv{\begin{LEM}}
\sv{\begin{LEM}[$\star$]}
\label{lem:solve-mdpss}
  An instance $I=(k,r,\PPP)$ of \MDPSS{} can be solved in
  time $\bigoh(|\PPP| \cdot r^k)$. 
\end{LEM}
\begin{proof}
  We use a dynamic programming procedure similar to the approach
  used for the well-known \SSP{} problem~\cite{GareyJohnson79}. Let
  $I=(k,r,\PPP)$ with $\PPP=\{P_1,\dotsc,P_l\}$ be an instance of \MDPSS{}.

  We solve $I$ by 
  computing a table $D$ having one binary entry $D[i,\bar{t}]$ for every $i$
  and $\bar{t}$ with $0 \leq i
  \leq l$ and $\bar{t} \in [r]_0^k$ such that $D[i,\bar{t}]=1$ if and
  only if there are $\bar{p}_1,\dotsc, \bar{p}_i$ with $\bar{p}_j \in
  P_j$ for every $j$ with $1 \leq j \leq i$ such that
  $\sum_{1\leq j \leq i}\bar{p}_j=\bar{t}$.
  Note that the solution for $I$ can be obtained from the table
  $D$ as the set of all vectors $\bar{t} \in [r]_0^k$ such that
  $D[l,\bar{t}]=1$. 
  It hence remains to show how to compute the table $D$.

  We compute $D[i,\bar{t}]$ via
  dynamic programming using the following recurrence relation.
  We start by setting $D[1,\bar{t}]=1$ for every $\bar{t}\in
  [r]_0^k$ if and only if $\bar{t} \in P_1$.
  Moreover, for every $i$
  with $1 \leq i \leq l$ and every $\bar{t} \in [r]_0^k$,
  we set $D[i,\bar{t}]=1$ if and only if
  there is a $\bar{p}_j \in P_j$ with $\bar{p}_j\leq \bar{t}$
  such that $D[i-1,\bar{t}-\bar{p}_j]=1$. 
  
  The running time of the algorithm is $\bigoh(|\PPP| \cdot r^k)$
  since we require $\bigoh(|\PPP|
  \cdot r^k)$ to initialize the table $D$ and each of the
  $|\PPP|$ recursive steps requires time $\bigoh(r^k)$.
\end{proof}


\subsection{Simple Multidimensional Partitioned Subset Sum}
\label{ssec:smpss}
In this section, we are interested in a much more restrictive version
of \MDPSS{}, where all vectors (apart from the target vector) are only
allowed to have at most one non-zero component. Surprisingly, we
show that the \W{1}\hy hardness of the previously studied \textsc{Multidimensional Subset Sum} problem~\cite{GanianOrdyniakRamanujan17,GanianKO18} carries over to this more
restrictive variant using an intricate and involved reduction.

To formalize, we say that a set $P$ of vectors in $\Nat_0^d$ is \emph{simple}
if each vector in $P$ has exactly one non-zero component and the
values of the non-zero components for any two distinct vectors in $P$
are distinct.

\pbDefP{\textsc{Simple Multidimensional Partitioned Subset Sum} (\SMDPSS)}{$d\in \Nat$, $\bar{t}\in \Nat_0^d$, and a family $\PPP=\{P_1, \dotsc P_l\}$ of
  simple sets of vectors in $\Nat_0^d$.}{$d$.}
{Are there vectors $\bar{p}_1,\dotsc, \bar{p}_l$ with $\bar{p}_i \in P_i$ for
  every $i$ with $1 \leq i \leq l$ such that $\sum_{1\leq i \leq l}\bar{p}_i=\bar{t}$.}
\begin{THE}\label{the:smdpsshard}
  \SMDPSS{} is strongly \emph{\cc{W[1]}}\hy hard.
\end{THE}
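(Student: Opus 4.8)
The plan is to give a parameterized reduction from \mc{} (parameterized by the number $k$ of colors). Since \mc{} is \W{1}\hy hard and its instances contain no large numbers, any reduction that runs in polynomial time and outputs numbers whose magnitude is polynomial in the input size will yield \emph{strong} \W{1}\hy hardness of \SMDPSS{}; keeping the numbers (and the dimension $d$, which will be a function of $k$) polynomial is therefore a constraint I respect throughout. Let $(G,V_1,\dots,V_k)$ be the input; after padding assume $|V_i|=N$ for every $i$, with a fixed identification of $V_i$ with $[N]$. The constructed dimension will be $d=\bigoh(k^2)$ --- a block of coordinates for each color $i$ and each pair $\{i,j\}$ of colors --- and the family $\PPP$ will contain only a constant number of simple sets per vertex and per edge of $G$, so $|\PPP|$ is polynomial.

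The engine of the reduction is a \emph{Sidon set}: I fix $N$ natural numbers $s_1<\dots<s_N$ of magnitude $\bigoh(N^2)$ (such sets exist by classical constructions) with the property that $s_a+s_b=s_c+s_d$ forces $\{a,b\}=\{c,d\}$, and I give every vertex $v\in V_i$ the \emph{code} $\mathrm{sid}(v):=s_{\iota(v)}$, where $\iota(v)\in[N]$ is its index. The value of these codes is that a single coordinate equipped with a fixed target --- a large constant, against which code-sums must cancel --- can \emph{rigidly} force two separately made choices to coincide (equal pairwise code-sums force equal pairs). The construction then has three kinds of selector gadget: for each color $i$, simple sets whose task is to pick a vertex $v_i\in V_i$; for each pair $\{i,j\}$, simple sets whose task is to pick an edge of $G$ between $V_i$ and $V_j$ (so the existence of the chosen edges is automatic); and, for each color $i$ and each pair $\{i,j\}$ containing it, a coordinate whose target forces the $V_i$-endpoint of the edge picked for $\{i,j\}$ to be the vertex picked for $i$. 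Agreement coordinates tie the $k-1$ such constraints around a color $i$ together so that a single vertex is forced, and running this over all colors forces the chosen edges to form a $k$-clique.

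The main obstacle --- and the reason the construction is intricate --- is \emph{simplicity}: every vector must have exactly one non-zero component and the target vector is fixed, yet the vertex chosen for a color must be read off by all $k-1$ coordinates policing its incident pairs, which a naive encoding would achieve only with a vector having many non-zero entries. I circumvent this by representing each vertex choice not by one selector but by several simple ``copies'', one for each coordinate at which the choice has to be observed, and by adding extra coordinates whose fixed targets --- again code-sums plus large additive offsets that keep all entries in $\Nat_0$ --- force all copies of the same choice to agree; the same device lets each edge selector deliver its two endpoints to the two relevant color blocks without ever using a vector with two non-zero entries. Correctness then has two directions: from a $k$-clique $\{v_1,\dots,v_k\}$ one simply selects the copies of each $v_i$ and the $\binom{k}{2}$ clique edges and checks each coordinate's target directly; conversely, from any solution one reads the Sidon identities off the ``matching'' coordinates --- each has, by distinctness of pairwise code-sums, a unique legal configuration --- and propagates them through the agreement coordinates to conclude that exactly one vertex per color was selected and that these vertices are pairwise joined by selected edges. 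Finally, the Sidon set has polynomial range, each coordinate sums only a constant number of codes (so the offsets stay polynomial), $|\PPP|$ is polynomial and $d=\bigoh(k^2)$, so the reduction is polynomial-time and parameterized, giving strong \W{1}\hy hardness of \SMDPSS{}.
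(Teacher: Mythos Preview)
Your proposal is correct and follows essentially the same strategy as the paper: a parameterized reduction from \mc{} with dimension $d=\Theta(k^2)$, Sidon codes on one coordinate per color pair to pin the endpoints of a chosen edge, and---to respect simplicity---replacing each vertex choice by several one-coordinate ``copies'' that are forced to agree via auxiliary coordinates with carefully offset targets. The paper implements the agreement mechanism concretely as a \emph{chain} of $k-1$ vertex coordinates per color, using power-of-$n$ separators (rather than Sidon sums) so that each two-element selector set either contributes to the current coordinate or to the next, and the Sidon property is invoked only on the edge coordinates; your sketch is compatible with this and differs only in leaving that propagation gadget at a high level.
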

\begin{proof}
  We will employ a
  parameterized reduction from the \mc{} problem, which is
  well-known to be \W{1}\hy complete~\cite{Pietrzak03}.

  \pbDefP{\mc{}}
  {An integer $k$, a $k$-partite graph $G=(V,E)$ with partition $\{V_1,\ldots,V_k\}$ of $V$ into sets of equal size.}
  {$k$}
  {Does $G$ have a $k$-clique, i.e., a set \mbox{$C\subseteq V$} of $k$ vertices such that $\forall u,v\in C$, with $u\neq v$ there is an edge $\{u,v\}\in E$?}
  
  We denote by $E_{i,j}$ the set of edges of $G$ that have one endpoint
  in $V_i$ and one endpoint in $V_j$ and we assume w.l.o.g. that
  $|V_i=\{v_1^i,\dotsc,v_n^i\}|=n$ and $|E_{i,j}|=m$ for every $i$ and $j$ with $1 \leq i < j
  \leq k$ (see the standard parameterized complexity textbook
  for a justification of these assumptions~\cite{CyganFKLMPPS15}). 
    \newcommand{\polyZ}{n^{2}}
  \newcommand{\polyF}{n^{8}}
  \newcommand{\polyS}{n^6}
  \newcommand{\polyT}{n^4}
  \newcommand{\indJ}[2]{\textsf{indJ}(#1,#2)}
  \newcommand{\indM}[1]{\textsf{indMin}(#1)}
  \newcommand{\indMa}[1]{\textsf{indMax}(#1)}

Given an instance $(G,k)$ of \mc{} with partition $V_1,\dotsc,V_k$,
  we construct an equivalent instance $I=(d,\bar{t},\PPP)$ of \SMDPSS{} in
  polynomial time, where $d=k(k-1)+\binom{k}{2}$ and $|\PPP|=\binom{k}{2}+nk(2k-3)$.
  We will also make use of the following notation.
  For $i$ and $j$ with $1 \leq i \leq k$ and $1 \leq j < k$, we denote
  by $\indJ{i}{j}$ the $j$-th smallest number in $[k]\setminus \{i\}$ and we
  denote by $\indM{i}$ and $\indMa{i}$ the numbers $\indJ{i}{1}$ and
  $\indJ{i}{k-1}$, respectively.

  We assign to every vertex $v$ of $G$ a unique
  number $\SSS(v)$ from a Sidon sequence $\SSS$ of length
  $|V(G)|$~\cite{ErdosTuran41}. A \emph{Sidon sequence} is a
  sequence of natural numbers such that the sum of each pair of numbers is
  unique; it can be shown that it is possible to construct
  such sequences whose maximum value is bounded by a polynomial in its
  length~\cite{AignerZiegler10,ErdosTuran41}. 

  To simplify the description of $I$, we will introduce names and
  notions to identify both components of vectors and sets in $\PPP$.
  Every vector in $I$ has the following components:
  \begin{itemize}
  \item For every $i$ and $j$ with $1 \leq i,j \leq k$ and $i\neq j$,
    the \emph{vertex component} $c_V^i(j)$. We set $\bar{t}[c_V^i(j)]$ to:
    \begin{itemize}
    \item $\polyS+\polyT$ if $j=\indM{i}$,
    \item $(n-1)\polyF+\polyS+\polyT+\sum_{\ell=1}^n(\ell+\ell \polyZ)$ if $j> \indM{i}$ and $j<\indMa{i}$, and
    \item $(n-1)\polyF+\polyS+\sum_{\ell=1}^n(\ell)$, otherwise.
    \end{itemize}

  \item For every $i$ and $j$ with $1 \leq i < j \leq k$, the
    \emph{edge component} $c_E(i,j)$
    with $\bar{t}[c_E(i,j)]=\sum_{v \in V_i \cup V_j}\SSS(v)$.

  \end{itemize}
  Note that the total number of components $d$ is equal to
  $k(k-1)+\binom{k}{2}$ and that for every $i$ with $1 \leq i \leq k$, there are $k-1$
  vertex components, i.e., the components
  $c_V^i(\indJ{i}{1}),\dotsc,c_V^i(\indJ{i}{k-1})$, which intuitively
  have the following tasks.
  The first component, i.e., the component $c_V^i(\indJ{i}{1})$
  identifies a vertex $v \in V_i$ that should be part of a
  $k$-clique in $G$. Moreover, every component $c_V^i(\indJ{i}{j})$ (including
  the first component), is also responsible for: (1) Signalling the
  choice of the chosen vertex $v \in V_i$ to the next component, i.e.,
  the component $c_V^i(\indJ{i}{j+1})$ and (2) Signalling the
  choice of the vertex $v \in V_i$ to the component $c_E(i,j)$ 
  that will then verify that there is an edge between the vertex
  chosen for $V_i$ and the vertex chosen for $V_j$.
  This interplay between the components will be achieved through the
  sets of vectors in $\PPP$ that will be defined and explained next.
  
  \begin{table}[ht]
    \centering
    \begin{tabular}{r|ccccc|c}
      \toprule
      & $P_{EV}^1(2,\ell)$ & $P_V^1(2,\ell)$ & $P_{EV}^1(3,\ell)$ & $P_V^1(3,\ell)$ & $P_{EV}^1(4,\ell)$ & $\bar{t}$ \\
      \midrule
      $c_V^1(2)$ & $\polyS +\ell$ & $\polyT-\ell$ & & & & $\polyS+\polyT$ \\
      $c_V^1(3)$ & & $\polyF+\ell+\ell \polyZ$ & $\polyS+\ell$ &
                                                              $\polyT+\ell
                                                              \polyZ$ & &
      $Z+\polyT+\sum_{\ell=1}^n(\ell \polyZ)$\\
      $c_V^1(4)$ & & & & $\polyF+\ell$ & $\polyS+\ell$ &
                                                         $Z$\\
      $c_E(1,2)$ & $\SSS(v_\ell^1)$ & & & & &  $\sum_{v \in V_1 \cup
                                                V_2}\SSS(v)$ \\
      $c_E(1,3)$ & & & $\SSS(v_\ell^1)$ & & & $\sum_{v \in V_1 \cup
                                               V_3}\SSS(v)$ \\
      $c_E(1,4)$ & && & & $\SSS(v_\ell^1)$ & $\sum_{v \in V_1 \cup V_4}\SSS(v)$ \\
    \end{tabular}
    \caption{An illustration of the vectors contained in the sets
      $P_{EV}^1(2,\ell),\dotsc,P_{EV}^1(4,\ell)$ and 
      $P_V^1(2,\ell),\dotsc,P_V^1(3,\ell)$. For example the column for
      the set $P_{EV}^1(2,\ell)$ shows that the set contains two
      vectors, one whose only non-zero component $c_V^1(2)$ 
      has the value $\polyS+\ell$ and a second one whose only non-zero
      component $c_E(1,2)$ and has the value $\SSS(v_\ell^1)$. The last
      column provides the value for the target vector for the
      component given by the row. Finally, the value $Z$ is equal to
      $(n-1)\polyF+\polyS+\sum_{l=1}^n(\ell)$.}
    \label{tbl:sgasp-hard-1}
  \end{table}

  \begin{table}[ht]
    \centering
    \begin{tabular}{r|ccc|c}
      \toprule
      & $P_{EV}^i(j,\ell)$ & $P_{EV}^j(i,\ell)$ & $P_{E}(i,j)$ & $\bar{t}$ \\
      \midrule
      $c_V^i(j)$ & $\polyS +\ell$ & & & \\
      $c_V^j(i)$ & & $\polyS +\ell$ & &\\
                                        
      $c_E(i,j)$ & $\SSS(v_\ell^i)$ & $\SSS(v_\ell^j)$ & $\{
                                                         \SSS(v)+\SSS(u)
                                                               \SM
                                                         \{v,u\} \in
                                                         E_{i,j}\SE$ & $\sum_{v \in V_i \cup V_j}\SSS(v)$
    \end{tabular}
    \caption{An illustration of the vectors contained in the sets
      $P_{EV}^i(j,\ell)$, $P_{EV}^j(i,\ell)$, and $P_E(i,j)$ and their
      interplay with the components $c_V^i(j)$, $c_V^j(i)$, and
      $c_V(i,j)$. For the conventions used in the table please refer
      to Table~\ref{tbl:sgasp-hard-1}. Additionally, note that the
      column for $P_E(i,j)$ indicates that the set contains one vector
      for every edge $\{v,u\}\in E_{i,j}$, whose only non-zero
      component $c_E(i,j)$ has the value $\SSS(v)+\SSS(u)$.}
    \label{tbl:sgasp-hard-2}
  \end{table}  
  
  $\PPP$ consists of the following sets, which are illustrated in Table~\ref{tbl:sgasp-hard-1} and~\ref{tbl:sgasp-hard-2}:
  \begin{itemize}
  \item For every $i$, $j'$, and $\ell$ with $1 \leq i \leq k$, $1\leq j' \leq k-2$,
    and $1 \leq \ell \leq n$, the \emph{vertex set}
    $P_V^i(j,\ell)$, where $j=\indJ{i}{j'}$, containing two vectors $\bar{v}_{i,j,\ell}^+$ and
    $\bar{v}_{i,j,\ell}^-$ defined as follows:
    \begin{itemize}
    \item if $j'=1$, then
      $\bar{v}_{i,j,\ell}^+[c_V^i(j)]=\polyT-\ell$ and
      $\bar{v}_{i,j,\ell}^-[c_V^i(\indJ{i}{j'+1})]=\polyF+\ell+\ell \polyZ$ or
    \item if $1 < j'< k-2$, then
      $\bar{v}_{i,j,\ell}^+[c_V^i(j)]=\polyT+\ell \polyZ$ and
      $\bar{v}_{i,j,\ell}^-[c_V^i(\indJ{i}{j'+1})]=\polyF+\ell+\ell \polyZ$ or
    \item if $j'=k-2$, then
      $\bar{v}_{i,j,\ell}^+[c_V^i(j)]=\polyT+\ell \polyZ$ and $\bar{v}_{i,j,\ell}^-[c_V^i(\indJ{i}{j'+1})]=\polyF+\ell$.
    \end{itemize}

    We denote by $P_V^{i}(j)$, $P_{V+}^{i}(j)$, and $P_{V-}^{i}(j)$ the
    sets $\bigcup_{\ell=1}^n(P_V^i(j,\ell))$, $P_V^i(j)\cap \SB
    \bar{v}_{i,j,\ell}^+\SM 1 \leq \ell \leq n\SE$, and
    $P_V^i(j)\setminus P_{V+}^{i}(j)$, respectively.
    
  \item For every $i$, $j$, and $\ell$ with $1\leq i,j \leq k$, $i\neq
    j$, and $1 \leq \ell \leq n$, the \emph{vertex incidence set}
    $P_{EV}^i(j,\ell)$, which contains the two vectors
    $\bar{a}_{i,j,\ell}^+$ and $\bar{a}_{i,j,\ell}^-$
    such that $\bar{a}_{i,j,\ell}^+[c_V^i(j)]=\polyS+\ell$ and
    $\bar{a}_{i,j,\ell}^-[c_E(i,j)]=\SSS(v_\ell^i)$.

    We denote by $P_{EV}^{i}(j)$, $P_{EV+}^{i}(j)$, and $P_{EV-}^{i}(j)$ the
    sets $\bigcup_{\ell=1}^n(P_{EV}^i(j,\ell))$, $P_V^i(j)\cap \SB
    \bar{a}_{i,j,\ell}^+\SM 1 \leq \ell \leq n\SE$, and
    $P_{EV}^i(j)\setminus P_{EV+}^{i}(j)$, respectively.

  \item For every $i$, $j$ with $1\leq i < j\leq k$, the \emph{edge
      set} $P_{E}(i,j)$, which for every $e=\{v,u\} \in E_{i,j}$
    contains the vector $\bar{e}$ such that
    $\bar{e}[c_E(i,j)]=\SSS(v)+\SSS(u)$; note that $P_E(i,j)$ is
    indeed a simple set, because $\SSS$ is a Sidon sequence.

  \end{itemize}
  Note that altogether there are
  $nk(k-2)+\binom{k}{2}+nk(k-1)=\binom{k}{2}+nk(2k-3)$ sets in $\PPP$.

  Informally, the two vectors $\bar{v}_{i,j,\ell}^+$ and $\bar{v}_{i,j,\ell}^-$ in $P_V^i(j,\ell)$
  represent the choice of whether or not the vertex $v_\ell^i$ should
  be included in a $k$-clique for $G$, i.e., if a solution for $I$
  chooses $v_{i,j,\ell}^+$ then $v_\ell^i$ should be part of a
  $k$-clique and otherwise not. The component $c_V^i(j)$, more
  specifically the value for $\bar{t}[c_V^i(j)]$, now ensures that a
  solution can choose at most one such vector in
  $P_{V+}^i(j)$. Moreover, the fact that all but one of the vectors
  $\bar{v}_{i,j,1}^-,\dotsc,\bar{v}_{i,j,n}^-$ need to be chosen by a
  solution for $I$ signals the choice of the vertex for $V_i$ to the
  next component, i.e., either the component $c_V^i(j+1)$ if $j+1\neq
  i$ or the component $c_V^i(j+2)$ if $j+1=i$. Note that we only need $k-2$
  sets $P_V^i(j)$ for every $i$, because we need to copy the
  vertex choice for $V_i$ to only $k-1$ components.
  A similar idea
  underlies the two vectors $\bar{a}_{i,j,\ell}^+$ and
  $\bar{a}_{i,j,\ell}^-$ in $P_{EV}^i(j,\ell)$, i.e., again the
  component $c_V^i(j)$ ensures that $\bar{a}_{i,j,\ell}^+$ can be
  chosen for only one of the sets $P_{EV}^i(j,1),\dotsc,P_{EV}^i(j,n)$
  and $\bar{a}_{i,j,\ell}^-$ must be chosen for all the remaining
  ones. Note that the component $c_V^i(j)$ now also ensures that the
  choice made for the sets in $P_V^i(j)$ is the same as the choice made
  for the sets in $P_{EV}^i(j)$. Moreover, the choice made for the
  sets in $P_{EV}^i(j)$ is now propagated to the component $c_E(i,j)$
  (instead of the next vertex component). Finally, the vectors in the
  set $P_E(i,j)$ represent the choice of the edge used in a $k$-clique
  between $V_i$ and $V_j$ and the component $c_E(i,j)$ ensures that
  only an edge, whose endpoints are the two vertices signalled by the
  sets $P_{EV}^i(j)$ and $P_{EV}^j(i)$ can be chosen. We are now ready
  to give a formal proof for the equivalence between the two instances.

  This completes the construction of $I$. It is straightforward to
  verify that all sets in $\PPP$ are simple, $I$ can be constructed in
  polynomial-time, and all the component values of all vectors are
  bounded by a polynomial in $n$. It hence only remains to show that
  $(G,k)$ has a solution if and only if so does $I$. 
  
  Towards showing the forward direction, let $c_1,\dotsc,c_k$ with
  $c_i \in V_i$ be the
  vertices of a $k$-clique of $G$ and for every $i$ and $j$ with $1
  \leq i < j \leq k$, let $e_{i,j}$ be the edge between $c_i$ and
  $c_j$ in $G$. We obtain a solution $S \subseteq \PPP$ for $I$ with $\sum_{\bar{s}
    \in S}\bar{s}=\bar{t}$ and $|S \cap P|=1$ for every $P \in \PPP$
  by choosing the following vectors:
  \begin{itemize}
  \item For every $i$ and $j$ with $1 \leq i,j \leq k$ and $j\neq i$,
    we choose the vector $\bar{a}_{i,j,\ell}^+$ from the set
    $P_{EV}^i(j,\ell)$ if $v_\ell^i=c_i$ and otherwise we choose the
    vector $\bar{a}_{i,j,\ell}^-$.
  \item For every $i$ and $j$ with $1 \leq i,j \leq k$ and $j\notin\{i,\indMa{i}\}$,
    we choose the vector $\bar{v}_{i,j,\ell}^+$ from the set
    $P_{V}^i(j,\ell)$ if $v_\ell^i=c_i$ and otherwise we choose the
    vector $\bar{v}_{i,j,\ell}^-$.
  \item For every $i$ and $j$ with $1 \leq i < j \leq k$, we choose
    the vector $\bar{e_{i,j}}$ from the set $P_E(i,j)$.
  \end{itemize}
  It is straightforward to verify that the above choices constitute a
  solution for $I$.
  
  Towards showing the reverse direction, let $S \subseteq \bigcup_{P\in\PPP}P$ be a solution for $I$, i.e.,
  $\sum_{\bar{s} \in S}\bar{s}=\bar{t}$ and $|S\cap P|=1$
  for every $P \in \PPP$. We show the reverse direction using the
  following series of claims.

  \begin{itemize}
  \item[(C1)] For every $i$ and $j$ with $1\leq i,j\leq k$ and $i\neq
    j$, it holds that $|S\cap P_{EV+}^i(j)|=1$ and $|S\cap
    P_{EV-}^i(j)|=n-1$. In the following let $\bar{v}_{EV}^i(j)$ be the
    unique vector in $S\cap P_{EV+}^i(j)$.
  \item[(C2)] For every $i$ and $j'$ with $1 \leq i\leq k$ and $1\leq
    j'\leq k-2$, it holds that $|S\cap P_{V+}^i(j)|=1$
    and $|S\cap P_{V-}^i(j)|=n-1$, where $j=\indJ{i}{j'}$. In the following let $\bar{v}_V^i(j)$ be the
    unique vector in $S\cap P_{V+}^i(j)$.
  \item[(C3)] For every $i$, $j$ and $j'$ with $1\leq i,j,j' \leq k$,
    $j\neq i$, $j'\neq i$, and $j'\neq j$, it holds that
    $\bar{v}_{EV}^i(j)[c_V^i(j)]=\bar{v}_{EV}^i(j')[c_V^i(j')]$. In other words
    for $i$ as above, there exists a unique value $\ell_i$ such that
    $\bar{v}_{EV}^i(j)[c_V^i(j)]=\polyS+\ell_i$ for every $j$ as
    above.
  \item[(C4)] For every $i$ and $j$ with $1 \leq i < j \leq k$, $G$
    contains an edge between $v_{\ell_i}^i$ and $v_{\ell_j}^j$.
  \item[(C5)] The vertices $v_{\ell_1}^1,\dotsc,v_{\ell_k}^k$ induce a
    clique in $G$. 
  \end{itemize}

  Towards showing (C1) consider the component $c_V^i(j)$. Then
  $\bar{t}[c_v^i(j)]$ contains the term $\polyS$ and moreover the only vectors
  of $I$ having a non-zero component at $c_V^i(j)$ (apart from the
  vectors in $P_{EV+}^i(j)$) are the vectors in $P_{V-}^i(j')$ (only
  if $j>\indM{i}$), where $j'=j-1$ if $j-1\neq i$ and $j'=j-2$
  otherwise,
  and the vectors in $P_{V+}^i(j')$ (only if
  $j<\indMa{i}$). The former all have values larger than $\polyF\geq \polyS$ and
  the sum of all values of the latter is at most
  $\sum_{\ell=1}^n\polyT+\ell\polyZ\leq 2n^5< \polyS$. Hence the
  only vectors that can contribute the term $\polyS$ are the vectors
  in $P_{EV+}^i(j)$ and since $\polyS$ appears exactly once in
  $\bar{t}[c_V^i(j)]$, (C1) follows. 

  Towards showing (C2) consider the component $c_V^i(j)$. Then
  $\bar{t}[c_V^i(j)]$ contains the term $\polyT$ and moreover the only vectors
  of $I$ having a non-zero component at $c_V^i(j)$ (apart from the
  vectors in $P_{E+}^i(j)$) are the vectors in $P_{V-}^i(j')$ (only
  if $j>\indM{i}$), where $j'=j-1$ if $j-1\neq i$ and $j'=j-2$
  otherwise, and the vectors in $P_{EV-}^i(j)$. The former and
  the latter have values larger than $\polyF$ and $\polyS$,
  respectively. Hence the
  only vectors that can contribute the term $\polyT$ are the vectors
  in $P_{V+}^i(j)$ and since $\polyT$ appears exactly once in
  $\bar{t}[c_V^i(j)]$, (C2) follows. 

  Towards showing (C3), we show that
  $\bar{v}_{EV}^i(j)[c_V^i(j)]=\bar{v}_{EV}^i(j')[c_V^i(j')]$, where 
  $j=\indJ{i}{r}$ and $j'=\indJ{i}{r+1}$ for every $r$ with $1 \leq r
  <k$. Since we can assume that w.l.o.g. $k>3$, we only need to
  distinguish the following three cases:
  \begin{itemize}
  \item[(A)] $r=1$ and $r+1<k-1$,
  \item[(B)] $r>1$ and $r+1<k-1$,
  \item[(C)] $r>1$ and $r+1=k-1$.
  \end{itemize}
  For the case (A), consider the component $c_V^i(j)$. Note that due to (C1) and
  (C2), the vectors $\bar{v}_{EV}^i(j)$ and $\bar{v}_V^i(j)$ are the only
  vectors in $S$, for which the component $c_V^i(j)$ is non-zero. Hence,
  $\bar{v}_{EV}^i(j)[c_V^i(j)]+\bar{v}_V^i(j)[c_V^i(j)]=\bar{t}[c_V^i(j)]=\polyS+\polyT$,
  which is only possible if $\bar{v}_{EV}^i(j)=\bar{a}_{i,j,\ell_1}^+$ and
  $\bar{v}_V^i(j)=\bar{v}_{i,j,\ell_1}^+$ for some $\ell_1$ with $1 \leq \ell_1
  \leq n$. Now consider the component $c_V^i(j')$. Because of (C2), we obtain that $\sum_{s \in S \cap
    P_{V-}^i(j)}s[c_V^i(j)]=(\sum_{\ell=1}^n\polyF+\ell+\ell\polyZ)-(\polyF+\ell_1+\ell_1
  \polyZ)$. Moreover, because of (C1) and (C2), we obtain that
  $(\sum_{\ell=1}^n\polyF+\ell+\ell\polyZ)-(\polyF+\ell_1+\ell_1\polyZ)+\bar{v}_{EV}^i(j')+\bar{v}_V^i(j')=\bar{t}[c_V^i(j')$, which is only possible if $\bar{v}_{EV}^i(j')=\bar{a}_{i,j',\ell_1}^+$ and
  $\bar{v}_V^i(j')=\bar{v}_{i,j',\ell_1}^+$. Hence
  $\bar{v}_{EV}^i(j)[c_V^i(j)]=\bar{v}_{EV}^i(j')[c_V^i(j')]$, as required.
  The proof for the cases (B) and (C) is analogous. 

  Towards showing (C4) consider the component $c_E(i,j)$. Note that
  the set $P_{EV}^i(j)$, $P_{EV}^j(i)$, and $P_E(i,j)$ are the only
  sets in $\PPP$ containing vectors that are non-zero at
  $c_E(i,j)$. Moreover, because of (C1) it holds that $\sum_{\bar{s}
    \in S\cap
    P_{EV}^i(j)}\bar{s}[c_E(i,j)]=(\sum_{\ell=1}^n\SSS(v_\ell^i))-\SSS(v_{\ell_i}^i)$
  and similarly $\sum_{\bar{s} \in S\cap
    P_{EV}^j(i)}\bar{s}[c_E(i,j)]=(\sum_{\ell=1}^n\SSS(v_\ell^j))-\SSS(v_{\ell_j}^j)$. Since
  $\bar{t}[c_E(i,j)]=\sum_{v\in V_i\cup V_j}\SSS(v)$, we obtain
  that the unique vector $\bar{e} \in S \cap P_E(i,j)$ must satisfy
  $\bar{e}[c_E(i,j)]=\SSS(v_{\ell_i}^i)+\SSS(v_{\ell_j}^j)$, which due
  the properties of Sidon sequences is only possible if $e$ is an edge
  between $v_{\ell_i}^i$ and $v_{\ell_j}^j$ in $G$. Finally, (C5)
  follows immediately from (C3) and (C4).
\end{proof}

\section{Result 1: Fixed-Parameter Tractability of \sGasp{}}
\label{sec:res1}

In this section we will establish that \sGasp\ is \FPT\ when parameterized
by the number of agent types and the number of activities by proving
Theorem~\ref{the:fpt-agents-act}.

\begin{THE}\label{the:fpt-agents-act}
  \sGasp{} can be solved in time
$\bigoh(2^{|T(N)|\cdot(1+|A|)}\cdot ((|N|+|A|)|N|)^{2})$.
\end{THE}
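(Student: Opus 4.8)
The plan is to combine three ingredients: (i) Lemma~\ref{lem:fpt-prepro}, to guess the set of perfectly assigned agent types and reduce to searching for an individually rational assignment; (ii) a \emph{compression} argument showing that a YES-instance always admits an ``acyclic'' such assignment; and (iii) Lemma~\ref{lem:graphcon-poly}, to solve the acyclic subproblems via \GRAPHCON{}. For an assignment $\pi$ I would work with its \emph{incidence graph} $G(\pi)$: the bipartite graph on $T(N)\cup A$ (note that the dummy activity $a_\emptyset$ is deliberately excluded) with an edge $\{t,a\}$ whenever $\pi_{t,a}\neq\emptyset$.

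The algorithm iterates over all $2^{|T(N)|}$ sets $Q\subseteq T(N)$. For each $Q$ it first computes $\gamma(I,Q)$ and $A_{\neq\emptyset}(I,Q)$ in time $\bigoh(|N|^2|A|)$ via Lemma~\ref{lem:fpt-prepro}, so that it now suffices to decide whether there is an assignment $\pi$ compatible with $Q$ that is individually rational for $\gamma(I,Q)$ and satisfies $\pi^{-1}(a)\neq\emptyset$ for all $a\in A_{\neq\emptyset}(I,Q)$. Then it iterates over all $2^{|T(N)|\cdot|A|}$ subsets $F\subseteq T(N)\times A$, discards those for which $(T(N)\cup A,F)$ is not a forest, and for each remaining $F$ checks (as described below) whether such a $\pi$ exists with $G(\pi)=F$. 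The instance is a YES-instance iff the check succeeds for some pair $(Q,F)$.

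The correctness of restricting to acyclic $F$ is the crux and I expect it to be the main obstacle. The claim is that if $\pi$ is stable for $I$ then there is a stable $\pi'$ with $\PE(I,\pi')=\PE(I,\pi)$ and $G(\pi')$ acyclic. I would prove this by the following compression step: while $G(\pi)$ contains a cycle $t_1,a_1,t_2,a_2,\dots,t_p,a_p$ (indices modulo $p$, with $a_i$ adjacent to $t_i$ and $t_{i+1}$), let $\delta=\min_i|\pi_{t_{i+1},a_i}|\geq 1$, and reassign agents of the relevant types so that $|\pi_{t_i,a_i}|$ increases by $\delta$ and $|\pi_{t_{i+1},a_i}|$ decreases by $\delta$ for every $i$. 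Then every activity keeps its size, every type keeps its number of agents placed on real activities (hence also at $a_\emptyset$), no type becomes newly active at any activity, and at least one edge of $G$ disappears while none is created. Consequently individual rationality, stability, the value of $\PE(I,\cdot)$, and the nonemptiness of all activities are all preserved; as the number of edges strictly decreases, after finitely many steps we obtain the desired acyclic $\pi'$.

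Finally, given $Q$ and an acyclic $F$, I would reduce to \GRAPHCON{} componentwise. Isolated vertices are handled by a direct check: if some $t\in Q$ or some $a\in A_{\neq\emptyset}(I,Q)$ is isolated in $F$, reject. On each component $C$ with at least two vertices, define $\lambda$ by $\lambda(t)=\{|N_t|\}$ for $t\in Q$, $\lambda(t)=\{1,\dots,|N_t|-1\}$ for $t\notin Q$, and $\lambda(a)=\bigl(\bigcap_{t:\{t,a\}\in F}P'_t(a)\bigr)\cap\{1,\dots,|N|\}$, where $P'$ denotes the preferences of $\gamma(I,Q)$. A witnessing assignment $\alpha:E(C)\to\Nat$ is exactly a choice of $|\pi_{t,a}|=\alpha(\{t,a\})\geq 1$ for the edges of $F$ (with all remaining agents of each type sent to $a_\emptyset$): the constraints at type vertices encode that $\pi$ is compatible with $Q$, the constraints at activity vertices encode individual rationality for $\gamma(I,Q)$, the direct checks encode the nonemptiness requirements, and $\alpha\geq 1$ forces $G(\pi)=F$; conversely the acyclic $\pi'$ above yields a \GRAPHCON{} solution for $F=G(\pi')$ (using Lemma~\ref{lem:fpt-prepro} for individual rationality of $\pi'$ in $\gamma(I,Q)$). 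Each \GRAPHCON{} instance has at most $|T(N)|+|A|$ vertices and maximum label value at most $|N|$, so by Lemma~\ref{lem:graphcon-poly} the check for one $(Q,F)$ runs in time $\bigoh((|T(N)|+|A|)^2|N|^2)$. Multiplying by the $2^{|T(N)|}\cdot 2^{|T(N)|\cdot|A|}=2^{|T(N)|(1+|A|)}$ pairs $(Q,F)$ and absorbing the $\bigoh(|N|^2|A|)$ preprocessing per $Q$, and using $|T(N)|\leq|N|$, gives the claimed bound $\bigoh\bigl(2^{|T(N)|(1+|A|)}\cdot((|N|+|A|)|N|)^2\bigr)$.
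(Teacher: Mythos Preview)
Your proposal is correct and follows essentially the same approach as the paper: branch over $Q\subseteq T(N)$ and over acyclic incidence patterns, use a cycle-shifting compression to justify the restriction to acyclic patterns, and reduce the pattern-realisation check to \GRAPHCON{} via Lemma~\ref{lem:graphcon-poly} together with Lemma~\ref{lem:fpt-prepro}. The only cosmetic differences are that you perform the compression in one shot with $\delta$ (the paper iterates single-agent moves), you handle isolated vertices by a separate direct check while the paper folds them into the \GRAPHCON{} instance by allowing $0$ in the type labels, and your cycle shift goes in the opposite direction along the cycle; none of these affect correctness or the running-time bound.
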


Let $I=(N,A,(P_n)_{n \in N})$ be a \sGasp\ instance and let $\pi
: N \rightarrow A^*$ be an assignment of agents to activities. We
denote by $G_I(\pi)$ the incidence graph between $T(N)$ and $A$, which
is defined as follows. $G_I(\pi)$ has vertices $T(N) \cup A$ and
contains an edge between an agent type $t \in T(N)$ and an activity $a
\in A$ if $\pi_{t,a}\neq \emptyset$. We say that $\pi$ is \emph{acyclic} if
$G_I(\pi)$ is acyclic.

Our first aim towards the proof of
Theorem~\ref{the:fpt-agents-act} is to show that if $I$ has a stable assignment, then
it also has an acyclic stable assignment
(Lemma~\ref{lem:fpt-acyc}). We will then show in
Lemma~\ref{lem:fpt-acyclic-poly} that finding a stable assignment whose
incidence graph is equal to some given acyclic pattern graph
can be achieved in
polynomial-time via a reduction to the \GRAPHCON{} problem defined and solved in Subsection~\ref{ssec:tss}.
Since the number of (acyclic) pattern graphs
is bounded in our parameters, we can subsequently solve \sGasp{} by enumerating
all acyclic pattern graphs and checking for each of them whether there is an
acyclic solution matching the selected pattern.

A crucial notion towards showing that it is sufficient to consider
only acyclic solutions is the notion of (strict) compression.
We say that an assignment $\tau$ is a \emph{compression} of $\pi$ if
it satisfies the following conditions:
\begin{itemize}
\item[(C1)] for every $t \in T(N)$ it holds that
  $|\pi_{t}|=|\tau_{t}|$, 
\item[(C2)] for every $a \in A$ it holds that
  $|\pi^{-1}(a)|=|\tau^{-1}(a)|$, 
  and
\item[(C3)] for every $a \in A$ it holds that the set of agent types $\tau$ assigns to $a$ is a subset of the agent types $\pi$ assigns to $a$.
\end{itemize}
Moreover, if $\tau$ additionally satisfies:
\begin{itemize}
\item[(C4)] $|E(G_I(\tau))|< |E(G_I(\pi))|$,
\end{itemize}
 we say that $\tau$ is a \emph{strict compression} of $\pi$.

Intuitively, an assignment $\tau$ is a compression of $\pi$ if it
maintains all the properties required to preserve stability and
compatibility with a given subset $Q \subseteq T(N)$.
We note that condition (C3) can be formalized as $T(\tau^{-1}(a)) \subseteq
  T(\pi^{-1}(a))$.
Observe that if $\tau$ is a strict compression then there is at least one
activity $a \in A$ such that $T(\tau^{-1}(a)) \subset
T(\pi^{-1}(a))$.
The following lemma shows that every assignment that is not acyclic admits a strict compression.
\lv{\begin{LEM}}
\sv{\begin{LEM}[$\star$]}
\label{lem:fpt-compress}
  If $\pi$ is not acyclic, then there exists an assignment $\tau$
  that strictly compresses~$\pi$.
\end{LEM}
\begin{proof}
  \begin{figure}[tb]
\begin{tikzpicture}
      \tikzstyle{every node}=[]
      \tikzstyle{gn}=[circle, inner sep=4pt,draw, node distance=2cm]
      \tikzstyle{every edge}=[draw, line width=2pt]
      \draw
      node[gn, label=below:$t_1$] (t1) {}
      node[gn, below of=t1, label=below:$t_2$] (t2) {}
      node[gn, below of=t2, label=below:$t_3$] (t3) {}

      node[node distance=4cm,right of=t1] (d1) {}
      node[gn, node distance=1cm, below of=d1, label=below:$a_1$] (a1) {}
      node[gn, below of=a1, label=below:$a_2$] (a2) {}
      node[gn, below of=a2, label=below:$a_3$] (a3) {}
      ;
      
      \draw[line width=2pt]
      (t1) -- (a1) node[near end, label=above:$-1$] {}
      (a1) -- (t2) node[near end, label=above:$+1$] {}
      (t2) -- (a2) node[near end, label=above:$-1$] {}
      (a2) -- (t3) node[near end, label=above:$+1$] {}
      (t3) -- (a3) node[near end, label=above:$-1$] {}
      (a3) to[out=200, in=300] ($(t3)-(0.5cm,1cm)$) to[out=120,
      in=270] ($(t2)-(1cm,1cm)$) node[label=right:$+1$] {} 
      to[out=90, in=225] (t1);
    \end{tikzpicture}
    \caption{Illustration of the modification (M) in the
      proof of Lemma~\ref{lem:fpt-compress} for a cycle of length
      six. A label of $+1$ on an edge $\{t,a\}$ of $G_I(\pi)$ means
      that $|\kappa_{t,a}|$ is by one larger than
      $|\pi_{t,a}|$. Similarly,
      a label of $-1$ on an edge $\{t,a\}$ of $G_I(\pi)$ means
      that $|\kappa_{t,a}|$ is by one smaller than $|\pi_{t,a}|$. }
      \label{fig:acyclic-mods}
  \end{figure}
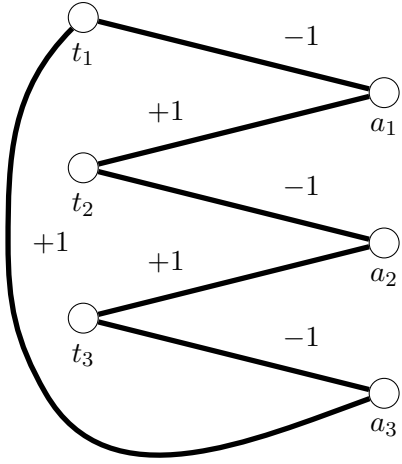

  Let $C=(t_1,a_1,\dotsc,t_l,a_l,t_1)$ be a cycle of $G_I(\pi)$. Consider
  the following modification (M) of our instance:  
  reassign one (arbitrary) agent in $\pi_{t_1,a_1}$ to
  $a_l$, and for every $i$ with $1 < i \leq l$ reassign
  one (arbitrary) agent in $\pi_{t_i,a_i}$ to $a_{i-1}$.
  An illustration of (M) is also provided in Figure~\ref{fig:acyclic-mods}.
  
  First, we show that the assignment $\kappa$ obtained from $\pi$ after
  applying modification (M) satisfies (C1)--(C3).
  Towards showing (C1),
  observe that $|\pi_{t}|=|\kappa_{t}|$ for any $t \in T(N) \setminus
  \{t_1,\dotsc,t_l\}$. Moreover, for every $i$ with $1 < i \leq l$,
  we have:
  \begin{eqnarray*}
    |\kappa_{t_{i}}| & = & \sum_{a \in A}|\kappa_{t_i,a}|\\
    & = & \big(\sum_{a \in (A \setminus
          \{a_{i},a_{i-1}\})}|\kappa_{t_i,a}|)+|\kappa_{t_i,a_i}|+ |\kappa_{t_i,a_{i-1}}|\big)\\
                   & = & \big(\sum_{a \in (A \setminus \{a_{i},a_{i-1}\})}|\pi_{t_i,a}|)+ (|\pi_{t_i,a_i}|-1)+(|\pi_{t_i,a_{i-1}}|+1\big)\\
    & = & \sum_{a \in A}|\pi_{t_i,a}| \\
    & = & |\pi_{t_{i}}|
  \end{eqnarray*}
  and for $i=1$ we have:
  \begin{eqnarray*}
    |\kappa_{t_{1}}| & = & \sum_{a \in A}|\kappa_{t_1,a}|\\
    & = & \big(\sum_{a \in (A \setminus \{a_{1},a_{l}\})}|\kappa_{t_i,a})|+|\kappa_{t_1,a_1}|+|\kappa_{t_1,a_{l}}|\big) \\
                   & = & \big(\sum_{a \in (A \setminus \{a_{1},a_{l}\})}|\pi_{t_1,a}|\big) +(|\pi_{t_1,a_1}|-1)+(|\pi_{t_1,a_{l}}|+1)\\
    & = & \sum_{a \in A}|\pi_{t_1,a}| \\
    & = & |\pi_{t_{1}}|
  \end{eqnarray*}
  Towards showing
  (C2) note that $|\pi^{-1}(a)|=|\kappa^{-1}(a)|$ for every $a \in A
  \setminus \{a_1,\dotsc,a_l\}$ and moreover for every $i$ with $1
  \leq i < l$ we obtain:
  \begin{eqnarray*}
    |\kappa^{-1}(a_{i})| & = & \sum_{t \in T(N)}|\kappa_{t,a_i}|\\
    & = & \big(\sum_{t \in (T(N) \setminus
          \{t_{i},t_{i+1}\})}|\kappa_{t,a_i})| +|\kappa_{t_i,a_i}|+|\kappa_{t_{i+1},a_{i}}|\\
    & = & (\sum_{t \in (T(N) \setminus\{t_{i},t_{i+1}\})}|\pi_{t,a_i})|+(|\pi_{t_i,a_i}|-1)+(|\pi_{t_{i+1},a_{i}}|+1)\\
    & = & \sum_{t \in T(N)}|\pi_{t,a_i}| \\
    & = & |\pi^{-1}(a_i)|
  \end{eqnarray*}
  and for $i=l$ we have:
  \begin{eqnarray*}
    |\kappa^{-1}(a_{l})| & = & \sum_{t \in T(N)}|\kappa_{t,a_l}|\\
    & = & (\sum_{t \in (T(N) \setminus \{t_{l},t_{1}\})}|\kappa_{t,a_l})|+|\kappa_{t_l,a_l}|+|\kappa_{t_{1},a_{l}}| \\
                        & = & (\sum_{t \in (T(N) \setminus \{t_{l},t_{1}\})}|\pi_{t,a_l})|+(|\pi_{t_l,a_l}|-1)+(|\pi_{t_{1},a_{l}}|+1) \\
    & = & \sum_{t \in T(N)}|\pi_{t,a_l}| \\
    & = & |\pi^{-1}(a_l)|
  \end{eqnarray*}

  Because $|\kappa_{t,a}|$ and $|\pi_{t,a}|$ can only differ if
  $|\pi_{t,a}|\neq 0$, i.e., $|\kappa_{t,a}|$ will never be non-equal to zero
  if $|\pi_{t,a}|=0$, we obtain that
  $|\kappa|$ satisfies also condition (C3).
  
  Having settled that (M) does not violate (C1)--(C3), we observe
  that if $G_I(\kappa)$ still contains the cycle $C$, we can apply
  modification (M) again to $\kappa$ and the obtained assignment still
  satisfies conditions (C1)--(C3). Hence we can repeatedly apply
  modification (M) as long as the cycle $C$ is not destroyed in the
  resulting assignment. Namely let
  $m=\min\{\pi_{t_1,a_1},\dotsc,\pi_{t_l,a_l}\}$, let $i$ be an
  index with $m=\pi_{t_i,a_i}$, and let $\tau$ be the
  assignment obtained after $m$ applications of modification (M) to
  $\pi$. Note that $m$ applications of modification (M) are possible
  since the cycle $C$ remains preserved up to the $m-1$-th modification of
  (M). Furthermore, since $|\tau_{t_i,a_i}|=0$, we obtain that $\tau$
  satisfies (C4).
\end{proof}

The following lemma shows that any assignment can be compressed into
an acyclic assignment. 
\sv{\begin{LEM}[$\star$]}
  \lv{\begin{LEM}}\label{lem:fpt-acyclic-compress}
  Let $\pi : N \rightarrow A^*$ be an assignment for $I$. Then there
  exists an acyclic assignment $\pi'$ that compresses $\pi$.
\end{LEM}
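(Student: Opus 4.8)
The plan is to obtain $\pi'$ by iterating Lemma~\ref{lem:fpt-compress}: as long as the current assignment has a cycle in its incidence graph, replace it by the strict compression the lemma provides, and argue that this loop terminates and that the final assignment still compresses the original $\pi$. Before running the loop I would record two elementary properties of the compression relation. \emph{Reflexivity}: every assignment $\pi$ compresses itself, since (C1) and (C2) are equalities and (C3) reads $T(\pi^{-1}(a)) \subseteq T(\pi^{-1}(a))$. \emph{Transitivity}: if $\tau$ compresses $\rho$ and $\rho$ compresses $\pi$, then $\tau$ compresses $\pi$, because (C1) and (C2) chain through equalities and, for every $a \in A$, (C3) gives $T(\tau^{-1}(a)) \subseteq T(\rho^{-1}(a)) \subseteq T(\pi^{-1}(a))$. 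Thus ``compresses'' is a preorder on assignments.

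With this in hand I would set up the iteration. Put $\pi_0 := \pi$. Having defined $\pi_i$, if $\pi_i$ is acyclic then stop and output $\pi' := \pi_i$; otherwise Lemma~\ref{lem:fpt-compress} yields an assignment $\pi_{i+1}$ that strictly compresses $\pi_i$, so in particular $\pi_{i+1}$ compresses $\pi_i$ and, by condition (C4), satisfies $|E(G_I(\pi_{i+1}))| < |E(G_I(\pi_i))|$. The numbers $|E(G_I(\pi_i))|$ therefore form a strictly decreasing sequence of non-negative integers bounded above by $|E(G_I(\pi))| \le |T(N)|\cdot|A|$ (the incidence graph is bipartite between $T(N)$ and $A$), so the loop halts after at most $|T(N)|\cdot|A|$ steps with an acyclic assignment $\pi'$.

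Finally, by reflexivity if the loop stopped already at step $0$, or otherwise by applying transitivity along the chain $\pi = \pi_0, \pi_1, \dots, \pi'$ (each link being a compression), the output $\pi'$ compresses $\pi$, which is what we want. The only point requiring any care is transitivity of the compression relation, and that is immediate from the definition; the rest is a routine ``monovariant strictly decreases'' termination argument, with all the substantive work already carried out in Lemma~\ref{lem:fpt-compress}. So I do not expect a genuine obstacle here.
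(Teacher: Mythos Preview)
Your proposal is correct and follows essentially the same approach as the paper's proof: iterate Lemma~\ref{lem:fpt-compress} until the assignment is acyclic, with termination guaranteed by the strictly decreasing edge count of the incidence graph (bounded by $|T(N)|\cdot|A|$). You are in fact slightly more explicit than the paper in recording reflexivity and transitivity of the compression relation, which the paper uses tacitly.
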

\begin{proof}
  The lemma follows via an exhaustive application of
  Lemma~\ref{lem:fpt-compress}. Namely, we start by checking whether
  $\pi$ is acyclic. If yes, then $\pi$ itself is the acyclic assignment that
  compresses $\pi$. If not, then we apply Lemma~\ref{lem:fpt-compress} to
  $\pi$ and obtain the assignment $\pi'$ that strictly compresses
  $\pi$. If $\pi'$ is acyclic, we are done; otherwise, we 
  repeat the above procedure with $\pi'$ instead of
  $\pi$. Because at every step $|E(G_I(\pi))|<|E(G_I(\pi'))|$ and
  $G_I(\pi)$ has at most $|T(N)|\cdot |A|$ edges, this process
  concludes after at most $|T(N)|\cdot |A|$ steps and results in an acyclic
  assignment that compresses~$\pi$.
\end{proof}

The following lemma provides the first cornerstone for our algorithm
by showing that it is sufficient to consider only acyclic solutions.
Intuitively, it is a consequence of Lemma~\ref{lem:fpt-acyclic-compress} 
along with the observation
that compression preserves stability and individual rationality.
\sv{\begin{LEM}[$\star$]}
  \lv{\begin{LEM}}\label{lem:fpt-acyc}
  If $I$ has a stable assignment, then $I$ has an acyclic stable assignment.
\end{LEM}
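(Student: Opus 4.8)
The plan is to take a stable assignment $\pi$, feed it into Lemma~\ref{lem:fpt-acyclic-compress} to obtain an acyclic assignment $\pi'$ that compresses $\pi$, and then check that $\pi'$ is still stable. The point is that conditions (C1)--(C3) in the definition of compression are exactly what is needed to transport stability from $\pi$ to $\pi'$. The one preliminary observation I would record and reuse is the following consequence of (C1): since $|\pi_t|=|\pi'_t|$ for every agent type $t$ and agents of the same type are indistinguishable in \sGasp{}, the number $|N_t|-|\pi_t|$ of type-$t$ agents sent to $a_\emptyset$ is the same under $\pi$ and $\pi'$; hence some agent of type $t$ is assigned to $a_\emptyset$ by $\pi'$ iff some agent of type $t$ is assigned to $a_\emptyset$ by $\pi$. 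In particular $\PE(I,\pi')=\PE(I,\pi)$.

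With that in hand, I would verify the two ingredients of stability. For individual rationality of $\pi'$: pick any agent $n'$ with $\pi'(n')=a\neq a_\emptyset$ and let $t$ be its type; by (C3) we have $t\in T(\pi^{-1}(a))$, so some type-$t$ agent $n$ satisfies $\pi(n)=a$, and individual rationality of $\pi$ gives $|\pi^{-1}(a)|\in P_n(a)=P_t(a)=P_{n'}(a)$; by (C2), $|\pi'^{-1}(a)|=|\pi^{-1}(a)|$, so $|\pi'^{-1}(a)|\in P_{n'}(a)$, as required. For the absence of NS-deviations: suppose some agent $n'$ of type $t$ with $\pi'(n')=a_\emptyset$ had an NS-deviation to an activity $a$, i.e.\ $|\pi'^{-1}(a)|+1\in P_{n'}(a)$. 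By the (C1)-consequence there is a type-$t$ agent $n$ with $\pi(n)=a_\emptyset$, and $|\pi^{-1}(a)|=|\pi'^{-1}(a)|$ by (C2), so $|\pi^{-1}(a)|+1\in P_n(a)$, contradicting stability of $\pi$. Since $\pi'$ is individually rational and admits no NS-deviation, it is stable; and it is acyclic by construction, which proves the lemma.

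There is essentially no obstacle left here: all the real work was done in Lemma~\ref{lem:fpt-compress} and Lemma~\ref{lem:fpt-acyclic-compress}. The only place that warrants a sentence of care is spelling out why (C1) lets us pair up $a_\emptyset$-agents across $\pi$ and $\pi'$ and why (C3) lets us pair up agents at a common activity---both reduce to indistinguishability of equally-typed agents. As an alternative packaging of the same argument, one could instead route everything through Lemma~\ref{lem:fpt-prepro}: set $Q=\PE(I,\pi)$, note that $\pi$ is individually rational for $\gamma(I,Q)$ and nonempty on every $a\in A_{\neq\emptyset}(I,Q)$, observe that (C1)--(C3) transfer compatibility with $Q$, individual rationality for $\gamma(I,Q)$, and the nonemptiness condition to $\pi'$, and then conclude stability of $\pi'$ from the ``if'' direction of Lemma~\ref{lem:fpt-prepro}.
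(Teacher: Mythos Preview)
Your proposal is correct and follows essentially the same approach as the paper: invoke Lemma~\ref{lem:fpt-acyclic-compress} to obtain an acyclic compression $\pi'$ of a stable assignment $\pi$, then use (C2)+(C3) to transfer individual rationality and (C1)+(C2) to rule out NS-deviations (via $\PE(I,\pi')=\PE(I,\pi)$). You simply spell out in detail what the paper states tersely.
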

\begin{proof}
  Let $\pi$ be a stable assignment for $I$. Because of
  Lemma~\ref{lem:fpt-acyclic-compress}, there is an acyclic assignment
  $\pi'$ that compresses $\pi$. We claim that $\pi'$ is also a stable
  assignment. Conditions (C2) and (C3) together with the fact that
  $\pi$ is individually rational imply that also $\pi'$ is individually
  rational. Moreover, it follows from Condition (C1) that
  $\PE(I,\pi)=\PE(I,\pi')$, which together with Condition (C2) and the
  stability of $\pi$ implies the stability of $\pi'$.
\end{proof}

Our next step is the introduction of terminology related to the pattern graphs
mentioned at the beginning of this section.
Let $G$ be a bipartite graph with bi-partition $\{T(N),A\}$. We say
that $G$ \emph{models} an assignment $\pi : N \rightarrow A^*$ if
$G_I(\pi)=G$; in this sense every such bipartite graph can be seen as
a pattern (or model) for assignments. 
For a subset $Q \subseteq T(N)$
we say that $G$ is \emph{compatible} with $Q$ if every vertex in $Q$
and every vertex in $A_{\neq \emptyset}(I,Q)$ (recall the definition of $A_{\neq \emptyset}(I,Q)$ given in Lemma~\ref{lem:fpt-prepro}) has at least one neighbour in $G$;
note that
if $G$ is compatible with $Q$ then any assignment $\pi$ modelled by $G$
satisfies $\tau^{-1}(a)\neq \emptyset$ for every $a \in A_{\neq
  \emptyset}(I,Q)$.
Intuitively, the graph $G$ captures information about which types of agents are mapped
to which activities (without specifying numbers), while $Q$ captures information about
which agent types are perfectly (i.e., ``completely'') assigned. 

Let $Q \subseteq T(N)$ and let $G$ be a bipartite graph with
bi-partition $\{T(N),A\}$ that is compatible with $Q$. The following simple
lemma shows that, modulo compatibility requirements, 
finding a stable assignment for $I$ can be reduced to finding an individually
rational assignment for $\gamma(I,Q)$  (recall the definition of $\gamma(I,Q)$ given in Lemma~\ref{lem:fpt-prepro}).
\sv{\begin{LEM}[$\star$]}
  \lv{\begin{LEM}}\label{lem:fpt-stable-ind}
  Let $Q \subseteq T(N)$ and let $G$ be a bipartite graph
  with bi-partition $\{T(N),A\}$ that is compatible with $Q$. Then
  for every assignment $\pi : N \rightarrow A^*$ modelled by $G$ and compatible with $Q$,
  it holds that 
  $\pi$ is stable for $I$ if and only if
  $\pi$ is individually rational for $\gamma(I,Q)$. 
\end{LEM}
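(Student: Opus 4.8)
The plan is to derive the statement directly from Lemma~\ref{lem:fpt-prepro}, whose conclusion is that for any assignment $\pi$ compatible with $Q$, stability of $\pi$ for $I$ is equivalent to the conjunction of two conditions: (a) $\pi$ is individually rational for $\gamma(I,Q)$, and (b) $\pi^{-1}(a)\neq\emptyset$ for every $a\in A_{\neq\emptyset}(I,Q)$. Thus it suffices to show that, under the additional hypothesis that $\pi$ is modelled by $G$ and that $G$ is compatible with $Q$, condition (b) is automatically satisfied; once this is established, the asserted equivalence between stability of $\pi$ for $I$ and individual rationality of $\pi$ for $\gamma(I,Q)$ follows immediately. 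Note that we may invoke Lemma~\ref{lem:fpt-prepro} at all precisely because $\pi$ is assumed to be compatible with $Q$.

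To see that (b) holds, I would fix an arbitrary $a\in A_{\neq\emptyset}(I,Q)$. Since $G$ is compatible with $Q$, the vertex $a$ has at least one neighbour in $G$; call it $t\in T(N)$. Because $\pi$ is modelled by $G$ we have $G=G_I(\pi)$, and by the definition of the incidence graph $G_I(\pi)$ the presence of the edge $\{t,a\}$ means precisely that $\pi_{t,a}\neq\emptyset$. Hence $\pi^{-1}(a)\supseteq\pi_{t,a}\neq\emptyset$, which is exactly condition (b). (This is the ``note that'' remark already made when compatibility of a pattern graph was defined, so the argument here is just making it explicit.)

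Combining the two steps: for every assignment $\pi$ modelled by $G$ and compatible with $Q$, condition (b) is automatically true, so the two-part characterisation supplied by Lemma~\ref{lem:fpt-prepro} collapses to the single statement that $\pi$ is stable for $I$ if and only if $\pi$ is individually rational for $\gamma(I,Q)$, as claimed. I do not expect any genuine obstacle here; the only point requiring attention is to verify that the hypotheses of Lemma~\ref{lem:fpt-prepro} (namely compatibility of $\pi$ with $Q$) are met before applying it, and to read off condition (b) from the definitions of $G_I(\pi)$ and of ``$G$ compatible with $Q$''.
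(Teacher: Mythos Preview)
Your proposal is correct and follows essentially the same approach as the paper's own proof: invoke Lemma~\ref{lem:fpt-prepro} to reduce stability to individual rationality plus the non-emptiness condition on activities in $A_{\neq\emptyset}(I,Q)$, then discharge the latter using compatibility of $G$ with $Q$ together with $G=G_I(\pi)$. If anything, your write-up is slightly more explicit in spelling out the step via $\pi_{t,a}\neq\emptyset$.
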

\begin{proof}
  Since $\pi$ is compatible with $Q$, it follows from
  Lemma~\ref{lem:fpt-prepro} that $\pi$ is stable for $I$ if and only
  if $\pi$ is individually rational for $\gamma(I,Q)$ and $\tau^{-1}(a)\neq
  \emptyset$ for every $a \in A_{\neq \emptyset}(I,Q)$. Consider an arbitrary $a \in
  A_{\neq \emptyset}(I,Q)$. Because $G$ is compatible with $Q$ it
  holds that every $a$ has at least one neighbour in $G$ and since
  $\pi$ is modelled by $G$, we obtain that $\pi^{-1}(a)\neq
  \emptyset$, as required.
\end{proof}

The next, final lemma forms (together with Lemma~\ref{lem:fpt-acyc}) the core component of our proof of Theorem~\ref{the:fpt-agents-act}.

\lv{\begin{LEM}}
\sv{\begin{LEM}[$\star$]}
\label{lem:fpt-acyclic-poly}
  Let $Q \subseteq T(N)$ and let $G$ be an acyclic bipartite graph
  with bi-partition $\{T(N),A\}$ that is compatible with $Q$. Then one
  can decide in time $\bigoh((|N|+|A|)^2|N|^2)$ whether $I$ has a stable assignment
  which is modelled by $G$ and compatible with $Q$.
\end{LEM}
%
\begin{proof}
  By Lemma~\ref{lem:fpt-stable-ind}, $I$ has a stable
  assignment that is modelled by $G$ and compatible with $Q$ if and only if
  $\gamma(I,Q)$ has an individually rational assignment that is modelled by $G$ and compatible
  with $Q$. To determine whether $\gamma(I,Q)$
  has an individually rational assignment that is compatible with $Q$ and modelled by $G$, we
  will employ a reduction to the \GRAPHCON{} problem, which
  can be solved in polynomial-time by Lemma~\ref{lem:graphcon-poly}.
  
  The reduction proceeds as follows. We will construct an 
  instance $I'=(T,\lambda)$ of \GRAPHCON{} that
  is a \yes{}\hy instance if and only if $\gamma(I,Q)=(N,A,(P_n')_{n\in
    N})$ has an individually rational assignment
  that is compatible with $Q$ and modelled by $G$.
  We set $T$ to be the graph $G$ and let $\lambda$ be defined for every $v
  \in V(G)$ as follows:
  \begin{itemize}
  \item if $v \in Q$, then $\lambda(v)=\{|N_v|\}$,
  \item if $v \in T(N) \setminus Q$, then $\lambda(v)=\{0,\dotsc,
    |N_v|-1\}$,
  \item otherwise, i.e., if $v \in A$, then $\lambda(v)=\bigcap_{t \in
      T(N) \land \{v,t\}\in E(G)}P_t'(v)$.
  \end{itemize}
  Note that the reduction can be achieved in time
  $\bigoh(|E(G)||N|)$, assuming that the sets $P_t(a)$ for every $t \in T(N)$
  and $a \in A$ are given in terms of a data structure that allows to
  test containment in constant time; such a data structure could for
  instance be a Boolean array with $|N|$
  entries, whose $i$-th entry is \true{} if and only if $i$ is contained in
  $P_t(a)$. 
  Because the time to construct $\gamma(I,Q)$ and $A_{\neq
    \emptyset}(I,Q)$ is at most $\bigoh((|N|\cdot |A|)|N|)$
  (Lemma~\ref{lem:fpt-prepro}) and the time to solve $I'$ is at most $\bigoh(|V(T)|^2\cdot
    |N|^2)=\bigoh(|V(G)|^2\cdot |N|^2)=\bigoh\big((|N|+|A|)^2|N|^2\big)$
  (Lemma~\ref{lem:graphcon-poly}), we obtain
  $\bigoh\big((|N|+|A|)^2|N|^2\big)$ as the total
  running time of the algorithm.
  
  It remains to show that $I'$ is a \yes\hy
  instance if and only if $I$ has an individually rational assignment that is compatible
  with $Q$ and modelled by $G$.
  Towards showing the forward direction let $\alpha : E(T) \rightarrow
  \Nat$ be a solution for $I'$. We claim that the assignment $\pi : N
  \rightarrow A^*$ that for every edge $e=\{t,a\}$ of $G$ assigns exactly
  $\alpha(\{t,a\})$ agents of type $t$ to activity $a$ and assigns all
  remaining agents (if any) to $a_\emptyset$ is an individually rational assignment for
  $\gamma(I,Q)$ that is compatible with $Q$ and modelled by $G$.
  First observe that for every $t \in T(N)$ and every $a \in A$ it
  holds that $\pi_{t,a}=\alpha(\{t,a\})$ if $\{t,a\} \in E(T)$ and
  $\pi_{t,a}=0$ otherwise. Hence $G_I(\pi)=G=T$ which implies that $\pi$
  is modelled by $G$. We show next that $\pi$ is also compatible
  with $Q$, i.e., $\pi$ satisfies:
  \begin{itemize}
  \item[(P1)] for every $t \in Q$, it holds that $\sum_{a\in
      A}|\pi_{t,a}|=|N_t|$,
  \item[(P2)] for every $t \in T(N) \setminus Q$, it holds that
    $\sum_{a\in A}|\pi_{t,a}|<|N_t|$,
  \end{itemize}
  Towards showing (P1) first note that because $G$ is compatible with
  $Q$, it holds that every $t \in Q$ is adjacent to at least one edge
  in $G$ and thus also in $T$. Moreover, because $\alpha$ is a
  solution for $I'$, we obtain that $\sum_{e=\{t,a\}\in
    E(T)}\alpha(e)=\in \lambda(t)=\{|N_t|\}$. Since $\sum_{e=\{t,a\}\in
    E(T)}\alpha(e)=\sum_{a \in A}|\pi_{t,a}|$, we obtain (P1).

  Towards showing (P2) let $t \in T(N) \setminus Q$. If $t$ is
  isolated in $T$ then $\sum_{a\in A}|\pi_{t,a}|=0$ but $N_t \neq
  \emptyset$ and hence $\sum_{a\in A}|\pi_{t,a}|< |N_t|$, as required.
  If on the other hand $t$ is not isolated in $T$ then because
  $\alpha$ is a solution for $I'$, we obtain that $\sum_{e=\{t,a\}\in
    E(T)}\alpha(e)=\in \lambda(t)=\{0,\dotsc,|N_t|-1\}$. Since $\sum_{e=\{t,a\}\in
    E(T)}\alpha(e)=\sum_{a \in A}|\pi_{t,a}|$, we obtain (P2).

  Finally it remains to show that $\pi$ is individually rational for
  $\gamma(I,Q)$, i.e., for every $a \in A$ and $t \in T(N)$, it holds that
  if $\pi_{t,a}\neq\emptyset$ then $|\pi^{-1}(a)| \in P_t'(a)$. Let $a
  \in A$. If $a$ has no neighbour in $T$
  then $\pi_{t,a}=0$ for every $t \in T(N)$ and the claim holds.
  Hence let $t_1,\dotsc, t_l$ be the neighbours of $a$ in $T$. Then
  because $\alpha$ is a solution for $I'$, we obtain:
 
  \begin{eqnarray*}
    |\pi^{-1}(a)| & = & \sum_{1 \leq i \leq l}|\pi_{t_i,a}|\\
                  & = & \sum_{1 \leq i \leq l}\alpha(\{t_i,a\}) \\
    & \in & \lambda(a)=\bigcap_{1\leq i \leq l}P_{t_i}'(a)
  \end{eqnarray*}
  Hence for every $1 \leq i \leq l$ it holds that
  $|\pi^{-1}(a)| \in P_{t_i}'(a)$, as required.
  
  Towards showing the reverse direction let $\pi$ be an individually
  rational assignment for $\gamma(I,Q)$
  that is compatible with $Q$ and modelled by $G$. We claim that the assignment
  $\alpha: E(G) \rightarrow \Nat$ with $\alpha(\{t,a\})=\pi_{t,a}$ for
  every $\{t,a\}\in E(T)$ is a solution for $I'$.
  Observe that because $\pi$ is modelled by $G$, we have that
  $\pi_{t,a}\neq 0$ if and only if $\{t,a\} \in E(G)$.
  Hence $|\pi_{t}|=\sum_{a \in A}|\pi_{t,a}|=\sum_{\{t,a\}\in
    E(T)}\alpha(t,a)$ for every $t \in T(i)$ and $|\pi^{-1}(a)|=\sum_{t \in
    T(N)}|\pi_{t,a}|=\sum_{\{t,a\}\in E(T)}\alpha(\{t,a\})$.
  Since $\pi$ is compatible with $Q$ we obtain that $\pi_{t}=|N_t|$ for
  every $t \in Q$ and hence $\alpha(t)=|N_t|\in \lambda(t)$. Moreover
  for every $t \in T(N)\setminus Q$ it holds that $\pi_t<|N_t|$ and
  hence $\alpha(t) \in \lambda(t)$.
  Because $\pi$ is an individually rational assignment, it holds that $|\pi^{-1}(a)|\in
  \bigcap_{t \in T(\pi^{-1}(a))}P_t'(a)$ and thus $\sum_{\{t,a\}\in
    E(T)}\alpha(\{t,a\}) \in \lambda(a)$, as required.
\end{proof}

We are now ready to establish
Theorem~\ref{the:fpt-agents-act}.
\begin{proof}[Proof of Theorem~\ref{the:fpt-agents-act}]
  Let $I=(N,A,(P_n)_{n \in N})$ be the given instance of \sGasp{}.
  It follows from Lemma~\ref{lem:fpt-acyc} that it suffices to
  decide whether $I$ has an acyclic stable assignment.
  Observe that every acyclic stable assignment $\pi$ is compatible
  with $\PE(I,\pi)$ and the acyclic bipartite graph $G_I(\pi)$. Hence
  there is an acyclic stable assignment $\pi$ if and only if there is
  a set $Q \subseteq T(N)$ and an acyclic bipartite graph $G$ with
  bi-partition $\{T(N),A\}$ compatible with $Q$ such that
  there is a stable assignment for $I$ modelled by $G$ and compatible with $Q$.

  Consequently, we can determine the existence of a stable assignment
  for $I$ by first branching over every $Q \subseteq T(N)$, then over every acyclic
  bipartite graph $G$ with bi-partition $\{T(N),A\}$ 
  compatible with $Q$, and checking whether $I$
  has a stable assignment that is compatible with $Q$ and modelled by $G$.
  Since there are $2^{|T(N)|}$ many subsets $Q$ of
  $T(N)$ and at most $2^{|T(N)|\cdot |A|}$ (acyclic) bipartite graphs $G$, and we can
  determine whether $I$ has a stable assignment
  compatible with $Q$ and modelled by $G$ in time $\bigoh\big((|N|+|A|)^2|N|^2\big)$ (see
  Lemma~\ref{lem:fpt-acyclic-poly}), it follows that the total running
  time of the algorithm is at most $\bigoh\big(2^{|T(N)|(1+|A|)}(|N|+|A|)^2|N|^2\big)$.
\end{proof}

\section{Result 2: Lower Bound for \sGasp{}}
\label{sec:res2}

In this subsection we complement Theorem~\ref{the:fpt-agents-act} by showing
that if we drop the number of activities in 
the parameterization, then \sGasp{} becomes \W{1}\hy hard. We achieve this via a
parameterized reduction from \SMDPSS{} that we have shown to be
strongly \W{1}\hy hard in Theorem~\ref{the:smdpsshard}.
\begin{THE}
  \sGasp{} is \Weft\emph{[1]}\hy hard parameterized by the number of
  agent types.
\end{THE}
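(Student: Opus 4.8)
The plan is to give a parameterized reduction from \SMDPSS{}, which is strongly \W{1}\hy hard by Theorem~\ref{the:smdpsshard}; crucially, strong hardness lets us assume that all numbers in the \SMDPSS{} instance are bounded by a polynomial in its size, so the \sGasp{} instance we build will have only polynomially many agents. Let $(d,\bar{t},\PPP)$ with $\PPP=\{P_1,\dots,P_l\}$ be the input. The number of agent types in the constructed instance will be $d+\bigoh(1)$, so $d$ plays the role of the parameter on both sides. The basic correspondence is: $d$ ``dimension activities'' $a_1,\dots,a_d$, one per coordinate, whose final sizes will be forced to equal $\bar{t}[1],\dots,\bar{t}[d]$; and, for each coordinate $j$, a single ``carrier'' agent type $\theta_j$ whose agents can meaningfully sit only in $a_j$ (at the target size) or in the gadget activities belonging to a single set. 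On top of this there is a constant number of auxiliary ``selector'' and ``padding'' types, used purely to drive the per-set gadgets and to lock activity sizes against perturbation.

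For each set $P_i$ we build a selector gadget. Since $P_i$ is simple, picking a vector from $P_i$ amounts to picking one of $|P_i|$ pairs $(j,w)$ --- a coordinate and a value --- with the values $w$ pairwise distinct. For each such pair we create a ``home'' activity together with a block of $w$ agents of type $\theta_j$, whose approval set is tuned so that this block is either placed entirely in its home activity (making that activity individually rational) or moved entirely into $a_j$ (which is individually rational only once $a_j$ has reached the target size $\bar{t}[j]$). The auxiliary-type agents of the gadget, together with one dedicated selector activity per set, are then used to enforce that in any stable assignment exactly one block per set $P_i$ leaves its home and migrates to its dimension activity --- this is the ``complement / stay-home-versus-flow'' trick, and it is the part where the simplicity of \SMDPSS{} (one non-zero coordinate per vector, pairwise distinct values) is indispensable. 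Globally, a stable assignment then makes the size of $a_j$ equal to the sum over all sets of the contribution of the chosen block to coordinate $j$, and individual rationality of the carrier agents forces this to equal $\bar{t}[j]$ for every $j$ --- which is exactly the \SMDPSS{} condition. Padding agents of an auxiliary type are added to each activity so that its size cannot be changed without breaking individual rationality and so that no agent parked at $a_\emptyset$ has an NS-deviation.

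It then remains to verify equivalence. The forward direction is routine: given vectors $\bar{p}_1,\dots,\bar{p}_l$ summing to $\bar{t}$, migrate exactly the corresponding blocks to the dimension activities, leave all other blocks at home, place the auxiliary agents as the gadgets intend, and check directly that the result is individually rational and admits no NS-deviation. The reverse direction is the technical heart: one shows that the approval sets force every stable assignment to be ``clean'' --- each home activity is full or empty, exactly one home per set is empty, and the evacuated blocks are precisely the agents found in the dimension activities --- after which reading off the chosen pairs yields a solution of \SMDPSS{}. The main obstacle I anticipate is exactly this faithfulness argument: because all agents of a carrier type $\theta_j$ are indistinguishable across different sets, the construction must prevent carrier agents of one set's gadget from wandering into another's (handled by keeping the admissible sizes of distinct gadgets in disjoint numerical ranges), and it must rule out ``fractional'' choices where a set leaves two small blocks at home, or none --- which is where the distinctness of the values inside each $P_i$, together with a careful choice of target values, does the work. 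Finally, the reduction is clearly polynomial time and produces $d+\bigoh(1)$ agent types, so it is a valid parameterized reduction and the theorem follows.
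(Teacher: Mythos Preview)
Your overall plan --- reduce from \SMDPSS{} and exploit strong \W{1}\hy hardness so that all numbers are polynomially bounded --- is exactly right and matches the paper. But your reduction runs the natural correspondence in the opposite direction, and this is precisely what creates the two obstacles you yourself flag as unresolved.

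You set up $d$ ``dimension'' activities (one per coordinate) and then, for every set $P_i$, a gadget of home and selector activities populated by carrier-type agents; the hard part, as you say, is to force exactly one block per set to migrate and to keep indistinguishable carrier agents from drifting between gadgets belonging to different sets. The paper instead uses one activity $a_\ell$ \emph{per set} $P_\ell$ (plus a single auxiliary activity $a_P$) and one agent type $t_i$ \emph{per coordinate} $i$, with exactly $\bar t[i]$ agents of type $t_i$. The approval set is simply $P_{t_i}(a_\ell)=\{\,\bar v[i]:\bar v\in P_\ell,\ \bar v[i]\neq 0\,\}$. Because $P_\ell$ is simple, these sets are pairwise disjoint as $i$ varies, so any individually rational assignment places agents of at most one type into $a_\ell$, and the size $|\pi^{-1}(a_\ell)|$ directly names the chosen vector of $P_\ell$. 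Forcing every $t_i$ to be perfectly assigned then gives $\sum_\ell \bar p_\ell[i]=\bar t[i]$ for each $i$, which is exactly the \SMDPSS{} condition. Three auxiliary agent types (one agent each) together with $a_P$ suffice to enforce perfect assignment of every $t_i$ and non-emptiness of every $a_\ell$; after first scaling all entries by $3$ so that the gadget sizes $1$ and $2$ cannot collide with genuine values, the instance has exactly $d+3$ agent types.

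So the paper sidesteps both difficulties you anticipate: there is no per-set gadget (the set \emph{is} an activity), and ``exactly one vector per set'' falls out of individual rationality plus simplicity of $P_\ell$, not from any selector mechanism. Your sketch may be completable, but as it stands the two issues you name --- preventing wandering with only $\bigoh(1)$ auxiliary types while the number $l$ of sets is unbounded, and ruling out fractional or multiple choices within a set --- are asserted rather than resolved, and resolving them under the type bound looks genuinely delicate. Inverting the activity/set roles, as the paper does, makes all of this disappear.
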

\begin{proof}
  Let $(d,\bar{t},\PPP)$ with $\PPP=(P_1,\dotsc, P_m)$ be an instance
  of \SMDPSS{}. Because \SMDPSS{} is strongly \W{1}\hy hard, we can
  assume that all numbers of the instance $(d,\bar{t},\PPP)$, i.e.,
  the values of all components of the vectors in
  $\{\bar{t}\}\cup\bigcup_{P\in\PPP}P$, are encoded in unary. We will
  also assume that all non-zero components of the vectors
  $\{\bar{t}\}\cup\bigcup_{P\in\PPP}P$ are at least $3$ (this can for
  instance be achieved by multiplying every vector in
  $\{\bar{t}\}\cup\bigcup_{P\in\PPP}P$ with the number $3$).
  We will now construct the instance $I=(N,A,(P_n)_{n \in N})$ 
  of \sGasp{} in polynomial time with $|T(I)|=d+3$ such that
  $(d,\bar{t},\PPP)$ has a solution if and only if  so does $I$. $I$ has one agent type $t_i$ for every $1
  \leq i \leq d$ that comes with $\bar{t}[i]$ agents as well as three
  additional agent types $t_P$, $t_{\neq \emptyset}^1$, and $t_{\neq
    \emptyset}^2$ having one agent each. The last three agent types
  will be employed to ensure that every agent type $t_i$ must be
  perfectly assigned and every activity must be non-empty. Moreover,
  $I$ has one activity $a_\ell$ for every $\ell$ with $1\leq \ell \leq
  m$ as well as one activity $a_P$, which will be used in conjunction
  with the agent type $t_P$ to ensure that all agent types $t_i$ must
  be perfectly assigned.
  The approval set for the agent type $t_i$ w.r.t. an activity
  $a_\ell$ is given by $P_{t_i}(a_\ell)=\SB \bar{v}[i] \SM \bar{v} \in P_\ell
  \land \bar{v}[i]\neq 0 \SE$. Note that because all sets in $\PPP$ are
  simple, it holds that $P_{t_i}(a_\ell)\cap P_{t_j}(a_\ell)=\emptyset$ for
  every $i$, $j$, and $\ell$ with $1\leq i,j\leq d$, $j\neq i$, and
  $1\leq \ell \leq m$ and hence every such activity $a_\ell$ is
  populated by agents of at most one type in any individual rational
  assignment for $I$. Finally, we set $P_{t_i}(a_P)=\{2\}$ for every
  $i$ with $1 \leq i \leq d$, $P_{t_P}(a_P)=\{1,3\}$, $P_{t_{\neq
      \emptyset}^1}(a)=\{1\}$ and $P_{t_{\neq \emptyset}^2}(a)=\{2\}$
  for every activity $a \in A \setminus \{a_P\}$. Note that indeed
  $|T(I)|=d+3$ and $I$ can be constructed in polynomial-time (recall
  that we assumed that all numbers of $(d,\bar{t},\PPP)$ are
  encoded in unary). It remains to show that $(d,\bar{t},\PPP)$ has a
  solution if and only if so does $I$.

  Towards showing the forward direction, let
  $\bar{p}_1,\dotsc,\bar{p}_m$ with $\bar{p}_\ell\in P_\ell$ and
  $\sum_{\ell=1}^m\bar{p}_\ell=\bar{t}$ be a solution for
  $(d,\bar{t},\PPP)$. Let $\pi : N \rightarrow A^*$
  be the assignment defined as follows:
  \begin{itemize}
  \item For every agent type $t_i$ and every vector $\bar{p}_\ell$ such
    that $\bar{p}_\ell[i]\neq \emptyset$, $\pi$ assigns exactly
    $\bar{p}_\ell[i]$ agents to activity $a_j$.
  \item $\pi(n_P)=\{a_P\}$, where $n_P$ is the unique agent of type
    $t_P$,
  \item $\pi(n_{\neq \emptyset}^1)=\pi(n_{\neq
      \emptyset}^2)=a_\emptyset$, where $n_{\neq \emptyset}^1$ and
    $n_{\neq \emptyset}^2$ are the unique agents having type $t_{\neq
      \emptyset}^1$ and $t_{\neq \emptyset}^2$, respectively.
  \end{itemize}
  We claim that $\pi$ is a stable assignment for $I$, which we will
  show using the following sequence of observations:
  \begin{itemize}
  \item[(O1)] Due to the construction of $\pi$, we obtain that $|\pi_{t_i,a_\ell}|=\bar{p}_\ell[i]$
    for every $i$ and $j$ with $1\leq i \leq d$ and $1 \leq \ell\leq m$.
  \item[(O2)] Because of (O1) and the fact that
    $\sum_{\ell=1}^m\bar{p}_\ell=\bar{t}$, we obtain that
    $|\pi_{t_i}|=\bar{t}[i]$ for every $i$ with $1 \leq i \leq
    d$. Since furthermore $I$ has exactly $\bar{t}[i]$ agents of type
    $t_i$, we obtain that the agents of type $t_i$ are perfectly
    assigned by $\pi$.
  \item[(O3)] Because every set in $\PPP$ is simple, it also holds that
    $\pi^{-1}(a_\ell)$ consists of exactly $\bar{p}_\ell[i]$ agents of type
    $i$, where $i$ is the unique non-zero component of $\bar{p}_\ell$.
  \item[(O4)] Since $\pi^{-1}(n_P)=\{n_P\}$ the assignment $\pi$ is stable for the
    agent $n_P$.
  \item[(O5)] Since $\pi(n_{\neq \emptyset}^1)=\pi(n_{\neq
      \emptyset}^2)=a_\emptyset$ and $|\pi^{-1}(a)|\geq 2$ for every
    activity $a \in A\setminus \{a_P\}$ (because of (O3)), it holds that the assignment
    $\pi$ is stable for the agents $n_{\neq \emptyset}^1$ and $n_{\neq
      \emptyset}^2$.
  \item[(O6)] Consider an agent $n$ of type $t_i$. Because of (O2), we
    have that $\pi(n)\neq a_\emptyset$. Hence $\pi(n)=a_\ell$ for some
    $\ell$ with $1 \leq \ell \leq m$ and since
    $|\pi^{-1}(a_\ell)|=\bar{p}_\ell[i]$ (because of (O3)), we have that
    $|\pi^{-1}(a_\ell)| \in P_n(a_\ell)$, which implies that $\pi$ is a
    stable assignment for $n$.
  \end{itemize}
  Consequently, $\pi$ is a stable assignment for $I$.

  Towards showing the reverse direction, let $\pi : N \rightarrow
  A^*$ be a stable assignment for $I$. We start by
  showing that $\pi$ satisfies the following two properties:
  \begin{itemize}
  \item[(P1)] for every $i$ with $1 \leq i \leq d$, all agents of type
    $t_i$ are assigned to some activity in $A \setminus \{a_P\}$,
  \item[(P2)] for every activity $a \in A \setminus \{a_P\}$, it holds
    that $\pi^{-1}(a)\neq \emptyset$ and moreover $T(\pi^{-1}(a))
    \subseteq \{t_1,\dotsc,t_d\}$ and $|T(\pi^{-1}(a))|=1$.
  \end{itemize}
  We will show (P1) and (P2) using the following series of claims that
  hold for any stable assignment $\pi : N \rightarrow A$ for $I$:
  \begin{itemize}
  \item[(C1)] $\pi(n_{\neq \emptyset}^2)=a_\emptyset$ for the unique agent $n_{\neq \emptyset}^2$ of type
    $t_{\neq \emptyset}^2$,
  \item[(C2)] $\pi(n_{\neq \emptyset}^1)=a_\emptyset$ for the unique
    agent $n_{\neq \emptyset}^1$ of type $t_{\neq \emptyset}^1$,
  \item[(C3)] $\pi(n_P)=a_P$ for the unique agent $n_P$ of type
    $t_P$,
  \item[(C4)] $\pi^{-1}(a_P)=\{n_P\}$,
  \end{itemize}

  Towards showing (C1), assume for the contrary that $\pi(n_{\neq \emptyset}^2)=a$ for
  some $a \in A$. Then $a \in A\setminus \{a_P\}$ and
  $|\pi^{-1}(a)|=2$. However, this is not possible since $n_{\neq \emptyset}^2$ is the only
  agent in $I$ that approves size $2$ for any activity in $A \setminus
  \{a_P\}$.

  Towards showing (C2), assume for the contrary that $\pi(n_{\neq \emptyset}^1)=a$ for
  some $a \in A$. Then $a \in A\setminus \{a_P\}$ and
  $|\pi^{-1}(a)|=1$. Moreover, because of (C1), we have that
  $\pi(n_{\neq \emptyset}^2)=a_\emptyset$ and hence $n_{\neq \emptyset}^2$ would prefer $a$ over his
  current assignment, contradicting the stability of $\pi$.

  Towards showing (C3), assume for the contrary that $\pi(n_P) \neq
  a_P$. Then $\pi(n_P)=a_\emptyset$. Moreover, due to the approval set
  of $n_P$, it must hold that $\pi^{-1}(a_P)\neq \emptyset$. Since 
  $P_t(a_P)\in \{\emptyset,\{2\}\}$ for every agent type in
  $T(I)\setminus \{t_P\}$, it follows that $|\pi^{-1}(a_P)|=2$. However,
  this contradicts the stability of $\pi$, since $n_P$ would prefer
  $a_P$ over his current assignment.
  (C4) is a direct consequence of (C3) since $P_{n_P}(a_P)=\{1,3\}$
  and moreover $n_P$ is the only agent with $3 \in P_{n_P}(a_P)$.
  
  We are now ready to show (P1) and (P2). Towards showing (P1) assume
  for a contradiction that there is an agent $n$ whose type is in
  $\{t_1,\dotsc,t_d\}$ such that $\pi(n) \in \{a_\emptyset,a_P\}$.
  Because of (C4), we obtain that $\pi(n)=a_\emptyset$. Moreover, since $2 \in
  P_n(a_P)$ and $|\pi^{-1}(a_P)|=1$ (because of (C4)), it follows that
  $n$ would prefer $a_P$ over his current assignment, which
  contradicts the stability of $\pi$.

  Towards showing (P2) assume for a contradiction that there is an
  activity $a \in A\setminus \{a_P\}$ with
  $\pi^{-1}(a)=\emptyset$. Consider the agent $n_{\neq \emptyset}^1$,
  i.e., the only agent of type $t_{\neq \emptyset}^1$, then because of
  (C2), we have $\pi(n_{\neq \emptyset}^1)=a_\emptyset$. Moreover, since $1 \in
  P_{n_{\neq \emptyset}^1}(a)$ for every $a \in A \setminus \{a_P\}$,
  the agent $n_{\neq \emptyset}^1$
  would prefer $a$ over his current assignment, which contradicts the
  stability of $\pi$. Hence $\pi^{-1}(a)\neq \emptyset$ for every $a
  \in A\setminus \{a_P\}$. Furthermore, since the agent types in
  $\{t_1,\dotsc,t_d,t_{\neq \emptyset}^1,t_{\neq \emptyset}^2\}$ are
  the only types that approve of an activity in $A \setminus \{a_P\}$
  and it follows from (C1) and (C2) that neither $n_{\neq
    \emptyset}^1$ nor $n_{\neq \emptyset}^2$ are assigned to an
  activity in $A \setminus \{a_P\}$, we obtain that
  $T(\pi^{-1}(a))\subseteq \{t_1,\dotsc,t_d\}$. It remains to show
  that $|T(\pi^{-1}(a_\ell))|=1$ for every $1 \leq \ell \leq m$.
  Assume for a contradiction that there are two distinct agent types $t$ and $t'$
  in $\{t_1,\dotsc,t_d\}$ with $t,t' \in T(\pi^{-1}(a_\ell))$. Since $\pi$
  is individual rational, we obtain that $|\pi^{-1}(a_\ell)| \in P_t(a_\ell)$ and
  $|\pi^{-1}(a_\ell)| \in P_{t'}(a_\ell)$. Hence the set $P_\ell$ in $\PPP$
  contains two vectors that share the same value at their non-zero
  component, which contradicts our assumption that $P_\ell$ is a simple
  set. This concludes the proof for (P1) and (P2) and we are now ready
  to complete the proof of the reverse direction.

  Consider an activity $a_\ell$ for some $1 \leq \ell \leq m$. Because of
  (P2), we obtain that all agents in $\pi^{-1}(a_\ell)$ have the same
  type say $t_i$. Because $\pi$ is
  stable it holds that $|\pi^{-1}(a_\ell)|\in P_{t_i}(a_\ell)$ and hence
  there is a vector, say $\bar{p}_\ell$, in $P_\ell$ with
  $\bar{p}_\ell[i]=|\pi^{-1}(a_\ell)|$. We claim that the vectors
  $\bar{p}_1,\dotsc,\bar{p}_m$ chosen in this way form a solution for
  $(d,\bar{t},\PPP)$. Consider a component $i$ with $1 \leq i \leq d$,
  then because of (P1), we obtain that $\sum_{j=1}^m\bar{p}_j[i]$ is
  equal to the number of agents of type $t_i$, which in turn is equal
  to $\bar{t}[i]$ by the construction of $I$. Hence
  $\sum_{j=1}^m\bar{p}_j=\bar{t}$ and $\bar{p}_1,\dotsc,\bar{p}_m$ is
  a solution for $(d,\bar{t},\PPP)$.
\end{proof}

\section{Result 3: \XP\ Algorithms for \sGasp{} and \Gasp{}}
\label{sec:res3}
In this section, we present our \XP{} algorithm for \Gasp{}
parameterized by the number of agent types. In order to obtain this result, 
we observe that the stability of an assignment for
\Gasp{} can be decided by only considering the stability of agents that are assigned to
a ``minimal alternative'' w.r.t.\ their type. We then show that once
one guesses (i.e., branches over) a minimal alternative for every agent type, the problem of
finding a stable assignment for \Gasp{} that is compatible with this
guess can be reduced to the problem of finding a perfect and
individual rational assignment for a certain instance of \sGasp{}, where one additionally
requires that certain activities are assigned to at least one
agent. Our first task will hence be to obtain an \XP\ algorithm which can find such
a perfect and individually rational assignment for \sGasp{}.


\subsection{An \XP\ Algorithm for \sGasp{}}
The aim of this section is twofold. First of all, we obtain Lemma~\ref{lem:xp-poly-IR-a}, which allows us to find certain individually rational assignments in \sGasp{} instances and forms a core part of our \XP\ algorithm for \Gasp{}.

\begin{LEM}
  \label{lem:xp-poly-IR-a}
  Let $I=(N,A,(P_n)_{n \in N})$ be an instance of \sGasp{},
  $Q \subseteq T(N)$, and $A_{\neq \emptyset}\subseteq A$.
  Then one can decide in time $\bigoh(|A|\cdot
  (|N|)^{|T(N)|})$
  whether $I$ has an individual rational assignment $\pi$ that is compatible with
  $Q$ such that $\pi^{-1}(a)\neq \emptyset$ for every $a \in A_{\neq \emptyset}$.
\end{LEM}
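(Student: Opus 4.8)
The plan is to reduce the problem to \MDPSS{} (Subsection~\ref{ssec:mpss}) and invoke Lemma~\ref{lem:solve-mdpss}. The point is that, since agents of the same type are interchangeable, an individually rational assignment is fully described by choosing, for every activity $a$, a \emph{load vector} $\bar p_a\in\Nat_0^{T(N)}$ whose $t$-entry is the number of type-$t$ agents placed on $a$: individual rationality is a constraint local to each $\bar p_a$, the non-emptiness demand constrains the individual vectors, and compatibility with $Q$ constrains only the coordinate-wise sum $\sum_{a}\bar p_a$. This is exactly the shape of an \MDPSS{} instance with one set per activity.

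Concretely, I would take $k=|T(N)|$ (one coordinate per agent type), $r=|N|$, and $\PPP=\SB P^a \SM a\in A\SE$, where $P^a$ consists of all \emph{feasible loads} of $a$: a vector $\bar p$ with $\bar p[t]\le|N_t|$ for every $t$ such that, writing $s=|\bar p|$, we have $s\in P_t(a)$ for every $t$ with $\bar p[t]>0$ (this is precisely the individual-rationality condition for the agents on $a$, and it is vacuous when $\bar p=\bar 0$); additionally we put the zero vector into $P^a$ if and only if $a\notin A_{\neq\emptyset}$, which forces every activity in $A_{\neq\emptyset}$ to receive at least one agent. Running the algorithm of Lemma~\ref{lem:solve-mdpss} produces the set $\SSS\subseteq\{0,\dots,r\}^{T(N)}$ of all vectors of the form $\sum_{a\in A}\bar p_a$ with $\bar p_a\in P^a$; we then answer \yes{} iff $\SSS$ contains a vector $\bar t$ with $\bar t[t]=|N_t|$ for every $t\in Q$ and $\bar t[t]<|N_t|$ for every $t\in T(N)\setminus Q$.

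Correctness is routine in both directions. Given a suitable $\pi$, the vectors $\bar p_a[t]:=|\pi_{t,a}|$ lie in $P^a$ (individual rationality gives the size condition, and $a\in A_{\neq\emptyset}$ gives $\bar p_a\neq\bar 0$), and $\sum_a\bar p_a[t]=|\pi_t|$, which equals $|N_t|$ precisely for $t\in\PE(I,\pi)=Q$ and is strictly smaller otherwise, so $\sum_a\bar p_a\in\SSS$ has the required form. Conversely, from vectors $\bar p_a\in P^a$ whose sum has the required form we build $\pi$ by placing, for each $a$ and $t$, an arbitrary set of $\bar p_a[t]$ as-yet-unassigned type-$t$ agents on $a$ (possible since $\sum_a\bar p_a[t]\le|N_t|$) and sending all remaining agents to $a_\emptyset$; this $\pi$ is individually rational (each nonzero $\bar p_a$ is feasible), satisfies $\PE(I,\pi)=Q$ (by the form of the sum), and has $\pi^{-1}(a)\neq\emptyset$ for every $a\in A_{\neq\emptyset}$ (since $\bar 0\notin P^a$ there). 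The running time is dominated by the call to Lemma~\ref{lem:solve-mdpss}, which depends only on the number $|\PPP|=|A|$ of sets (not on their possibly large cardinalities) and is thus $\bigoh(|\PPP|\cdot r^k)=\bigoh(|A|\cdot|N|^{|T(N)|})$; constructing the sets $P^a$ (scan candidate vectors, test the size condition via constant-time membership arrays as in Lemma~\ref{lem:fpt-prepro}) and scanning $\SSS$ both stay within this bound up to factors depending only on $|T(N)|$.

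Since \MDPSS{} is already available, the lemma is essentially a translation exercise, and I expect the only part needing care to be the faithful encoding of the three requirements — especially that compatibility with $Q$ is \emph{not} a single-target condition: a type is perfectly assigned exactly when \emph{all} of its agents go to real activities, so the acceptable totals form a whole sub-box ($=|N_t|$ on the $Q$-coordinates, anything strictly smaller elsewhere). This is precisely why the reduction must exploit that \MDPSS{} returns the entire achievable set, so that the last step can be a scan over $\SSS$ rather than one membership test; the analogous but easier observations are that individual rationality is local and vacuous on empty activities, and that the non-emptiness demand for $A_{\neq\emptyset}$ is captured by withholding the zero load.
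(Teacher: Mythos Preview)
Your proposal is correct and follows essentially the same approach as the paper: reduce to \MDPSS{} with one set $P^a$ per activity encoding the individually-rational load vectors (withholding the zero vector exactly when $a\in A_{\neq\emptyset}$), then scan the computed set of achievable sums for a vector that equals $|N_t|$ on the $Q$-coordinates and is strictly smaller elsewhere. The only cosmetic difference is that the paper takes $r=\max_{t}|N_t|$ rather than $r=|N|$, which does not affect the stated running-time bound.
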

\begin{proof}
  Let $k=|T(N)|$ and $r=\max_{t\in T(N)}|N_t|$.  
  We construct an instance $I'=(k, \PPP, \bar{t})$ of \MDPSS{} such
  that $I'$ is a \yes{}\hy instance if and only if
  $I=(N,A,(P_n)_{n\in N})$ has an
  individually rational assignment $\pi : N \rightarrow A^*$ that is
  compatible with $Q$ and $\pi^{-1}(a)\neq \emptyset$ for every $a
  \in A_{\neq \emptyset}$.
  Let
  $T(N)=\{t_1,\dotsc,t_k\}$. The set $\PPP$ contains
  one set $P_a$ for every activity $a \in A$, defined as
  follows. For every number $p \in \bigcup_{t \in T(N)}P_t(a)$ the set
  $P_a$ contains the set of all vectors $\bar{p} \in [r]_0^k$ such that
  $|\bar{p}|=p$ and $\bar{p}[i]=0$ for every $i$ with $1\leq i \leq k$
  such that $p \notin P_{t_i}(a)$. Moreover, if $a \notin A_{\neq
    \emptyset}$, then the set $P_a$ additionally contains the
  all-zero vector $\bar{0}$. This completes the construction of $I'$.

  We claim that $I$ has an
  individually rational assignment $\pi : N \rightarrow A^*$ that is
  compatible with $Q$ and $\pi^{-1}(a)\neq \emptyset$ for every $a
  \in A_{\neq \emptyset}$ if and only if the solution $T$ for
  $I'$ contains a vector $\bar{t}$ with $\bar{t}[i]=|N_{t_i}|$ for every
  $t_i \in Q$ and $\bar{t}[i]<|N_{t_i}|$ otherwise (i.e. for every
  $t_i \in T(N)\setminus Q$). Note that establishing this claim completes the proof of
  the lemma since $I'$ can be constructed in time $\bigoh(r^k)$ and
  solved in time $\bigoh(|A|\cdot r^k)$ by Lemma~\ref{lem:solve-mdpss}.

  Towards showing the forward direction, let $\pi : N \rightarrow A^*$
  be an individually rational assignment for $I$ that is
  compatible with $Q$ and $\pi^{-1}(a)\neq \emptyset$ for every $a
  \in A_{\neq \emptyset}$. For every $a \in A$ let $\bar{p}_a$ be
  the vector with $\bar{p}_a[i]=|\pi_{t_i,a}|$ for every $i$ with $1
  \leq i \leq k$. Note that if $\bar{p}_a\neq \bar{0}$ then 
  $\bar{p}_a \in P_a$ for every $a \in A$ because $\pi$ is individually
  rational. On the other hand, if $\bar{p}_a=\bar{0}$ then $a \notin A_{\neq
    \emptyset}$ and so we also obtain $\bar{p}_a \in P_a$. 
  Hence the vector $\bar{t}=\sum_{a\in A}\bar{p}_a$ is in the solution
  for $I'$ and moreover $\bar{t}[i]=\sum_{a\in A}\bar{p}_a[i]=\sum_{a\in
    A}|\pi_{t_i,a}|=|\pi_{t_i}|$ for every $i$ with $1 \leq i \leq k$.
  Finally, because $\pi$ is compatible with
  $Q$, we obtain that $\bar{t}[i]=|\pi_{t_i}|=|N_{t_i}|$ for every $i$ with $t_i
  \in Q$ and also $\bar{t}[i]=|\pi_{t_i}|<|N_{t_i}|$ for every $i$ with $t_i
  \in T(N) \setminus Q$, as required.  

  Towards showing the reverse direction, assume that the solution $T$
  for $I'$ contains a vector $\bar{t}$ with $\bar{t}[i]=|N_{t_i}|$ for every
  $t_i \in Q$ and $\bar{t}[i]<|N_{t_i}|$ otherwise (i.e., for every
  $t_i \in T(N)\setminus Q$) and for every $a \in A$ let $\bar{p}_a$ be the vector in
  $P_a$ such that $\sum_{a \in A}\bar{p}_a=\bar{t}$. We claim that the
  assignment $\pi : N \rightarrow A^*$ that for every $1 \leq i \leq
  k$ and every $a \in A$ assigns exactly $\bar{p}_a[i]$ agents of type
  $t_i$ to activity $a$ and all remaining agents to activity
  $a_\emptyset$ is an individually rational assignment for $I$ that
  is compatible with $Q$ and $\pi^{-1}(a)\neq \emptyset$ for every $a
  \in A_{\neq \emptyset}$. First observe that for every $i$ with
  $1 \leq i \leq k$, it holds that
  $|\pi_{t_i}|=\sum_{a\in A}|\pi_{t_i,a}|=\sum_{a\in
    A}\bar{p}_a[i]=\bar{t}[i]$. Note also that because $\bar{t}[i]\leq
  |N_{t_i}|$, i.e., there is a sufficient number of agents for every
  agent type, we know that it is possible to assign the agents
  according to $\pi$. Since in addition it holds that
  $\bar{t}[i]=|N_{t_i}|$ if $t_i \in Q$ and $\bar{t}[i]<|N_{t_i}|$
  if $t_i\not \in Q$, it follows that
  $\pi$ is compatible with $Q$. Moreover, because $\bar{0}\notin P_a$ for every
  $a \in A_{\neq \emptyset}$, it also holds that
  $|\pi^{-1}(a)|=\sum_{1\leq i \leq k}|\pi_{t_i,a}|=\sum_{1\leq i \leq
    k}\bar{p}_a[i]\neq 0$ for every such $a$. It remains to show that $\pi$
  is individually rational for $I$. By the definition of $\pi$ it
  holds that whenever $\pi$ assigns an agent of type $t_i$ to
  some activity $a\in A$, then $\bar{p}_a[i]\neq 0$. Moreover, by the
  construction of $I'$ it holds that if $\bar{p}_a[i]\neq 0$ then $|\bar{p}_a|\in
  P_{t_i}(a)$. Hence because $|\pi^{-1}(a)|=|p_a|$, we obtain that
  $|\pi^{-1}(a)| \in P_{t_i}(a)$.
\end{proof}

As a secondary result, we can already obtain an \XP\ algorithm for \sGasp{} parameterized by the number of agent types. This may also be of interest, as the obtained running time is strictly better than that of the algorithm obtained for the more general \Gasp{}.
The last thing we need for this result is the following corollary, obtained as a direct consequence of Lemma~\ref{lem:fpt-prepro} and Lemma~\ref{lem:xp-poly-IR-a}.

\begin{COR}
  \label{cor:xp-poly-IR}
  Let $I=(N,A,(P_n)_{n \in N})$ be an instance of \sGasp{} and
  $Q \subseteq T(N)$. Then one can decide in time $\bigoh(|N|^2\cdot |A|+|A|\cdot
  (|N|)^{|T(N)|})$
  whether $I$ has a stable assignment compatible with $Q$.
\end{COR}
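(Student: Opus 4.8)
The plan is to obtain the corollary as a straightforward composition of Lemma~\ref{lem:fpt-prepro} and Lemma~\ref{lem:xp-poly-IR-a}, with no new combinatorial ideas required. First I would apply the procedure of Lemma~\ref{lem:fpt-prepro} to the pair $(I,Q)$, producing in time $\bigoh(|N|^2|A|)$ the instance $\gamma(I,Q)=(N,A,(P'_n)_{n\in N})$ together with the set $A_{\neq\emptyset}(I,Q)\subseteq A$. By the guarantee of that lemma, for every assignment $\pi:N\to A^*$ that is \emph{compatible with $Q$}, $\pi$ is stable for $I$ if and only if $\pi$ is individually rational for $\gamma(I,Q)$ and $\pi^{-1}(a)\neq\emptyset$ for every $a\in A_{\neq\emptyset}(I,Q)$. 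Consequently, $I$ has a stable assignment compatible with $Q$ if and only if $\gamma(I,Q)$ has an individually rational assignment that is compatible with $Q$ and whose preimage of every activity in $A_{\neq\emptyset}(I,Q)$ is non-empty.

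Second, I would decide this right-hand side by invoking Lemma~\ref{lem:xp-poly-IR-a} on $\gamma(I,Q)$ with the set $Q$ and with $A_{\neq\emptyset}:=A_{\neq\emptyset}(I,Q)$; this runs in time $\bigoh(|A|\cdot(|N|)^{|T(N)|})$. Correctness of the whole procedure is then immediate from the two equivalences above, and adding the two running times gives the claimed bound $\bigoh(|N|^2\cdot|A|+|A|\cdot(|N|)^{|T(N)|})$.

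The only point that needs a word of care — and the closest thing to an obstacle here — is the bookkeeping of agent types between $I$ and $\gamma(I,Q)$: the preprocessing of Lemma~\ref{lem:fpt-prepro} deletes tuples from preference lists, so a priori the preference-maximal types of $\gamma(I,Q)$ could form a strict coarsening of $T(I)$, and $Q\subseteq T(I)$ need not correspond to a set of those coarser types. This causes no trouble provided one simply retains the original type partition $\{N_t\}_{t\in T(I)}$ of $N$ when running Lemma~\ref{lem:xp-poly-IR-a} on $\gamma(I,Q)$: neither the notions of individual rationality and of compatibility with $Q$, nor the proof of Lemma~\ref{lem:xp-poly-IR-a}, require the type partition to be exactly the preference-maximal one, and the number of types used is still $|T(I)|=|T(N)|$, so the stated running time is unaffected. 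Likewise, compatibility with $Q$ and the non-emptiness conditions depend only on the assignment $\pi$ and on the partition $\{N_t\}$, hence they transfer verbatim from $I$ to $\gamma(I,Q)$, which is exactly what makes the chain of equivalences go through.
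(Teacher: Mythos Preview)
Your proposal is correct and follows exactly the approach the paper indicates: the corollary is stated there as ``a direct consequence of Lemma~\ref{lem:fpt-prepro} and Lemma~\ref{lem:xp-poly-IR-a}'' with no further elaboration, and your two-step composition (preprocess via Lemma~\ref{lem:fpt-prepro}, then decide via Lemma~\ref{lem:xp-poly-IR-a}) is precisely that. Your remark about retaining the original type partition when passing to $\gamma(I,Q)$ is a legitimate bookkeeping point that the paper glosses over; it is handled exactly as you say and does not affect correctness or the running time.
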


We can now prove the following.

\begin{THE}\label{the:xp}
  An instance $I=(N,A,(P_n)_{n\in N})$ of \sGasp{} can be solved in time
  $|A|\cdot |N|^{\bigoh(|T(N)|)}$.
\end{THE}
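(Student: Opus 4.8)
The plan is to reduce the problem to a bounded (in the parameter) number of calls to the procedure of Corollary~\ref{cor:xp-poly-IR}. The key observation is that any stable assignment $\pi$ for $I$ is compatible with the set $Q_\pi := \PE(I,\pi) \subseteq T(N)$ of agent types that $\pi$ assigns perfectly. Hence $I$ admits a stable assignment if and only if there is some subset $Q \subseteq T(N)$ such that $I$ admits a stable assignment compatible with $Q$. This immediately yields an algorithm: enumerate all subsets $Q$ of $T(N)$ and, for each of them, invoke Corollary~\ref{cor:xp-poly-IR} to test in time $\bigoh(|N|^2\cdot|A| + |A|\cdot|N|^{|T(N)|})$ whether $I$ has a stable assignment compatible with $Q$; answer \yes{} if and only if at least one call succeeds.

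For the running time, I would argue as follows. There are $2^{|T(N)|}$ subsets $Q$ to consider, so the total time is $\bigoh\big(2^{|T(N)|}\cdot(|N|^2\cdot|A| + |A|\cdot|N|^{|T(N)|})\big)$. Assuming $|N|\ge 2$ (the cases $|N|\le 1$ being trivial and handled directly), we have $2^{|T(N)|} \le |N|^{|T(N)|}$, and likewise $|N|^2 \le |N|^{|T(N)|}$ whenever $|T(N)|\ge 2$ (again the remaining case being trivial). Therefore the product is bounded by $|A|\cdot|N|^{\bigoh(|T(N)|)}$, as claimed.

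I do not anticipate a genuine obstacle here: all the real work has already been done in Subsection~\ref{ssec:mpss} (the \MDPSS{} algorithm of Lemma~\ref{lem:solve-mdpss}), in Lemma~\ref{lem:xp-poly-IR-a}, in the preprocessing of Lemma~\ref{lem:fpt-prepro}, and their combination in Corollary~\ref{cor:xp-poly-IR}. The only points requiring a little care are (i) noting explicitly that it suffices to range over all $Q$ rather than guessing the full assignment, which follows from the definition of compatibility together with the fact that $\PE(I,\pi)$ is well-defined for every assignment, and (ii) the elementary arithmetic absorbing $2^{|T(N)|}$ and the additive $|N|^2|A|$ term into the factor $|N|^{\bigoh(|T(N)|)}$. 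I would present the proof as a short paragraph carrying out exactly these two steps.
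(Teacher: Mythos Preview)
Your proposal is correct and follows essentially the same argument as the paper: enumerate all $2^{|T(N)|}$ subsets $Q\subseteq T(N)$, apply Corollary~\ref{cor:xp-poly-IR} to each, and absorb the resulting factors into $|A|\cdot|N|^{\bigoh(|T(N)|)}$. Your write-up is in fact slightly more explicit than the paper's about why the enumeration over $Q$ suffices and about the arithmetic in the running-time bound.
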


\begin{proof}
  Let $I=(N,A,(P_n)_{n \in N})$ be the given instance of \sGasp{}.
  The algorithm loops through every $Q \subseteq T(N)$ and 
  in each branch checks whether $I$ has a
  stable assignment that is compatible with $Q$. Note that due to
  Corollary~\ref{cor:xp-poly-IR} this can be achieved in time 
  $\bigoh(|N|^2\cdot |A|+|A|\cdot
  (|N|)^{|T(N)|})$ for every $Q
  \subseteq T(N)$. Since there are $2^{|T(N)|}$ subsets of
  $T(N)$, we obtain $\bigoh(2^{|T(N)|}(|N|^2\cdot |A|+|A|\cdot
  (|N|)^{|T(N)|}))=|A|\cdot |N|^{\bigoh(|T(N)|)}$ as the total running time of the algorithm.
\end{proof}

\subsection{An \XP{} algorithm for \Gasp{}}
\label{sec:algm}

Our aim here is to use Lemma~\ref{lem:xp-poly-IR-a} to
obtain an \XP\ algorithm for \Gasp{}. To simplify the presentation of our 
algorithm, we start by introducing
the notion of an NS$^*$-deviations that combines and unifies
individual rationality and NS-deviations. Namely, let
$I=(N,A,(\pref_n)_{n\in N})$ be a \Gasp{} instance, $\pi : N
\rightarrow A^*$ be an assignment, and $n \in N$. Then we say that $n$
has an NS$^*$-deviation to an activity $a' \in A^* \setminus
\{\pi(n)\}$ if $(a',|\pi^{-1}(a')|+1)\prefs_{T(n)} (a,|\pi^{-1}(a)|)$. Note
that in order to deal with the case that $a'=a_\emptyset$, we define
$(a_\emptyset,i+1)$ to be equal to $(a_\emptyset,1)$ for every $i$.
\begin{OBS}
  \label{obs:deviation}
  An assignment $\pi$ for $I$ is
  stable if and only if no agent $n\in N$ has an NS$^*$-deviation to
  any activity in $A^*\setminus \{\pi(n)\}$.
\end{OBS}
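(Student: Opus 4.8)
The statement is essentially a reformulation of the definition of stability, so the plan is simply to unfold ``stable'' into its two constituent conditions (individual rationality and the absence of an ordinary NS\hy deviation to an activity in $A$) and check the two implications directly. The only subtlety worth flagging in advance is the role of the dummy activity $a_\emptyset$ together with the convention $(a_\emptyset,i+1)=(a_\emptyset,1)$: an NS$^*$\hy deviation is designed precisely so that ``moving to $a_\emptyset$'' encodes a failure of individual rationality, and all the (minor) care in the argument goes into that bookkeeping.

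For the forward direction I would assume $\pi$ is stable, fix an agent $n$ with $\pi(n)=a$ and an activity $a'\in A^*\setminus\{a\}$, and split on whether $a'\in A$ or $a'=a_\emptyset$. If $a'\in A$, the inequality $(a',|\pi^{-1}(a')|+1)\prefs_{T(n)}(a,|\pi^{-1}(a)|)$ defining an NS$^*$\hy deviation to $a'$ is verbatim the condition for an ordinary NS\hy deviation to $a'$, which stability excludes. If $a'=a_\emptyset$, then $a\neq a_\emptyset$ (since $a'\neq\pi(n)$), and an NS$^*$\hy deviation would give, via the convention $(a_\emptyset,|\pi^{-1}(a_\emptyset)|+1)=(a_\emptyset,1)$, that $(a_\emptyset,1)\prefs_{T(n)}(a,|\pi^{-1}(a)|)$, contradicting the individual rationality of $\pi$ at $n$. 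Hence no agent has an NS$^*$\hy deviation.

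For the reverse direction I would assume no agent has an NS$^*$\hy deviation and establish both stability conditions. For individual rationality: if some $n$ with $\pi(n)=a\neq a_\emptyset$ violated it, i.e.\ $(a_\emptyset,1)\prefs_{T(n)}(a,|\pi^{-1}(a)|)$, then since $a_\emptyset\in A^*\setminus\{a\}$ this is an NS$^*$\hy deviation to $a_\emptyset$, a contradiction; so $\pi$ is individually rational. For the NS\hy deviation condition: if some $n$ with $\pi(n)=a$ had an ordinary NS\hy deviation to an activity $a'\in A$, then $a'\in A^*\setminus\{a\}$ and the defining inequality is exactly that of an NS$^*$\hy deviation to $a'$, again a contradiction. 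Therefore $\pi$ is stable, completing the equivalence. I do not expect any real obstacle here; the argument is a definition chase, and the ``hard part'' is nothing more than keeping the $a_\emptyset$ case straight on both sides.
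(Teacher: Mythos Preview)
Your proposal is correct and is exactly the intended argument: the paper states this as an Observation without proof because it follows immediately by unfolding the definitions, and your write-up spells out precisely that definition chase, including the only nontrivial point (that an NS$^*$\hy deviation to $a_\emptyset$ corresponds, via the convention $(a_\emptyset,i+1)=(a_\emptyset,1)$, to a violation of individual rationality).
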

Let $I$ and $\pi$ be as above and let $t \in T(I)$. We denote by
$\pi_t^*$ the set of activities $\pi_t$ if $t$ is
perfectly assigned by $\pi$ and $\pi_t \cup \{a_\emptyset\}$, otherwise.
We say an activity $a\in \pi_t^*$ is \emph{minimal with respect to
  $t$} if $(a',|\pi^{-1}(a')|)\pref_t (a,|\pi^{-1}(a)|)$ for each
$a'\in \pi_t^*$ and we address the alternative $(a,|\pi^{-1}(a)|)$ as
a \emph{minimal alternative} with respect to $t$. The following lemma
provides our first key insight, by showing that only agents assigned
to a minimal activity need to be checked for having an NS-deviation.

\begin{LEM}
  \label{lem:min_alternative}
  Let $\pi$ be an assignment and let $n$ be an agent of type $t$ that
  has an NS$^*$-deviation. Then there is an agent $n'$ of type $t$
  that also has an NS$^*$-deviation and is assigned to a
  minimal activity w.r.t. $t$.
\end{LEM}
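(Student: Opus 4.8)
The plan is to take the given agent $n$ of type $t$ together with an activity $a'\in A^*\setminus\{\pi(n)\}$ witnessing its NS$^*$-deviation, i.e.\ with
\[
(a',|\pi^{-1}(a')|+1)\ \prefs_t\ (\pi(n),|\pi^{-1}(\pi(n))|),
\]
and to hand this deviation over to an agent that sits at a minimal activity w.r.t.\ $t$. First I would fix a minimal activity $b$ with respect to $t$ — this exists because $\pi_t^*\neq\emptyset$ whenever $N_t\neq\emptyset$ — and write $m=(b,|\pi^{-1}(b)|)$ for the corresponding minimal alternative. I would then note that there is always an agent $n'$ of type $t$ with $\pi(n')=b$: if $b\in A$ this is because $b\in\pi(t)$ by the definition of $\pi_t^*$, and if $b=a_\emptyset$ it is because $a_\emptyset$ belongs to $\pi_t^*$ only when $t$ is not perfectly assigned, i.e.\ when some agent of type $t$ is mapped to $a_\emptyset$.

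The core computation is a short transitivity chain. Since $\pi(n)\in\pi_t^*$, minimality of $b$ gives $(\pi(n),|\pi^{-1}(\pi(n))|)\pref_t m$, and combining this with the deviation inequality via transitivity of $\pref_t$ yields
\[
(a',|\pi^{-1}(a')|+1)\ \prefs_t\ m\ =\ (b,|\pi^{-1}(b)|)\ =\ (\pi(n'),|\pi^{-1}(\pi(n'))|).
\]
If $a'\neq b$, this inequality is precisely the statement that $n'$ has an NS$^*$-deviation to $a'$ (note $a'\in A^*\setminus\{b\}=A^*\setminus\{\pi(n')\}$), and since $n'$ sits at the minimal activity $b$ we are done.

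The remaining situation — and the point where I expect the real work to lie — is $a'=b$, i.e.\ when $n$'s deviation happens to be \emph{into} a minimal activity. Here $b\neq\pi(n)$, so if $\pi(n)$ is itself a minimal activity we may simply take $n'=n$; and if $t$ has a second minimal activity $b''\neq b$, then the agent of type $t$ sitting at $b''$ has current alternative $m$ and, by the same chain, an NS$^*$-deviation to $a'=b$ (now distinct from its own activity), so we reduce to the easy case. The genuinely delicate case is thus a type $t$ with a \emph{unique} minimal activity $b$ into which some agent wants to move while $\pi(n)$ lies strictly above $b$. To handle it I would switch to an extremal choice of witness: among all agents of type $t$ that possess an NS$^*$-deviation, pick $n^\star$ whose current alternative is least preferred with respect to $t$, and argue by an exchange/transitivity argument that $\pi(n^\star)$ must then be a minimal activity — otherwise an agent of type $t$ sitting at an activity strictly below $\pi(n^\star)$ would inherit an NS$^*$-deviation (to $n^\star$'s deviation target, or, if that target is exactly that agent's activity, by re-running the chain), contradicting the minimality of $n^\star$'s alternative. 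Making this last exchange step airtight, in particular cleanly disposing of the sub-case where $n^\star$'s deviation target coincides with the activity of the agent one wants to promote, is the main obstacle I anticipate.
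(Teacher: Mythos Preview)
Your first two paragraphs are correct and are essentially the paper's own argument: pick a minimal activity $b$, chain the deviation inequality with minimality via transitivity, and conclude that an agent of type $t$ sitting at $b$ has an NS$^*$-deviation to $a'$. The paper's proof stops right there (it even writes ``NS$^*$-deviation to $a_m$'', presumably a slip for $a'$) and never addresses the case $a'=b$ that you single out.

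Your instinct that $a'=b$ is the crux is right, but the extremal argument you sketch cannot be completed---because the lemma, read literally, is false. Take two agents $n_1,n_2$ of a single type $t$, two activities $a_1,a_2$, the assignment $\pi(n_i)=a_i$, and preferences
\[
(a_2,2)\ \prefs_t\ (a_1,1)\ \prefs_t\ (a_2,1)\ \prefs_t\ (a_1,2)\ \prefs_t\ (a_\emptyset,1).
\]
Then $a_2$ is the unique minimal activity, $n_1$ has an NS$^*$-deviation (to $a_2$, since $(a_2,2)\prefs_t(a_1,1)$), yet the only agent at a minimal activity, $n_2$, has none: neither $(a_1,2)$ nor $(a_\emptyset,1)$ beats $(a_2,1)$. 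In your extremal scheme $n^\star=n_1$, the only agent strictly below it is $n_2$, and $n^\star$'s target is precisely $\pi(n_2)$; the ``re-running the chain'' step has nowhere to go. What the transitivity chain \emph{does} establish is the weaker statement that some $a'\in A^*$ (possibly $a'=b$) satisfies $(a',|\pi^{-1}(a')|+1)\prefs_t(b,|\pi^{-1}(b)|)$, and that is all the construction in Theorem~\ref{thm:GtosG} actually needs: its filtering step discards every alternative $(a,i)$ with $(a,i+1)\prefs_t f_{\min}(t)$, irrespective of whether $a$ coincides with the minimal activity.
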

\begin{proof}
  Let $a=\pi(n)$, then because $n$ has an NS$^*$-deviation, there is an
  activity $a' \in A^*\setminus \{a\}$ such that
  $(a',|\pi^{-1}(a')|+1)\prefs_{t} (a,|\pi^{-1}(a)|)$. Let $a_m$ be a
  minimal activity w.r.t. $t$. Then because
  $(a,|\pi^{-1}(a)|)\pref_t(a_m,|\pi^{-1}(a_m|)$ any agent assigned to
  $a_m$ does also have an NS$^*$-deviation to $a_m$.
\end{proof}
The following lemma is a direct consequence of
Observation~\ref{obs:deviation} and Lemma~\ref{lem:min_alternative}
and allows us to characterize the stability condition of an assignment
in terms of minimial activities for each agent type.
\begin{LEM}
  \label{lem:min-property}
  An assignment $\pi$ for $I$ is stable if and only if for each $t\in
  T(N)$ and each $a \in A^*\setminus \{a_m\}$, it holds that $(a_m,|\pi^{-1}(a_m)|) \pref_t
  (a,|\pi^{-1}(a)|+1)$, where $a_m$ is a minimal activity w.r.t.\ $t$.
\end{LEM}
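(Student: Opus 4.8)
The plan is to read the claimed equivalence with $a_m$ ranging over \emph{every} activity that is minimal with respect to $t$ under $\pi$ — this matters, because a type may have several distinct minimal activities that are indifferent at their current sizes but behave differently once one more agent joins — and then to obtain both directions by unfolding the definition of an NS$^*$-deviation, using Observation~\ref{obs:deviation} and Lemma~\ref{lem:min_alternative}. The one preliminary fact I would record first is that, for every type $t$ and every activity $a_m$ that is minimal with respect to $t$, there is an agent of type $t$ assigned to $a_m$: if $a_m\neq a_\emptyset$ then $a_m\in\pi_t^*$ already forces $\pi_{t,a_m}\neq\emptyset$, while if $a_m=a_\emptyset$ then membership of $a_\emptyset$ in $\pi_t^*$ means $t$ is not perfectly assigned, so once again some agent of type $t$ sits at $a_m$.

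For the forward direction I would start from a stable $\pi$, fix a type $t$, a minimal activity $a_m$ with respect to $t$, and an activity $a\in A^*\setminus\{a_m\}$, and pick an agent $n'$ of type $t$ with $\pi(n')=a_m$ (which exists by the preliminary fact). Stability together with Observation~\ref{obs:deviation} tells us that $n'$ has no NS$^*$-deviation, in particular none to $a\in A^*\setminus\{\pi(n')\}$; by the definition of an NS$^*$-deviation this says that $(a,|\pi^{-1}(a)|+1)\prefs_t(a_m,|\pi^{-1}(a_m)|)$ fails, and since $\pref_t$ is a complete transitive relation this is precisely $(a_m,|\pi^{-1}(a_m)|)\pref_t(a,|\pi^{-1}(a)|+1)$, as required.

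For the converse I would assume the right-hand side holds and, aiming for a contradiction, suppose some agent has an NS$^*$-deviation; by Observation~\ref{obs:deviation} it suffices to derive a contradiction from this. Lemma~\ref{lem:min_alternative} supplies an agent $n'$ of the same type $t$ that also has an NS$^*$-deviation and, crucially, is assigned to a minimal activity $a_m:=\pi(n')$ with respect to $t$. That deviation is to some $a\in A^*\setminus\{a_m\}$ with $(a,|\pi^{-1}(a)|+1)\prefs_t(a_m,|\pi^{-1}(a_m)|)$, which directly contradicts the right-hand side instantiated at $t$, the minimal activity $a_m$, and the activity $a$. Hence no agent has an NS$^*$-deviation, so $\pi$ is stable.

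I do not anticipate a genuine obstacle: once Lemma~\ref{lem:min_alternative} is in hand, each direction is a one-step unfolding of the definition of NS$^*$-deviation, with completeness of $\pref_t$ turning ``no deviation'' into the displayed weak preference. The only spots needing a moment's care are the two noted above — that a minimal activity always hosts an agent of the relevant type (so that stability can be applied to a concrete agent), and that $a_m$ in the statement must be understood as an arbitrary minimal activity rather than a conveniently chosen one.
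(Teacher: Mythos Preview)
Your proof is correct and follows essentially the same route as the paper: both directions are obtained by combining Observation~\ref{obs:deviation} with Lemma~\ref{lem:min_alternative} and unfolding the definition of an NS$^*$-deviation. Your write-up is more careful than the paper's terse argument in two respects --- you explicitly verify that a minimal activity always hosts an agent of the relevant type, and you make the universal quantification over minimal activities $a_m$ explicit --- but these are elaborations of the same underlying proof rather than a different approach.
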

\begin{proof}
  It follows from Lemma~\ref{lem:min_alternative} and
  Observation~\ref{obs:deviation} that $\pi$ is not stable
  if and only if there is an agent $n$ participating in a minimal activity  $a_m$ w.r.t.\ $T(n)$ that has an NS$^*$-deviation. This
  implies that $(a,|\pi^{-1}(a)|+1)\prefs_{T(n)}
  (a_m,|\pi^{-1}(a_m)|)$ for some $a \in A^*\setminus \{a_m\}$.
\end{proof}

The following theorem now employs the above lemma to construct an
instance $I'$ of \sGasp{} together with a subset $A_{\neq \emptyset}$
of activities such that for every function $f_{\min} : T(I)
\rightarrow X$ (or in other words for every guess of minimal alternatives
in an assignment), it holds that $I$ has a stable assignment such that
$f_{\min}(t)$ is a minimal alternative w.r.t.\ $t$ for every $t \in T(I)$
if and 
only if $I'$ has a perfect and individual rational assignment $\pi$
such that $\pi^{-1}(a)\neq \emptyset$ for every $a \in A_{\neq \emptyset}$. For
brevity, we will say that an assignment $\pi$ is \emph{compatible with $f_{\min}$}
if and only if $f_{\min}(t)$ is a minimal alternative w.r.t.\ $t$ for every
$t \in T(I)$.
\begin{THE}
  \label{thm:GtosG}
  Let $I=(N,A,(\pref_n)_{n\in N})$ be an instance of \Gasp{} and let
  $f_{\min}:T(N)\rightarrow X$, which informally represents a guess of a minimal
  alternative for every agent type. Then one can in time $\bigoh(|N|^2|A|)$ construct an instance
  $I'=(N,A\cup\{a_{\phi}\},(P_n)_{n\in N})$ of \sGasp{} together with
  a subset $A_{\neq \emptyset}$ of activities such that
  $|T(I')|\leq 2|T(I)|$ and $I$ has a stable assignment
  compatible with $f_{\min}(t)$ if and
  only if $I'$ has a perfect individual rational assignment $\pi$ with
  $\pi^{-1}(a)\neq \emptyset$ for every $a \in A_{\neq \emptyset}$.
\end{THE}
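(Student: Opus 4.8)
The plan is to lean on Lemma~\ref{lem:min-property}, which lets us replace the stability-plus-compatibility requirement by a local requirement per agent type. Write $f_{\min}(t)=(a_t^*,s_t)$. Then a $\Gasp{}$ assignment $\pi$ is stable and compatible with $f_{\min}$ iff for every type $t$: \textbf{(i)} $(a_t^*,s_t)$ is a minimal alternative w.r.t.\ $t$ (equivalently: some type-$t$ agent sits on $a_t^*$, $|\pi^{-1}(a_t^*)|=s_t$, and $(a',|\pi^{-1}(a')|)\pref_t (a_t^*,s_t)$ for every $a'\in\pi_t^*$), and \textbf{(ii)} $(a_t^*,s_t)\pref_t (a,|\pi^{-1}(a)|+1)$ for every $a\in A^*\setminus\{a_t^*\}$. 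The instance $I'$ keeps the agent set $N$; for each type $t$ of $I$ we single out one agent $n_t\in N_t$, give it a fresh \emph{anchor} type $\hat t$, and leave the remaining agents of $N_t$ as a \emph{main} type $t$, so that $|T(I')|\le 2|T(I)|$. We add one dummy activity $a_\phi$, whose purpose is to absorb the agents that a $\Gasp{}$ solution would leave on $a_\emptyset$, so that the translated $\sGasp{}$ solution can be perfect.

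The approval sets are defined in two stages, in close analogy with Lemma~\ref{lem:fpt-prepro}. In the positive stage we encode (i): for a main type $t$ we set $P_t(a_t^*)=\{s_t\}$; for $a\in A\setminus\{a_t^*\}$ we put $q\in P_t(a)$ iff $(a,q)\pref_t(a_t^*,s_t)$; and we let $P_t(a_\phi)=\{1,\dots,|N|\}$ precisely when $(a_\emptyset,1)\pref_t(a_t^*,s_t)$ (this covers $a_t^*=a_\emptyset$ and the ``tie'' case where an agent of type $t$ may legitimately sit on $a_\emptyset$) and $P_t(a_\phi)=\emptyset$ otherwise. The anchor $\hat t$ gets $P_{\hat t}(a_t^*)=\{s_t\}$ with empty approval everywhere else if $a_t^*\in A$, and $P_{\hat t}(a_\phi)=\{1,\dots,|N|\}$ with empty approval everywhere else if $a_t^*=a_\emptyset$; thus in any perfect assignment the anchor is pinned to $a_t^*$ (resp.\ $a_\phi$), which simultaneously witnesses a type-$t$ agent on $a_t^*$ and, via $P(a_t^*)=\{s_t\}$, forces $|\pi^{-1}(a_t^*)|=s_t$. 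In the negative stage we enforce (ii) exactly as Lemma~\ref{lem:fpt-prepro} enforces the analogous condition: for every activity $a$ and size $q$ with $a\neq a_t^*$ and $(a,q+1)\prefs_t(a_t^*,s_t)$ for some type $t$, delete $(a,q)$ from every approval list. Finally we set $A_{\neq\emptyset}$ to consist of every activity for which $q=0$ became forbidden in the negative stage (so it cannot be left empty) together with every activity of the form $a_t^*\in A$ (non-empty because its anchor lives there), and we make $I'$ trivially infeasible whenever the guess is syntactically or semantically impossible (e.g.\ $(a_t^*,s_t)\notin X$, or $a_t^*\in A$ with $(a_\emptyset,1)\prefs_t(a_t^*,s_t)$; note that several other inconsistencies, such as two types insisting on different sizes of a common minimal activity, already make $I'$ infeasible because two anchors cannot coexist).

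The equivalence is then proved by translating assignments. Given a stable $\pi$ compatible with $f_{\min}$, compatibility supplies, for each $t$, a type-$t$ agent on $a_t^*$ when $a_t^*\in A$ and on $a_\emptyset$ when $a_t^*=a_\emptyset$; relabel the type-$t$ agents so that this agent is the fixed anchor $n_t$, and obtain $\pi'$ by keeping every agent on an activity of $A$ where it is and moving every agent on $a_\emptyset$ to $a_\phi$. Condition (i) places the occurring sizes into the positive approval sets, stability of $\pi$ together with Lemma~\ref{lem:min-property} guarantees none of them was deleted in the negative stage and that every activity in $A_{\neq\emptyset}$ is non-empty, and the anchors account for the remaining non-emptiness, so $\pi'$ is perfect, individually rational, and satisfies the constraints. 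Conversely, from a perfect individually rational $\pi'$ meeting the constraints, move the $a_\phi$-agents back to $a_\emptyset$ and leave the rest; individual rationality of $\pi'$ pins $n_t$ to $a_t^*$ at size $s_t$ (so $(a_t^*,s_t)$ is a minimal alternative w.r.t.\ $t$ in the resulting $\pi$), while the deletions of the negative stage---together with $A_{\neq\emptyset}$ handling the empty-activity corner case---yield exactly the inequalities of Lemma~\ref{lem:min-property}, so $\pi$ is stable. The construction runs in the claimed $\bigoh(|N|^2|A|)$ time: after an $\bigoh(|N|^2|A|)$ preprocessing that tabulates each preference list for constant-time comparisons, computing the forbidden sizes costs $\bigoh(|A|\cdot|N|\cdot|T(I)|)=\bigoh(|A|\cdot|N|^2)$, exactly as in Lemma~\ref{lem:fpt-prepro}.

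The crux---and the part I expect to be delicate---is the bookkeeping around $a_\emptyset$: the minimal alternative may coincide with $a_\emptyset$; even when $a_t^*\in A$ the type $t$ may fail to be perfectly assigned because $(a_\emptyset,1)$ is tied with $(a_t^*,s_t)$; and witnessing ``$a_t^*\in\pi_t^*$'' genuinely requires an agent \emph{of type $t$}, not just a non-empty activity. These phenomena are precisely what forces the dummy activity $a_\phi$, the case split in the definition of $P_t(a_\phi)$, the anchor types $\hat t$, and the explicit infeasibility checks; making all of them interlock so that individual rationality of $\pi'$ becomes equivalent to stability-plus-compatibility of $\pi$ is where essentially all the effort goes.
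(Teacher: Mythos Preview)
Your proposal is correct and follows essentially the same approach as the paper: both constructions single out one ``anchor'' agent per type to pin down the minimal alternative, introduce the dummy activity $a_\phi$ to absorb agents that would sit on $a_\emptyset$, define the remaining approval sets as ``alternatives weakly preferred to $f_{\min}(t)$'' intersected with a global filter that removes sizes to which some type would deviate (your negative stage, the paper's set $X'$), and build $A_{\neq\emptyset}$ from the activities whose emptiness would trigger a deviation. The only notable cosmetic differences are that the paper normalizes ties with $(a_\emptyset,1)$ by forcing $f_{\min}(t)=(a_\emptyset,1)$ in that case (whereas you keep the anchor on $a_t^*\in A$ and open $P_t(a_\phi)$ for the main type), and that you redundantly add every $a_t^*\in A$ to $A_{\neq\emptyset}$ (which is harmless since the anchor already forces non-emptiness).
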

\begin{proof}
  As a first step we remove all alternatives from $X$ that violate the
  stability condition of an assignment $\pi$ for $I$ having $f_{\min}(t)$ as a
  minimal alternative w.r.t. $t$ for every $t \in T(I)$ given in
  Lemma~\ref{lem:min-property}. Namely, let $X'$
  be the set of alternatives obtained from $X\setminus \{(a_\emptyset,1)\}$ after removing all
  alternatives $(a,i) \in X$ for which there is an agent type $t \in
  T(I)$ such that $(a,i+1) \prefs_t f_{\min}(t)$. 
  We will also assume
  w.l.o.g. that $f_{\min}(t) \pref_t (a_\emptyset,1)$ and if
  $(a_\emptyset,1)\pref_t f_{\min}(t) \pref_t (a_\emptyset,1)$, then
  $f_{\min}(t)=(a_\emptyset,1)$.
  
  Note that the activity $a_\phi$ of $I'$ will take the place of the activity
  $a_\emptyset$ of $I$. We define the preferences $(P_n)_{n \in N}$
  for $I'$ as follows. Let $t \in T(I)$, let $(a_m,i_m)=f_{\min}(t)$
  and let $n_t \in N_t$ be an arbitrary agent of type
  $t$. Then we set $P_{n_t}=\{(a_m,i_m)\}$ if $a_m\neq a_\emptyset$ and
  $P_{n_t}=\{(a_\emptyset,j) \SM 1 \leq j \leq |N| \SE$
  otherwise. Informally, this will ensure that
  $\pi(n_t)=a_m$ and $(a_m,\pi^{-1}(a_m))=(a_m,i_m)$ if $a_m\neq
  a_\emptyset$ and $\pi(n_t)=a_\phi$ otherwise for any perfect 
  assignment $\pi$ for $I'$.

  Moreover, for every other agent
  $n$ of type $t$, i.e., $n \in N_t$ and $n\neq n_t$, we set
  $P_{n}=\SB (a,i)\in X' \SM (a,i) \pref_t f_{\min}(t)\SE$ if $a_m\neq
  a_\emptyset$ and 
  $P_{n}=\SB (a,i)\in X' \SM (a,i) \pref_t f_{\min}(t)\SE \cup \SB (a_\phi,j) \SM
  1 \leq j \leq |N| \SE$ otherwise. Informally, this ensures that
  only alternatives that are at least as preferred as $f_{\min}(t)$
  and that if $a_m=a_\emptyset$, then the agents can be assigned to
  $a_\phi$ for any size. Finally, the set $A_{\neq \emptyset}$ contains all activities $a
  \in A$ for which there is an agent type $t$ such that $(a,1)\prefs_t
  f_{\min}(t)$, i.e., all activities that must be non-empty for any
  stable assignment $\pi$ for $I$.
  
  Clearly the construction of $I'$ can be achieved in time $\bigoh(|N|^2|A|)$
  and moreover $|T(I')|\leq 2|T(I)|$. It remains to show that for
  every choice of $f_{\min}$, $I$ has a stable assignment $\pi$
  compatible with $f_{\min}(t)$ if
  and only if $I'$ has a perfect individual rational assignment $\pi$
  with $\pi^{-1}(a)\neq \emptyset$ for every $a \in A_{\neq \emptyset}$.

  Towards showing the forward direction, let $\pi$ be a stable
  assignment for $I$ compatible with $f_{\min}(t)$. Note that,
  w.l.o.g., we can assume that $\pi(n_t)=a_m$, where
  $(a_m,i_m)=f_{\min}(t)$ for every $t \in T(I)$ since otherwise we
  can switch the assignment for $n_t$ with some agent that is assigned
  to $a_m$, which exists because $\pi$ is compatible with $f_{\min}$.
  We claim that $\pi_1 : N \rightarrow (A \cup
  \{a_\phi\})^*$ such that for every $n\in N$, $\pi_1(n)=\pi(n)$ if $\pi(n)\neq
  a_\emptyset$ and $\pi_1(n)=a_\phi$ otherwise, is a perfect
  individual rational assignment for $I'$ such that $\pi_1^{-1}(a)\neq
  \emptyset$ for every $a \in A_{\neq \emptyset}$. The assignment $\pi_1$ is obviously
  perfect. 
  Towards showing that $\pi_1$ is individual rational assume for a
  contradiction that there is an agent $n\in N$ say of type $t$ with $\pi_1(n)=a\neq
  a_\emptyset$ such that $|\pi_1^{-1}(a)| \notin P_n(a)$. If $n=n_t$,
  then by our assumption on $\pi$, we have that $\pi(n)=a_m$ and
  $|\pi^{-1}(a_m)|=i_m$, where $(a_m,i_m)=f_{\min}(t)$. Hence, if
  $a_m\neq a_\emptyset$ then $\pi_1(n)=a_m$ and
  $|\pi_1^{-1}(a_m)|=i_m$, which because $P_{n_t}=\{(a_m,i_m)\}$, implies that $|\pi_1^{-1}(a_m)|=i_m \in
  P_n(a_m)$. If, on the other hand, $a_m=a_\emptyset$, then
  $\pi_1(n)=a_{\phi}$, which because $P_{n_t}=\SB (a_\phi,j) \SM 1
  \leq j \leq |N|\SE$ implies that $|\pi_1^{-1}(a_m)|=|\pi_1^{-1}(a_\phi)|
  \in P_n(a_\phi)$. Finally consider the case that $n\neq n_t$. Note that if $\pi_1(n)=a_\phi$, then
  $\pi(n)=a_\emptyset$, which
  implies that $(a_\emptyset,1)=f_{\min}(t)$ and hence because $P_n=\SB
  (a_\phi,j) \SM 1 \leq j \leq |N|\SE$, we obtain
  $(a_\phi,|\pi_1^{-1}(a_\phi)|) \in P_n$.
  Moreover, if $\pi_1(n)=a\neq a_\phi$, then
  $|\pi_1^{-1}(a)| \notin P_n(a)$ implies that either
  $(a,|\pi_i^1(a)|) \notin X'$ or $f_{\min}(t) \pref_t
  (a,|\pi_1^{-1}(a)|)=(a,|\pi^{-1}(a)|)$. In the former case $\pi$ is
  not stable and in the latter case $\pi$ is not compatible with $f_{\min}$.
  Finally, it remains to show that $\pi_1^{-1}(a)\neq \emptyset$ for
  every $a \in A_{\neq \emptyset}$. Assume for a contradiction that there is an
  activity $a \in A_{\neq \emptyset}$ such that
  $\pi_1^{-1}(a)=\emptyset$. Note that then also
  $\pi^{-1}(a)=\emptyset$ and because of the definition of $A_{\neq
    \emptyset}$ it follows that there is an agent type $t$ such that
  $(a,1)\prefs_t f_{\min}(t)=(a_m,i_m)$. But this contradicts the stability of
  $\pi$ since now the at least one agent assigned to the minimal
  activity $a_m$ has an NS$^*$-deviation to $a$ in $I$.

  Towards showing the reverse direction, let $\pi_1$ be a perfect
  individual rational assignment for $I'$ such that $\pi_1^{-1}(a)\neq
  \emptyset$ for every $a \in A_{\neq \emptyset}$. We claim that the assignment $\pi$ with
  $\pi(n)=\pi_1(n)$ if $\pi_1(n)\neq a_\phi$ and $\pi(n)=a_\emptyset$
  otherwise is a stable assignment for $I$ that is compatible with
  $f_{\min}$; note that because $\pi_1$ is perfect it cannot be the
  case that $\pi_1(n)=a_\emptyset$. We start by showing that $\pi$ is compatible with
  $f_{\min}$, for every $t \in T(I)$. We distinguish two cases
  depending on whether or not $(a_\emptyset,1)=f_{\min}(t)$. If
  $(a_\emptyset,1)=f_{\min}(t)$, then $P_{n_t}=\SB (a_\phi,j) \SM 1
  \leq j \leq |N|\SE$ and hence because $\pi_1$ is perfect, we obtain
  that $\pi_1(n_t)=a_\phi$. It follows that $\pi(n_t)=a_\emptyset$,
  which implies that $f_{\min}=(a_\emptyset,1)$ is activated in
  $\pi$. Finally, because $P_n=\SB (a,i)\in X'
  \SM (a,i) \pref_t f_{\min}(t)\SE \cup \SB (a_\phi,j) \SM 1 \leq j
  \leq |N| \SE$ for every $n \in N_t$ with $n\neq n_t$, it holds that
  if $\pi_1(n)=a\neq a_\phi$, then $(a,|\pi^{-1}_1(a)|)=(a,|\pi^{-1}(a)|) \pref_t
  f_{\min}(t)$, as required. If, on the other hand,
  $(a_m,i_m)=f_{\min}(t)$ and $a_m\neq a_\emptyset$, then because
  $P_{n_t}=\{(a_m,i_m)\}$, we have that $\pi(n_t)=a_m$ and
  $|\pi^{-1}(a_m)|=i_m$ and hence $f_{\min}(t)$ is activated in
  $\pi$. Moreover, for every $n \in N_t$ with $n\neq n_t$, we obtain
  that $\pi_1(n)\neq a_\emptyset$ and $\pi_1(n) \neq a_\phi$,
  because $\pi_1$ is perfect and $P_n=\SB (a,i)\in X'\SM
  (a,i) \pref_t f_{\min}(t)\SE$. Hence
  $(\pi(n),|\pi^{-1}(\pi(n))|)=(\pi_1(n),|\pi_1^{-1}(\pi_1(n))|)$ and since
  $(\pi_1(n),|\pi^{-1}_1(\pi_1(n))|) \in \SB (a,i)\in X'\SM (a,i) \pref_t
  f_{\min}(t)\SE$, we obtain that $(\pi(n),|\pi^{-1}(\pi(n))|) \pref_t
  f_{\min}(t)$, as required.
  
  Towards showing that $\pi$ is also stable assume for a contradiction
  that this is not the case, then because of
  Lemma~\ref{lem:min-property} there is an agent type $t$ and an
  activity $a \in A^*\setminus \{a_m\}$ such that
  $(a,|\pi^{-1}(a)|+1)\prefs_t f_{\min}=(a_m,i_m)$. Note that if
  $|\pi^{-1}(a)|=0$, then $a \in A_{\neq \emptyset}$ contradicting
  our assumption that $\pi^{-1}_1(a)\neq \emptyset$ for every $a \in
  A_{\neq \emptyset}$. Moreover, if $a=a_\emptyset$, then this
  contradicts our assumption that $f_{\min}(t) \pref_t (a_\emptyset,1)$.
  Hence $a \neq a_\emptyset$ and $|\pi^{-1}(a)|\neq 0$, which implies
  that there is an agent $n$ with $\pi(n)=a$ and furthermore
  $\pi(n)=\pi_1(n)$. But then, because $\pi_1$ is individual rational,
  we obtain that $(|\pi_1^{-1}(a)|) \in P_n(a)$, which implies that
  $(a,|\pi_1^{-1}(a)|) \in X'$ contradicting our choice of $X'$.
\end{proof}

We can now proceed to the main result of this section.

\begin{THE}
  \label{thm:GaspXP}
  An instance $I=(N,A,(\pref_{n})_{n\in N})$ of \Gasp{} can be solved
  in time $(|A|\cdot |N|)^{\bigoh(|T(I)|)}$. 
\end{THE}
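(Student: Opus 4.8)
The plan is to combine the \Gasp{}-to-\sGasp{} reduction of Theorem~\ref{thm:GtosG} with the \XP{} subroutine of Lemma~\ref{lem:xp-poly-IR-a}, by brute-forcing over a \emph{minimal alternative} for every agent type. Concretely, the algorithm enumerates every function $f_{\min}\colon T(N)\to X$ that satisfies the normalization assumed in Theorem~\ref{thm:GtosG} (i.e., $f_{\min}(t)\pref_t (a_\emptyset,1)$ for all $t$, with $f_{\min}(t)=(a_\emptyset,1)$ whenever $f_{\min}(t)\prefeq_t(a_\emptyset,1)$); there are at most $|X|^{|T(I)|}=(|A|\cdot|N|+1)^{|T(I)|}$ such functions. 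For each one it invokes Theorem~\ref{thm:GtosG} to construct, in time $\bigoh(|N|^2|A|)$, the \sGasp{} instance $I'=(N,A\cup\{a_\phi\},(P_n)_{n\in N})$ together with the set $A_{\neq\emptyset}$, and then calls Lemma~\ref{lem:xp-poly-IR-a} on $I'$ with $Q=T(I')$ and this very $A_{\neq\emptyset}$ to decide whether $I'$ has an individually rational assignment that is compatible with $T(I')$ (equivalently, a \emph{perfect} individually rational assignment) and whose preimage of every activity in $A_{\neq\emptyset}$ is nonempty. The algorithm answers \yes{} precisely if at least one guess $f_{\min}$ yields a \yes{} from Lemma~\ref{lem:xp-poly-IR-a}, and in that case the witnessing \sGasp{} assignment can be translated back into a stable assignment for $I$ using the reverse direction of Theorem~\ref{thm:GtosG}.

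For correctness, suppose $I$ has a stable assignment $\pi$. For each agent type $t$ there is (by definition) a minimal activity $a$ with respect to $t$, so setting $f_{\min}(t)$ to be a corresponding minimal alternative $(a,|\pi^{-1}(a)|)$—breaking ties toward $(a_\emptyset,1)$, which is legitimate since individual rationality forces $f_{\min}(t)\pref_t(a_\emptyset,1)$—produces a normalized function $f_{\min}$ with which $\pi$ is compatible in the sense of Theorem~\ref{thm:GtosG}. That theorem then guarantees that the associated $I'$ admits a perfect individually rational assignment respecting $A_{\neq\emptyset}$, which Lemma~\ref{lem:xp-poly-IR-a} detects when run with $Q=T(I')$. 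Conversely, whenever Lemma~\ref{lem:xp-poly-IR-a} returns \yes{} for the $I'$ built from some normalized $f_{\min}$, Theorem~\ref{thm:GtosG} hands back a stable assignment for $I$; in particular no spurious \yes{} can occur, since Theorem~\ref{thm:GtosG} is an equivalence valid for \emph{every} normalized $f_{\min}$. The point that makes it enough to guess one minimal alternative per type (rather than an assignment for each of the up to $|N|$ agents) is exactly the characterization of stability via minimal activities provided by Observation~\ref{obs:deviation} and Lemma~\ref{lem:min-property}.

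For the running time, each of the $(|A|\cdot|N|+1)^{|T(I)|}$ guesses costs $\bigoh(|N|^2|A|)$ for the construction of Theorem~\ref{thm:GtosG} plus one application of Lemma~\ref{lem:xp-poly-IR-a} to $I'$; since $I'$ has $|A|+1$ activities, $|N|$ agents, and $|T(I')|\le 2|T(I)|$ agent types, this application runs in time $\bigoh\!\bigl((|A|+1)\cdot|N|^{2|T(I)|}\bigr)$. Multiplying the number of guesses by the per-guess cost yields a total running time of $(|A|\cdot|N|)^{\bigoh(|T(I)|)}$, as claimed. In fact the heavy machinery has already been developed: Theorem~\ref{thm:GtosG} removes the stability condition and Lemma~\ref{lem:xp-poly-IR-a} handles the resulting individually-rational-plus-nonempty search via \MDPSS{}. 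The only delicate points are bookkeeping: checking that enumerating minimal-alternative functions is both sufficient (Lemma~\ref{lem:min-property}) and of the stated size, restricting the enumeration to the normalized $f_{\min}$ for which Theorem~\ref{thm:GtosG} is stated, and matching the ``perfect individually rational, $A_{\neq\emptyset}$-respecting'' predicate to an instance of Lemma~\ref{lem:xp-poly-IR-a} with $Q=T(I')$.
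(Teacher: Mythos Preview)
Your proposal is correct and follows essentially the same approach as the paper's proof: enumerate the at most $(|A|\cdot|N|+1)^{|T(I)|}$ choices of $f_{\min}$, apply Theorem~\ref{thm:GtosG} to obtain $(I',A_{\neq\emptyset})$, and then invoke Lemma~\ref{lem:xp-poly-IR-a} with $Q=T(I')$ to test for a perfect individually rational assignment respecting $A_{\neq\emptyset}$. Your write-up is in fact slightly more careful than the paper's, making explicit the normalization on $f_{\min}$, the choice $Q=T(I')$ that encodes perfectness, and the use of the bound $|T(I')|\le 2|T(I)|$ from Theorem~\ref{thm:GtosG}.
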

\begin{proof}
  Given an instance $I=(N,A,(\pref_{n})_{n\in N})$ of \Gasp{}, the
  algorithm enumerates all of the at most $(|A|\cdot|N|)^{|T(I)|}$
  possible functions $f_{\min}$ and for each such function $f_{\min}$
  the algorithm uses Theorem~\ref{thm:GtosG} to construct the instance
  $I'=(N,A\cup\{a_\phi\},(P_n)_{n \in N})$ of \sGasp{} with
  $|T(I')|\leq |T(I)|$ together with
  the set $A_{\neq \emptyset}$ of activities in time
  $\bigoh(|N|^2|A|)$.
  It then uses Lemma~\ref{lem:xp-poly-IR-a} to decide
  whether $I'$ has a perfect individual rational assignment $\pi_1$
  such that $\pi_1^{-1}(a)\neq \emptyset$ for every $a \in A_{\neq
    \emptyset}$ in time $\bigoh((|A|+1)(|N|)^{|T(I')|})=\bigoh((|A|+1)(|N|^{2|T(I)|})$. If this is true for at least one of the functions
  $f_{\min}$, the algorithm returns that $I$ has a solution, otherwise
  the algorithm correctly returns that $I$ has no solution. The total
  running time of the algorithm is hence $\bigoh(|A|^{|T(I)|}\cdot |N|^{2|T(I)|})=(|A|\cdot|N|)^{\bigoh(|T(I)|)}$.
\end{proof}

\section{Result 4: Lower Bound for \Gasp{}}
\label{sec:res4}

This section presents our hardness result for \Gasp{}.
In particular, we show that \Gasp{} is unlikely to be
fixed-parameter tractable parameterized by both the number of
activities and the number of agent types.

\sv{\begin{THE}[$\star$]}
  \lv{\begin{THE}}\label{the:hard-gasp}
  \Gasp{} is \Weft\emph{[1]}\hy hard parameterized by the number of activities
  and the number of agent types.
\end{THE}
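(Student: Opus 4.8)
The plan is a parameterized reduction from the well-known \W{1}\hy complete \mc{} problem~\cite{Pietrzak03} to \Gasp{}, producing an instance whose number of activities equals $k+\binom{k}{2}$ and whose number of agent types is bounded by a function of $k$ (roughly $O(k^2)$), so that both are bounded in terms of the source parameter. Write $V_1,\dots,V_k$ for the colour classes, $|V_i|=n$, and $v_\ell^i$ for the $\ell$-th vertex of $V_i$. The activity set consists of one \emph{vertex activity} $a_i$ per colour $i$ and one \emph{edge activity} $a_{ij}$ per pair $i<j$. The intended semantics of a solution $\pi$ is that $|\pi^{-1}(a_i)|$ codes, up to a fixed additive shift supplied by filler agents, the index $\ell_i\in[n]$ of the clique vertex chosen from $V_i$, and $|\pi^{-1}(a_{ij})|$ codes the chosen edge of $E_{i,j}$ through an injective, invertible ``packing'' of its two endpoint indices, e.g.\ an edge with endpoint indices $(\ell_i,\ell_j)$ maps to a number of the form $\mathrm{const}_{ij}+\ell_i+(n+1)\ell_j$; base $(n+1)$ guarantees that both endpoints can be read back and that distinct edges receive distinct sizes, while all numbers stay polynomial in $n$.

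The core of the construction is the design of the agent types together with their preference lists, and this is where the ordered-preference feature of \Gasp{} (as opposed to the approval sets of \sGasp{}) is essential. I would use three kinds of types. First, for every colour $i$, a constant number of \emph{transport} types built around the following conservation gadget: a type with $M$ interchangeable agents, all of whose ``good'' alternatives --- $(b,s)$ and $(b',s)$ for a fixed pair of activities and every size $s\le M$ --- are mutually equivalent and ranked strictly above the home alternative $(a_\emptyset,1)$, which in turn is ranked strictly above every remaining alternative. In any non-degenerate stable assignment every such agent is placed (a home agent could always deviate into whichever of $b,b'$ is not yet full) and no placed agent wants to deviate (all its good alternatives are equally good, and home is worse), so the number of its agents sitting in $b$ and in $b'$ is forced to sum to $M$. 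Chaining a bounded number of such gadgets along the star $a_i,\{a_{ij}\}_{j\ne i}$ lets me propagate the value $\ell_i$ from $a_i$ into every edge activity $a_{ij}$, so that the $V_i$-contribution to $|\pi^{-1}(a_{ij})|$ is a fixed function of $\ell_i$, and symmetrically for $j$. Second, for every pair $i<j$, a constant number of \emph{edge-verifier} types whose set of good alternatives over $a_{ij}$ --- again placed strictly between ``everything else'' and home --- is exactly $\{\,\mathrm{const}_{ij}+\ell+(n+1)\ell' : \{v_\ell^i,v_{\ell'}^j\}\in E_{i,j}\,\}$; a verifier agent can be placed consistently with stability only when $|\pi^{-1}(a_{ij})|$ is one of these ``legal'' sizes, which by invertibility of the packing forces the edge $\{v_{\ell_i}^i,v_{\ell_j}^j\}$ to exist in $G$. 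Third, a few \emph{anchor/filler} types, with small numbers of agents and narrow approval ranges, to realise the constant shifts $\mathrm{const}_{ij}$ and to kill the degenerate stable assignments --- chiefly those where a verifier sits at home because its good alternatives are ``one short'' of the current size, and those where some activity swallows almost all agents; both are handled by tiny agents approving exactly the offending size(s).

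With the instance $I$ in hand --- clearly of polynomial size and with $|T(I)|$ bounded in terms of $k$ --- I would establish the equivalence. For the forward direction, from a $k$-clique $c_1,\dots,c_k$ with $c_i=v_{\ell_i}^i$ I place the transport agents so that $|\pi^{-1}(a_i)|$ realises $\ell_i$ and each $a_{ij}$ reaches precisely the size coding $\{c_i,c_j\}$, place every verifier inside its edge activity, place the anchor/filler agents as intended, and send all leftover agents to $a_\emptyset$; then a finite case check over the few kinds of potential NS$^*$-deviations and over individual rationality shows the assignment is stable. For the reverse direction, from a stable assignment I first rule out the degeneracies, then read off well-defined indices $\ell_1,\dots,\ell_k$ from the conservation gadgets, invoke the verifier types to conclude $\{v_{\ell_i}^i,v_{\ell_j}^j\}\in E_{i,j}$ for every pair $i<j$, and hence that $\{v_{\ell_1}^1,\dots,v_{\ell_k}^k\}$ induces a $k$-clique in $G$.

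The main obstacle is exactly the second step: packing the ``each chosen vertex is an endpoint of each incident chosen edge'' constraint into a family of agent types whose size depends on $k$ only, while keeping all numbers polynomial and without the luxury of auxiliary activities (the activity budget being fixed at $k+\binom{k}{2}$). The conservation gadget and the verifier gadget both hinge on the three-tier preference structure ``good alternatives $\succ(a_\emptyset,1)\succ$ everything else'', and getting the transport gadgets to \emph{faithfully} copy $\ell_i$ into all of $a_{i1},\dots,a_{ik}$ simultaneously --- rather than merely constrain a single linear combination --- together with ruling out all the spurious stable assignments, is where the bulk of the technical work lies.
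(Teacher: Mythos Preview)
Your overall framework matches the paper's: reduce from \mc{}, introduce one vertex activity $a_i$ per colour and one edge activity $a_{ij}$ per pair, and let sizes encode the selected vertices and edges. The divergence, and the gap, is in how you enforce that the edge selected at $a_{ij}$ is incident to the vertex selected at $a_i$.

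Your proposed mechanism is a family of ``transport'' types whose agents are split between $a_i$ and $a_{ij}$, together with an edge-verifier checking that the final size of $a_{ij}$ encodes an actual edge. You correctly identify the problem with this: a single conservation gadget contributes one linear equation, and if you attach $k-1$ such gadgets to $a_i$ (one per $j\neq i$), the size of $a_i$ constrains only the \emph{sum} of the $k-1$ transferred amounts, not each one individually. Nothing you have written forces all transfers to equal the same $\ell_i$; different $a_{ij}$ could receive different values whose sum happens to be right, and the verifier would then accept any edge of $E_{i,j}$ whose packed size happens to match the (uncontrolled) contribution. You flag this as ``where the bulk of the technical work lies'', but you do not supply a gadget that closes the gap, and I do not see how to do it with your three-tier preference structure alone. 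As it stands, the proposal is a plausible outline with a missing core.

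The paper avoids transport altogether by exploiting the \emph{order} in \Gasp{} preference lists in a direct way. For each colour $i$ it introduces just two special agents, $n_i^{\rightarrow}$ and $n_i^{\leftarrow}$, plus one large pool $N_V$ of ``validity'' agents (so only $2k+1$ types in total). The trick is to put, for every vertex $v\in V_i$, the alternative $(a_i,\alpha_i(v))$ and all alternatives $(a_{o(i,j)},\alpha_{o(i,j)}(e)+1)$ for edges $e$ incident to $v$ into a \emph{single} equivalence class $C_I(i,v)$, and then order these classes by $\alpha_i$: increasing for $n_i^{\rightarrow}$, decreasing for $n_i^{\leftarrow}$. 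If the edge at $a_{o(i,j)}$ were incident to some $v'$ with $\alpha_i(v')>\alpha_i(v)$, the forward agent (currently in $a_i$) would strictly prefer to deviate there; if $\alpha_i(v')<\alpha_i(v)$, the backward agent would. The two monotone orderings sandwich the edge's $V_i$-endpoint to be exactly the chosen $v$, with no copying of values and no per-$j$ gadgetry. The validity agents, via odd/even size parity, force every activity's size into the legal range and keep at least one of them at $a_\emptyset$ so that deviations are live. This ordered-class sandwich is the idea your proposal is missing.
\qed
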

\begin{proof}
  We will employ a
  parameterized reduction from the \mc{} problem, which is
  well-known to be \W{1}\hy complete~\cite{Pietrzak03}.

  \pbDefP{\mc{}}
  {An integer $k$, a $k$-partite graph $G=(V,E)$ with partition $\{V_1,\ldots,V_k\}$ of $V$ into sets of equal size.}
  {$k$}
  {Does $G$ have a $k$-clique, i.e., a set \mbox{$C\subseteq V$} of $k$ vertices such that $\forall u,v\in C$, with $u\neq v$ there is an edge $\{u,v\}\in E$?}
  
  We denote by $E_{i,j}$ the set of edges of $G$ that have one endpoint
  in $V_i$ and one endpoint in $V_j$ and we assume w.l.o.g. that
  $|V_i|=n$ and $|E_{i,j}|=m$ for every $i$ and $j$ with $1 \leq i < j
  \leq k$ (see the standard parameterized complexity textbook
  for a justification of these assumptions~\cite{CyganFKLMPPS15}). 

  \sloppypar Given an instance $(G,k)$ of \mc{} with partition $V_1,\dotsc,V_k$,
  we construct an equivalent instance $I=(N,A,(\pref_n)_{n\in N})$ of \Gasp{} in
  polynomial time with $\binom{k}{2}+k$ activities and $2k+1$ agent types.

  The instance $I$ has the following activities:
  \begin{itemize}
  \item For every $i$ with $1 \leq i \leq k$ the activity $a_i$, whose
    size in a stable assignment for $I$ will be used to identify the
    vertex in $V_i$ chosen to be part of a $k$-clique in $G$. 
  \item For every $i$ and $j$ with $1 \leq i < j \leq k$ the activity
    $a_{i,j}$, whose size in a stable assignment for $I$ will be used
    to identify the edge in $E_{i,j}$ chosen to be part of a
    $k$-clique in $G$. 
  \end{itemize}
  For every $i$ and $j$ with $1 \leq i < j \leq k$ let $\alpha_i$ be a
  bijection from $V_i$ to the set $\{3,5,\dotsc, 2(n-1)+1,2n+1\}$
  and similarly let $\alpha_{i,j}$ be a bijection
  from $E_{i,j}$ to the set $\{1,3,\dotsc, 2m-1\}$.
  The
  main ideas behind the reduction are as follows. First the reduction
  ensures that for every stable assignment $\pi : N
  \rightarrow A^*$ for $I$ the size $s=|\pi^{-1}(a_i)|$ of activity $a_i$
  uniquely identifies a vertex in $V_i$, i.e.,
  the vertex $\alpha_i^{-1}(s)$, and the size $s=|\pi^{-1}(a_{i,j})|$ of activity $a_{i,j}$
  uniquely identifies an edge in $E_{i,j}$, i.e.,
  the edge $\alpha_{i,j}^{-1}(s)$. Employing a set of ``special
  agents'' and their associated preference lists, the reduction will
  then ensure that the vertices identified by the sizes of the
  activities $a_1,\dotsc,a_k$ are endpoints of all the edges
  identified by the sizes of the activities
  $a_{1,2},a_{1,3},\dotsc,a_{k-1,k}$, which implies that these
  vertices form a $k$-clique in $G$.

  For a pair $i,j$ of numbers, we denote by $o(i,j)$, the (ordered)
  pair $i,j$ if $i\leq j$ and the (ordered) pair $j,i$ otherwise.
  We will now introduce the equivalence classes required for the
  definition of our preference lists. Namely, we define the following
  equivalence classes:
  \begin{itemize}
  \item For $x \in \{1,2\}$ we define the set $C_A^x=\SB (a_i,x) \SM 1 \leq i
    \leq k \SE$,
  \item For every $i$ with $1\leq i \leq k$, we define the following
    sets:
\begin{itemize}
    \item $C_V(i)=\SB (a_i,\alpha_i(v)) \SM v \in V_i\SE$,
    \item $C_V^{+1}(i)=\SB (a_i,\alpha_i(v)+1) \SM v \in V_i\SE$,
    \end{itemize}
  \item For every $i$ and $j$ with $1\leq i < j \leq k$, we define the
    set $C_E(i,j)=\SB (a_{i,j},\alpha_{i,j}(e)) \SM e \in E_{i,j}\SE$,
  \item For every $i$ and $v \in V_i$, we define the set $C_I(i,v)=\SB
    (a_i,\alpha_i(v))\SE \cup \SB (a_{o(i,j)}, \alpha_{o(i,j)}(e)+1) \SM 1
    \leq j \leq k \land j\neq i \land e \in E_{i,j} \land v \in e \SE$, i.e.,
    $C_V(i,v)$ contains the tuple $(a_i,\alpha(v))$ and all tuples
    $(a_{o(i,j)},\alpha_{o(i,j)}(e)+1)$ such that $j\neq i$ and the
    edge $e \in E_{o(i,j)}$ is incident to $v$.
  \end{itemize}

  Now ready to define the required preference lists.
  When defining a preference list we will only list the equivalence
  classes that are more or equally preferred to the alternative
  $(a_\emptyset,1)$ and assume
  that all remaining alternatives, i.e., all alternatives that are not
  listed, are less preferred than $(a_\emptyset,1)$.
  \begin{itemize}
  \item The \emph{validity} preference list, denoted by $P_{\textup{VAL}}$,
    defined as $C_V \cup C_E > C_A^2 > C_A^1 >
    (a_\emptyset,1)$. Informally, $P_{\textup{VAL}}$ is crucial in
    ensuring that $|\pi^{-1}(a_i)| \in \SB \alpha_i(v) \SM v \in
    V_i\SE$ and $|\pi^{-1}(a_{i,j})| \in \SB \alpha_{i,j}(e) \SM e \in
    E_{i,j}\SE$ for every stable assignment $\pi$ for $I$ and every
    $i$ and $j$ with $1 \leq i<j \leq k$.
  \item For every $i$ with $1 \leq i \leq k$, let $v_1,\dotsc,v_u$ be the
    unique ordering of the vertices in $V_i$ in ascending order
    w.r.t. $\alpha_i$. We define the following two preference lists
    for every $i$ with $1 \leq i \leq k$:
    \begin{itemize}
    \item The \emph{forward-vertex} preference list, denoted by $P_V^{\rightarrow}(i)$,
      defined as $C_V^{+1}(i) > C_I(i,v_u) > C_I(i,v_{n-1}) > \dotsb > C_I(i,v_1) >
      (a_\emptyset,1)$. Informally, $P_V^{\rightarrow}(i)$ is crucial
      to ensure that for every $j$ with $1 \leq j \leq k$ and $j \neq
      i$ the edge $e$ with $\alpha_{o(i,j)}(e)=|\pi^{-1}(a_{o(i,j)})|$
      is not adjacent with any vertex $v' \in V_i$ such that
      $\alpha_i(v')>\alpha_i(v)$ for the vertex $v$ with
      $\alpha_i(v)=|\pi^{-1}(a_i)|$. This intuition will be made
      precise in Claim~\ref{clm:hard-gasp-pref}.
      
    \item The \emph{backward-vertex} preference list, denoted by $P_V^{\leftarrow}(i)$,
      is defined as $C_V^{+1}(i) > C_I(i,v_1) > C_I(i,v_2) > \dotsb > C_I(i,v_u) >
      (a_\emptyset,1)$.
      Informally, $P_V^{\leftarrow}(i)$ is crucial
      to ensure that for every $j$ with $1 \leq j \leq k$ and $j \neq
      i$ the edge $e$ with $\alpha_{o(i,j)}(e)=|\pi^{-1}(a_{o(i,j)})|$
      is not adjacent with any vertex $v' \in V_i$ such that
      $\alpha_i(v')<\alpha_i(v)$ for the vertex $v$ with
      $\alpha_i(v)=|\pi^{-1}(a_i)|$. This intuition will be made
      precise in Claim~\ref{clm:hard-gasp-pref}.
    \end{itemize}
    Informally, $P_V^{\rightarrow}(i)$ and $P_V^{\leftarrow}(i)$
    together ensure that for every $j$ with $1 \leq j \leq k$ and $j \neq i$, 
    the edge $e$ with $\alpha_{o(i,j)}(e)=|\pi^{-1}(a_{o(i,j)})|$
    is adjacent with the vertex $v$ with
    $\alpha_i(v)=|\pi^{-1}(a_i)|$. This intuition will be made
    precise in Claim~\ref{clm:hard-gasp-pref}
  \end{itemize}
  
  We are now ready to define the set $N$ of agents:
  \begin{itemize}
  \item for every $i$ with $1 \leq i \leq k$:
    \begin{itemize}
    \item one agent $n_{i}^{\rightarrow}$ with preference list
      $P_V^{\rightarrow}(i)$ and
    \item one agent $n_{i}^{\leftarrow}$ with preference list $P_V^{\leftarrow}(i)$.
    \end{itemize}
  \item a set $N_V$ of $\binom{k}{2}(2m-1)+k(2n+1)+1$ agents with preference
    list $P_{\textup{VAL}}$.
  \end{itemize}
  This completes the construction of the instance $I$. Clearly the
  given reduction can be achieved in polynomial-time. Moreover,
  since $I$ has exactly $\binom{k}{2}+k$ activities and exactly
  $2k+1$ distinct types of preference lists, both parameters
  are bounded by a function of $k$, as required. It remains to show
  that $G$ has a $k$-clique if and only if $I$ has a stable
  assignment.

  Towards showing the forward direction let $C=\{v_1,\dotsc,v_k\}$ be
  a $k$-clique of $G$ such that $v_i \in V_i$ for every $i$ with
  $1\leq i \leq k$ and for every $i$ and $j$ with $1 \leq i < j \leq
  k$ let $e_{i,j}$ be the edge between $v_i$ and $v_j$ in $G$.
  We claim that the assignment $\pi:N \rightarrow A^*$ defined in the
  following is a stable assignment for $I$. We set:
  \begin{itemize}
  \item $\pi(n_{i}^{\rightarrow})=\pi(n_{i}^{\leftarrow})=a_{i}$
    and
  \item for every $i$ and $j$ with $1 \leq i < j \leq k$, $\pi$
    assigns exactly $\alpha_{i,j}(e_{i,j})$ agents from $N_V$ to activity
    $a_{i,j}$,
  \item for every $i$ with $1 \leq i \leq k$, $\pi$
    assigns exactly $\alpha_i(v_i)-2$ agents from $N_V$ to activity
    $a_{i}$,
  \item all remaining agents (which are only in $N_V$) are assigned to $a_\emptyset$.
  \end{itemize}
  Note that
  $|\pi^{-1}(a_{i,j})|=\alpha_{i,j}(e_{i,j})$ and $|\pi^{-1}(a_i)|=\alpha_i(v_i)-2+2=\alpha_i(v_i)$ for every
  $i$ and $j$ with $1 \leq i < j \leq k$. 
  
  Towards showing that the
  assignment $\pi$ is stable, we consider any agent $n$ and distinguish the following
  cases:
  \begin{itemize}
  \item if $n$ is one of $n_{i}^{\rightarrow}$ or $n_{i}^{\leftarrow}$ for
    some $i$ with $1 \leq i \leq k$, then 
    $\pi$ is stable w.r.t. to $n$ because for every $j$ with $1\leq j
    \leq k$ and $j\neq i$,
    the edge $e_{o(i,j)}$ is incident with $v_i$.
  \item if $u \in N_V$, we consider the following cases:
    \begin{itemize}
    \item ($\pi(u)=a_i$) In this case the assignment is stable w.r.t. $n$
      because the tuple $(a_i,|\pi^{-1}(a_i)|)=(a_i,\alpha_i(v_i))$ is in the largest
      equivalence class of $P_V$.
    \item ($\pi(u)=a_{i,j}$) In this case the assignment is stable
      w.r.t. $n$
      because the tuple $(a_{i,j},|\pi{-1}(a_{i,j})|)=(a_{i,j},\alpha_{i,j}(e_{i,j}))$ is in the largest
      equivalence class of $P_V$.
    \item ($\pi(u)=a_\emptyset$) In this case the assignment is stable
      w.r.t. $n$
      because the tuples $(a_i,|\pi^{-1}(a_i)|+1)=(a_i,\alpha_i(v_i)+1)$ and
      $(a_{i,j},|\pi^{-1}(a_{i,j})|+1)=(a_{i,j},\alpha_{i,j}(e_{i,j})+1)$ are less preferred than the tuple
      $(a_\emptyset,1)$ in the preference list $P_V$ for $n$ (for
      every $i$ and $j$ with $1 \leq i < j \leq k$).
    \end{itemize}
  \end{itemize}

    Towards showing the backward direction, we start by formalizing the
  intuition given above about the preference lists $P_V^\rightarrow(i)$ and
  $P_V^\rightarrow(i)$.
  
  \begin{CLM}\label{clm:hard-gasp-pref}
    Let $i$ be an integer with $1 \leq i \leq k$ and let $\pi$ be a
    stable assignment for $I$ satisfying:
    \begin{itemize}
    \item[(A1)] $|\pi^{-1}(a_i)| \in \SB \alpha_i(v)\SM v \in V_i\SE$ and
    \item[(A2)] for every $j$ with $1 \leq j \leq k$ and $j \neq i$
      it holds that $|\pi^{-1}(a_{o(i,j)})|\in \SB
      \alpha_{o(i,j)}(e)\SM e \in E_{o(i,j)}\SE$.
    \end{itemize}
    Then the following holds for $\pi$:
    \begin{itemize}
    \item[(C1)] $\pi(n_i^\rightarrow)=\pi(n_i^\leftarrow)=a_i$. 
    \item[(C2)] For every $j$ with $1 \leq j \leq k$ and $j \neq i$, the
      unique edge $e_{o(i,j)} \in E_{o(i,j)}$ with
      $\alpha_{o(i,j)}(e)=|\pi^{-1}(a_{o(i,j)})|$ is incident
      with the unique vertex $v \in V_i$ such that $\alpha(v)=|\pi^{-1}(a_{o(i,j)})|$.
    \end{itemize}
  \end{CLM}
  Towards showing (C1) assume for a contradiction that this is not
  the case, i.e., one of the agents $n_i^\rightarrow$ or
  $n_i^\leftarrow$, in the following denoted by $n$ is not assigned to $a_i$.
  Because $(a_i,\alpha(v)+1) \in C_V^{+1}(i)$ and
  $C_V^{+1}(i)$ only contains alternatives for activity $a_i$,
  the agent $n$ would prefer to change from his current activity
  to activity $a_i$ contradicting our assumption that $\pi$ is
  stable.

  Towards showing (C2) assume for a contradiction that this is not the
  case and let $j$ be an index witnessing this, i.e., $v \notin
  e_{o(i,j)}$. Let $v'$ be the vertex in $V_i$ that is incident with
  $e_{o(i,j)}$. We distinguish two analogous cases (note that $v'\neq
  v$): (1) $\alpha(v')>\alpha(v)$ and (2) $\alpha(v')<\alpha(v)$.
  In the former case $\pi$
  would not be stable because the agent $n_i^\rightarrow$ would prefer
  to join activity $a_{i,j}$ over his current activity $a_i$; this is
  because $\alpha(v')>\alpha(v)$ and hence the equivalence class
  $C_I(i,v')$, which contains the tuple
  $(a_{o(i,j)},\alpha_{o(i,j)}(e)+1)$, is more preferred in
  $P_V^\rightarrow(i)$ than the equivalence class $C_I(i,v)$,
  which contains the tuple $(a_i,\alpha_i(v))$.
  The proof for the latter case
  is analogous, using the agent $n_i^\leftarrow$ instead of the agent $n_i^\rightarrow$.

  Note that once we show that the assumptions (A1) and (A2) hold for
  every $i$ with $1 \leq i \leq k$, property (C2) ensures that the
  vertices $\alpha^{-1}(|\pi^{-1}(a_1)|),\dotsc,
  \alpha^{-1}(|\pi^{-1}(a_k)|)$ form a $k$-clique in $G$.
  This is achieved with the following claim.
  \begin{CLM}
    For every stable assignment $\pi: N \rightarrow A^*$ for $I$, it holds
    that:
    \begin{itemize}
    \item[(A0)] $\pi(u)=a_\emptyset$ for at least one agent $u \in N_V$.
    \item[(A1)] $\pi^{-1}(a_i)\in \SB \alpha_i(v) \SM v \in V_i\SE$ for every $i$ with $1
      \leq i \leq k$.
    \item[(A2)] $\pi^{-1}(a_{i,j})\in \SB \alpha_{i,j}(e) \SM e \in E_{i,j}\SE$ for every $i$ and $j$
      with $1 \leq i < j \leq k$.      
    \end{itemize}
  \end{CLM}
Towards showing (A0), assume for a
  contradiction that this is not the case, i.e., all
  $\binom{k}{2}(2m-1)+k(2n+1)+1$ agents in $N_V$ are assigned to one
  of the $\binom{k}{2}+k$ activities in $A$. Then there either exists
  an activity $a_{i,j}$ such that more than $2m-1$ agents in $N_V$ are
  assigned to $a_{i,j}$ by $\pi$ or there exists an activity $a_i$
  such that more than $2n+1$ agents in $N_V$ are assigned to $a_i$
  by $\pi$. In the former case let $u \in N_V$ be an agent with
  $\pi(u)=a_{i,j}$. Then the assignment is not stable because $n$
  would prefer being assigned to $a_\emptyset$ over its current
  assignment to $a_{i,j}$. The latter case is analogous. This completes
  the proof for (A0).

  Because of (A0) there is at least one agent $u \in N_V$ such that
  $\pi(u)=a_\emptyset$. Hence because of the preference list $P_V$ for
  $n$, we obtain that $|\pi^{-1}(a_i)| \notin \{0,1,2\}\cup \SB
  \alpha_i(v)-1\SM v \in V_i\SE$, since otherwise $n$ would prefer
  activity $a_i$ over $a_\emptyset$. It follows that either
  $|\pi^{-1}(a_i)| \in \SB \alpha_i(v) \SM v \in V_i\SE$ or
  $|\pi^{-1}(a_i)|>\max \SB \alpha_i(v)\SE v \in V_i\SE=2n+1$. In the
  former case (A1) holds, so assume that the latter case applies.
  Since we can assume w.l.o.g. that $2n+1>2k$ (and there are only $2k$
  agents in $N \setminus N_V$), we obtain that there is at least one
  agent $u \in N_V$ such that $\pi(u)=a_i$. But then because of the
  preference list $P_V$ of $n$, $n$ would prefer activity
  $a_\emptyset$ over $a_i$, contradicting our assumption that $\pi$ is
  a stable assignment. The completes the proof of (A1).
  The proof of (A2) is analogous to the proof of (A1).
\end{proof}

\section{Result 5: Lower Bound for \cGasp{}}
\label{sec:res5}

From our previous result in conjunction with the equivalence between \Gasp{} and \cGasp{} on complete networks, we can immediately conclude that $\cGasp{}$ is also \W{1}-hard parameterized by the number of agent types and the number of activities. However, here we strengthen this result by showing that the problem remains \W{1}-hard even when one additionally parameterizes by the vertex cover number of the network. As noted in the introduction, this also implies the \W{1}-hardness of the problem when parameterized by the treewidth of the network, a question raised in previous work~\cite{GuptaRoySaurabhZehavi17}; in fact, the presented lower-bound result not only shows the (conditional) fixed-parameter intractability of the problem with a more restrictive graph parameter, but also when additionally parameterizing by the number of agent types.

\sv{\begin{THE}[$\star$]}
  \lv{\begin{THE}}\label{the:hard-cgasp}
  \cGasp{} is \Weft\emph{[1]}\hy hard parameterized by the number of activities,
  the number of agent types, and the vertex cover number of the network.
\end{THE}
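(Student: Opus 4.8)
The plan is to adapt the reduction from \mc{} used to prove Theorem~\ref{the:hard-gasp}, building a sparse network on top of the constructed \Gasp{} instance. As in that reduction, we keep $k$ \emph{vertex activities} $a_1,\dots,a_k$ and $\binom{k}{2}$ \emph{edge activities} $a_{i,j}$ whose sizes in a stable assignment identify, respectively, the $i$-th clique vertex (via a bijection $\alpha_i$) and the edge between the $i$-th and $j$-th clique vertices (via $\alpha_{i,j}$), and we keep the validity agents with list $P_{\textup{VAL}}$, the forward/backward-vertex agents with lists $P_V^\rightarrow(i)$ and $P_V^\leftarrow(i)$, and the large pool $N_V$ of filler agents. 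The new ingredient is a set of \emph{connector agents}: for every activity $a$ we add a dedicated connector $c_a$, we make $c_a$ adjacent to every agent that could ever be assigned to $a$ (all of $N_V$, plus $n_i^\rightarrow,n_i^\leftarrow$ when $a=a_i$), and we add no further links. Then $(N,L)$ is bipartite with the connectors on one side and an independent set on the other, so its vertex cover number is at most the number of connectors, i.e.\ $\binom{k}{2}+k$; together with the $2k+1$ preference lists of the base construction and the at most $\binom{k}{2}+k$ distinct connector lists (plus a constant-size gadget described below), all three parameters are bounded by a function of $k$.

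The second step is to retune the numerical encoding and design the connector preferences. Since $c_a$ will always occupy activity $a$, I shift the bijections $\alpha_i,\alpha_{i,j}$ up by one (equivalently, choose their images among suitably spaced integers) so that $|\pi^{-1}(a)|$, which now counts $c_a$, still encodes the intended vertex/edge; the equivalence classes $C_V,C_V^{+1},C_E,C_I,C_A^x$ are redefined accordingly. The connector $c_a$ receives the preference list whose unique top equivalence class is $\SB (a,s) \SM 1 \le s \le |N| \SE$ with every other alternative ranked below $(a_\emptyset,1)$, so $c_a$ never wants to leave $a$, never wants to move to any other activity, and is indifferent to the size of $a$. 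The intended assignment obtained from a $k$-clique $\{v_1,\dots,v_k\}$ then puts $c_{a_i},n_i^\rightarrow,n_i^\leftarrow$ and exactly $\alpha_i(v_i)-1$ agents of $N_V$ into $a_i$, puts $c_{a_{i,j}}$ and $\alpha_{i,j}(e_{i,j})-1$ agents of $N_V$ into $a_{i,j}$, and sends the remaining filler agents to $a_\emptyset$. Each nonempty activity is a star centred at its connector, hence connected, and the stability check goes through exactly as in Theorem~\ref{the:hard-gasp}, since the connectors sit in their top class and the lists of the other agents forbid every NS$^*$-deviation just as before.

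For the converse direction I would mirror the claims in the proof of Theorem~\ref{the:hard-gasp}, with two additional steps. First, from $c_a$'s list, in any stable assignment $c_a\in\{a,a_\emptyset\}$, and $c_a\in a$ whenever $a$ is nonempty (otherwise $c_a$ has an NS$^*$-deviation into $a$ through any of its neighbours already present). Second — the crux — I must argue that every vertex and every edge activity is nonempty: in \cGasp{} an agent can only deviate into an activity that already contains one of its neighbours, so the \Gasp{} argument (a filler agent in $a_\emptyset$ wanting to open an empty $a_i$) is unavailable. I would handle this by adding a small constant-size gadget of cyclically linked ``trigger'' agents whose preference lists make the all-$a_\emptyset$ situation unstable and cascade non-emptiness to every connector, keeping the gadget vertices inside the vertex cover so the bound is unaffected. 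Once non-emptiness and the placement of the connectors are established, the filler agents in $a_\emptyset$ again pin every activity's size to the image of the corresponding $\alpha$, the agents $n_i^\rightarrow,n_i^\leftarrow$ (which reach $a_i$ and the edge activities through $c_{a_i}$ and $c_{a_{o(i,j)}}$) enforce the incidence property exactly as in Claim~\ref{clm:hard-gasp-pref}, and the identified vertices form a $k$-clique.

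The main obstacle is precisely this last point: reconciling the \cGasp{} deviation rule with the need to certify that all $\binom{k}{2}+k$ size-encoding activities are nonempty in every stable assignment, all while keeping the network sparse — note that any nonempty activity with at least two agents must contain a vertex-cover vertex, so one genuinely must spend one connector per activity, and the forcing gadget cannot be allowed to be ``satisfied trivially'' by leaving activities empty. Designing the connector preferences and the trigger gadget so that they interact correctly, without creating spurious stable assignments or introducing NS$^*$-deviations that destroy the intended one, is where the bulk of the technical work lies.
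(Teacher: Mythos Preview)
Your construction is different from the paper's. Rather than adding connector agents, the paper \emph{splits} the filler pool into activity-specific sets: a set $N_i$ (with $2n{+}3$ agents) whose preference list $P^V_{\textup{VAL}}(i)$ only mentions sizes of $a_i$, and a set $N_{i,j}$ (with $2m{+}3$ agents) whose list $P^E_{\textup{VAL}}(i,j)$ only mentions $a_{i,j}$. The network makes $n_i^\rightarrow$ the hub of $N_i\cup\{n_i^\leftarrow\}$, links $n_i^\rightarrow$ to one designated $n_{i,j}\in N_{i,j}$, and makes $n_{i,j}$ the hub of $N_{i,j}$; the vertex cover $\{n_i^\rightarrow,n_i^\leftarrow:i\}\cup\{n_{i,j}:i<j\}$ has size $2k+\binom{k}{2}$, and there are $3k+\binom{k}{2}$ agent types. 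With split pools the size-pinning arguments become local to each activity, and --- the key structural trick --- $n_i^\rightarrow$ is forced into $a_i$ by \emph{connectivity} (any group of size ${>}1$ in $a_i$ must contain its hub), after which its single link to $n_{i,j}$ lets it threaten $a_{i,j}$.

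The ``main obstacle'' you flag does not exist. In the standard \cGasp{} model an agent may always deviate to an \emph{empty} activity, since a singleton is trivially connected; the link requirement only constrains deviations into nonempty groups. The paper's Preliminaries state this imprecisely, but its own backward-direction argument (the step establishing $|\pi^{-1}(a_i)|\neq 0$) relies on exactly this convention --- and note that under your literal reading the all-$a_\emptyset$ assignment would be stable for \emph{every} \cGasp{} instance, trivialising the problem. With the intended definition your trigger gadget is unnecessary: $c_a$ can always join an empty $a$, and once it is there the size-pinning argument from Theorem~\ref{the:hard-gasp} goes through verbatim via the fillers linked to $c_a$. Your connector-based reduction then becomes a legitimate alternative route; you just need to (i) be explicit that $n_i^\rightarrow,n_i^\leftarrow$ are also linked to the edge-activity connectors $c_{a_{o(i,j)}}$ (your opening description omits this, though you use it later), and (ii) replace the paper's connectivity argument for locating $n_i^\rightarrow$ in $a_i$ by a deviation argument through $c_{a_i}$.
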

\begin{proof}
  \sloppypar The proof is via a parameterized reduction from \mc{}, i.e.,
  given an instance $(G,k)$ of \mc{} with partition $V_1,\dotsc,V_k$,
  we construct an equivalent instance $I=(N,A,(\pref_n)_{n\in N},L)$ of \cGasp{} in
  polynomial time with $\binom{k}{2}+k$ activities,
  $\binom{k}{2}+3k$ agent types, and whose network $(N,L)$ has vertex
  cover number at most $\binom{k}{2}+2k$.

  The main ideas behind the reduction are quite similar to the
  reduction used in the proof of Theorem~\ref{the:hard-gasp}. The main
  differences is that to ensure that the network $(N,L)$ has a small
  vertex cover number, it is necessary to split the set $N_V$ of
  agents used in the previous reduction, into sets $N_i$ and $N_{i,j}$
  for every $i$ and $j$ with $1 \leq i <j \leq k$ such that the agents
  in a set $N_i$ can only be assigned to activity $a_i$ (or
  $a_\emptyset$) and the agents in a set $N_{i,j}$ can only be
  assigned to activity $a_{i,j}$ (or
  $a_\emptyset$). This way the agents $n_i^\rightarrow$ and
  $n_i^\leftarrow$ only need to be connected to agents in $N_i$ and
  $N_{i,j}$ (for any $j \neq i$) but not with all agents in $N_V$.

  Let $I'$ be the instance of \Gasp{} as defined in the proof of
  Theorem~\ref{the:hard-gasp}. Since the instance $I$ is defined quite
  similar to the instance $I'$, we will refer to $I'$ for the
  construction of $I$. In particular, $I$ has the same activities as
  $I'$, i.e., $I$ has one activity $a_i$ for every $i$ with $1 \leq i
  \leq k$ and one activity $a_{i,j}$ for every $i$ and $j$ with $1
  \leq i <j \leq k$. 
  For every $i$ and $j$ with $1 \leq i < j \leq k$ let $\alpha_i$ be a
  bijection from $V_i$ to the set $\{3,5,\dotsc, 2(n-1)+1,2n+1\}$
  and similarly let $\alpha_{i,j}$ be a bijection
  from $E_{i,j}$ to the set $\{3,5,\dotsc, 2m+1\}$. Note that
  $\alpha_i$ and $\alpha_{i,j}$ are defined almost the same as in the proof of
  Theorem~\ref{the:hard-gasp}, the only difference being the
  definition of the image $\alpha_{i,j}$, which is now
  $\{3,5,\dotsc, 2m+1\}$ instead of $\{1,3,\dotsc, 2m-1\}$.


  For the definition of the preference lists, we will mainly use the
  equivalence classes defined in the proof of Theorem~\ref{the:hard-gasp}, i.e.,
  the classes $C_V(i)$, $C_V^{+1}$, $C_E(i,j)$, and $C_I(i,v)$. Apart
  from those we will also need a slightly modified version of the
  equivalence class $C_V^{+1}(i)$, which we denote by $C_{V}{+1,2}(i)$
  and set to $\{(a_i,2)\}\cup C_V^{+1}(i)$. We are now ready to define
  the preference lists, required by the reduction.
  \begin{itemize}
  \item For every $i$ with $1 \leq i \leq k$, let $v_1,\dotsc,v_u$ be the
    unique ordering of the vertices in $V_i$ in ascending order
    w.r.t. $\alpha_i$. We define the following two preference lists
    for every $i$ with $1 \leq i \leq k$:
    \begin{itemize}
    \item The \emph{vertex-validity} preference
      list, denoted by $P_{\textup{VAL}}^V(i)$, defined as $C_V(i) > (a_i,2) >
      (a_i,1) > (a_\emptyset,1)$.
    \item The \emph{forward-vertex} preference list, denoted by $P_V^{\rightarrow}(i)$,
      defined as $C_{V}^{+1,2}(i) > C_I(i,v_u) > C_I(i,v_{n-1}) > \dotsb > C_{I}(i,v_1) >
        (a_\emptyset,1)$.
    \item The \emph{backward-vertex} preference list, denoted by $P_V^{\leftarrow}(i)$,
      defined as $C_{V}^{+1,2}(i) > C_I(i,v_1) > C_I(i,v_2) > \dotsb > C_I(i,v_u) >
      (a_\emptyset,1)$.
    \end{itemize}
  \item 
    For every $i$ and $j$ with $1 \leq i < j \leq k$ the \emph{edge-validity} preference
    list, denoted by $P^E_{\textup{VAL}}(i,j)$, and defined as $C_E(i,j) > (a_{i,j},2) >
    (a_{i,j},1) > (a_\emptyset,1)$.
  \end{itemize}

  We are now ready to define the set $N$ of agents:
  \begin{itemize}
  \item for every $i$ with $1 \leq i \leq k$:
    \begin{itemize}
    \item one agent $n_{i}^{\rightarrow}$ with preference list
      $P_V^{\rightarrow}(i)$,
    \item one agent $n_{i}^{\leftarrow}$ with preference list
      $P_V^{\leftarrow}(i)$, and 
    \item a set $N_i$ of $2n+3$ agents with preference list $P_{\textup{VAL}}^V(i)$.
    \end{itemize}
  \item for every $i$ and $j$ with $1 \leq i < j \leq k$, 
    a set $N_{i,j}$ of $2m+3$ agents with preference list
    $P^E_{\textup{VAL}}(i,j)$. In the following let $n_{i,j}$ be one of the agents in
    $N_{i,j}$.
  \end{itemize}
  Finally, the links $L$ between the agents are given by:
  \begin{itemize}
  \item for every $i$ with $1 \leq i \leq k$:
    \begin{itemize}
    \item a link between $n_i^{\rightarrow}$ and $n_i^{\leftarrow}$,
    \item for every $u \in N_i$ a link between $n_i^{\rightarrow}$ and
      $n$,
    \item for every $j$ with $1\leq j \leq k$ and $j \neq k$ a link
      between $n_i^\rightarrow$ and $n_{o(i,j)}$,
    \end{itemize}
  \item for every $i$, $j$ with $1 \leq i <j \leq k$ and every $u \in N_{i,j}\setminus \{n_{i,j}\}$
     a link between $n_{i,j}$ and $n$.
  \end{itemize}
  An illustration of the network $(N,L)$ as defined above is given in
  Figure~\ref{fig:hardness-network}.
  
  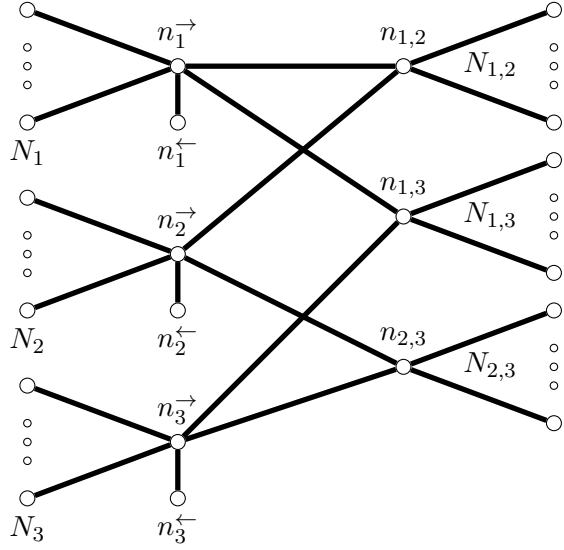
\begin{figure}[tb]
    \centering
    \begin{tikzpicture}
      \tikzstyle{every node}=[]
      \tikzstyle{gn}=[circle, inner sep=2pt,draw, node distance=2cm]
      \tikzstyle{dots}=[circle, inner sep=1pt,draw, node distance=2cm]
      \tikzstyle{every edge}=[draw, line width=2pt]

      \draw
      node[gn, label=above:$n_1^\rightarrow$] (n1r) {}
      node[dots, node distance=2cm, left of=n1r] (n1d2) {}
      node[dots, node distance=0.25cm, above of=n1d2] (n1d1) {}
      node[dots, node distance=0.25cm, below of=n1d2] (n1d3) {}
      node[gn, node distance=0.5cm, above of=n1d1] (n1r1) {}
      node[gn, node distance=0.5cm, below of=n1d3, label=below:$N_1$] (n1r2) {}
      
      node[gn, node distance=0.75cm, below of=n1r,
      label=below:$n_1^\leftarrow$] (n1l) {}

      (n1r) edge (n1r1)
      (n1r) edge (n1r2)
      (n1r) edge (n1l)
      ;

      \draw
      node[gn, node distance=3cm, right of=n1r, label=above:$n_{1,2}$] (n12) {}
      node[dots, node distance=2cm, right of=n12, label={[label distance=0.3cm]left:$N_{1,2}$}] (n12d2) {}
      node[dots, node distance=0.25cm, above of=n12d2] (n12d1) {}
      node[dots, node distance=0.25cm, below of=n12d2] (n12d3) {}
      node[gn, node distance=0.5cm, above of=n12d1] (n12s1) {}
      node[gn, node distance=0.5cm, below of=n12d3] (n12s2) {}
      
      (n12) edge (n12s1)
      (n12) edge (n12s2)
      ;

      \draw
      node[gn, node distance=2.5cm, below of=n1r,label=above:$n_2^\rightarrow$] (n2r) {}
      node[dots, node distance=2cm, left of=n2r] (n2d2) {}
      node[dots, node distance=0.25cm, above of=n2d2] (n2d1) {}
      node[dots, node distance=0.25cm, below of=n2d2] (n2d3) {}
      node[gn, node distance=0.5cm, above of=n2d1] (n2r1) {}
      node[gn, node distance=0.5cm, below of=n2d3, label=below:$N_2$] (n2r2) {}
      
      node[gn, node distance=0.75cm, below of=n2r,
      label=below:$n_2^\leftarrow$] (n2l) {}

      (n2r) edge (n2r1)
      (n2r) edge (n2r2)
      (n2r) edge (n2l)
      ;

      \draw
      node[gn, node distance=2cm, below of=n12, label=above:$n_{1,3}$] (n13) {}
      node[dots, node distance=2cm, right of=n13, label={[label distance=0.3cm]left:$N_{1,3}$}] (n13d2) {}
      node[dots, node distance=0.25cm, above of=n13d2] (n13d1) {}
      node[dots, node distance=0.25cm, below of=n13d2] (n13d3) {}
      node[gn, node distance=0.5cm, above of=n13d1] (n13s1) {}
      node[gn, node distance=0.5cm, below of=n13d3] (n13s2) {}
      
      (n13) edge (n13s1)
      (n13) edge (n13s2)
      ;

      \draw
      node[gn, node distance=2.5cm, below of=n2r,label=above:$n_3^\rightarrow$] (n3r) {}
      node[dots, node distance=2cm, left of=n3r] (n3d2) {}
      node[dots, node distance=0.25cm, above of=n3d2] (n3d1) {}
      node[dots, node distance=0.25cm, below of=n3d2] (n3d3) {}
      node[gn, node distance=0.5cm, above of=n3d1] (n3r1) {}
      node[gn, node distance=0.5cm, below of=n3d3, label=below:$N_3$] (n3r2) {}
      
      node[gn, node distance=0.75cm, below of=n3r,
      label=below:$n_3^\leftarrow$] (n3l) {}

      (n3r) edge (n3r1)
      (n3r) edge (n3r2)
      (n3r) edge (n3l)
      ;

      \draw
      node[gn, node distance=2cm, below of=n13, label=above:$n_{2,3}$] (n23) {}
      node[dots, node distance=2cm, right of=n23, label={[label distance=0.3cm]left:$N_{2,3}$}] (n23d2) {}
      node[dots, node distance=0.25cm, above of=n23d2] (n23d1) {}
      node[dots, node distance=0.25cm, below of=n23d2] (n23d3) {}
      node[gn, node distance=0.5cm, above of=n23d1] (n23s1) {}
      node[gn, node distance=0.5cm, below of=n23d3] (n23s2) {}
      
      (n23) edge (n23s1)
      (n23) edge (n23s2)
      ;

      \draw
      (n1r) edge (n12)
      (n1r) edge (n13)
      (n2r) edge (n12)
      (n2r) edge (n23)
      (n3r) edge (n13)
      (n3r) edge (n23)
      ;
    \end{tikzpicture}
    \caption{An illustration of the network $(N,L)$ obtained in the
      reduction of Theorem~\ref{the:hard-cgasp} for the case that $k=3$.}
    \label{fig:hardness-network}
  \end{figure}
  
  This completes the construction of the instance $I$. Observe that
  the set $\SB n_i^\rightarrow, n_i^\leftarrow \SM 1 \leq i \leq k\SE
  \cup \SB n_{i,j} \SM 1 \leq i < j \leq k\SE$ is a vertex cover of
  the network $(N,L)$ of size at most $\binom{k}{2}+2k$, and hence the
  network has vertex cover number at most $\binom{k}{2}+2k$.
Clearly the
  given reduction can be achieved in polynomial-time. Moreover,
  since $I$ has exactly $\binom{k}{2}+k$ activities, exactly
  $\binom{k}{2}+3k$ distinct types of preference lists, and the vertex
  cover number of the network $(N,L)$ is at most $\binom{k}{2}+2k$, all parameters
  are bounded by a function of $k$, as required. It remains to show
  that $G$ has a $k$-clique if and only if $I$ has a stable
  assignment.
Towards showing the forward direction let $C=\{v_1,\dotsc,v_k\}$ be
  a $k$-clique of $G$ such that $v_i \in V_i$ for every $i$ with
  $1\leq i \leq k$ and for every $i$ and $j$ with $1 \leq i < j \leq
  k$ let $e_{i,j}$ be the edge between $v_i$ and $v_j$ in $G$.
  We claim that the assignment $\pi:N \rightarrow A^*$ defined in the
  following is a stable assignment for $I$. We set:
  \begin{itemize}
  \item $\pi(n_{i}^{\rightarrow})=\pi(n_{i}^{\leftarrow})=a_{i}$,
  \item for every $i$ with $1 \leq i \leq k$, $\pi$
    assigns exactly $\alpha_i(v_i)-2$ agents from $N_i$ to activity
    $a_{i}$,
  \item for every $i$ and $j$ with $1 \leq i < j \leq k$,
    $\pi(n_{i,j})=a_{i,j}$, and
  \item for every $i$ and $j$ with $1 \leq i < j \leq k$, $\pi$
    assigns exactly $\alpha_{i,j}(e_{i,j})-1$ agents from $N_{i,j}$ to activity
    $a_{i,j}$,
  \item all remaining agents are assigned to $a_\emptyset$.
  \end{itemize}
  Note that for every
  $i$ and $j$ with $1 \leq i < j \leq k$ the agents assigned to
  activities $a_i$ and $a_{i,j}$ are connected and moreover
  $|\pi^{-1}(a_{i,j})|=\alpha_{i,j}(e_{i,j})$ and $|\pi^{-1}(a_i)|=\alpha_i(v_i)$.
  Let $n$ be an agent, we consider the following
  cases:
  \begin{itemize}
  \item if $n$ is one of $n_{i}^{\rightarrow}$ or $n_{i}^{\leftarrow}$ for
    some $i$ with $1 \leq i \leq k$, then
    the assignment $\pi$ is stable w.r.t. to $n$ because for every $j$ with $1\leq j
    \leq k$ and $j\neq i$, the edge
    $e_{o(i,j)}$ is incident
    to $v_{i}$ in $G$ and hence the tuples $(a_i,\alpha_i(v_i))$ and
    $(a_{o(i,j)},\alpha_{o(i,j)}(e_{o(i,j)})+1)$ are in the same
    equivalence class of $P_V^{\rightarrow}$ and $P_V^{\leftarrow}$.
  \item if $u \in N_i$ and $\pi(u)=a_i$, then
    the assignment $\pi$ is stable w.r.t. $n$ because the tuple $(a_i,\alpha_i(v_i))$
    is in the most preferred equivalence class of $P^V_{\textup{VAL}}(i)$.
  \item if $u \in N_i$ and $\pi(u)=a_\emptyset$, then
    the assignment $\pi$ is stable w.r.t. $n$ because $(a_\emptyset,1)$ is preferred to
    $(a_i,\alpha_i(v_i)+1)$ as well as to any tuple with any other activity in $P^V_{\textup{VAL}}(i)$.
  \item if $u \in N_{i,j}$ and $\pi(u)=a_{i,j}$, then
    the assignment $\pi$ is stable w.r.t. $n$ because the the tuple
    $(a_{i,j},\alpha_{i,j}(e_{i,j}))$ is in the most preferred
    equivalence class of $P^E_{\textup{VAL}}(i,j)$.
  \item if $u \in N_{i,j}$ and $\pi(u)=a_\emptyset$, then
    the assignment $\pi$ is stable w.r.t. $n$ because
    $(a_\emptyset,1)$ is preferred to
    $(a_{i,j},\alpha_{i,j}(e_{i,j})+1)$ as well as to any tuple with any other activity in $P^E_{\textup{VAL}}(i,j)$.
  \end{itemize}

  The reverse direction follows immediately from the following claim.
  \begin{CLM}
    For every stable assignment $\pi: N \rightarrow A^*$ for $I$, it holds
    that:
    \begin{itemize}
    \item[(C1)] for every $i$ with $1 \leq i \leq k$, there is an
      agent $u \in N_i$ such that $\pi(u)=(a_\emptyset,1)$,
    \item[(C2)] for every $i$ and $j$ with $1 \leq i < j \leq k$,
      there is an agent $u \in N_{i,j}$ such that
      $\pi(u)=(a_\emptyset,1)$,
    \item[(C3)] $|\pi^{-1}(a_i)|\in \SB\alpha_i(v)\SM v \in V_i\SE$ for every $i$ with $1
      \leq i \leq k$,
    \item[(C4)]
      $\pi(n_{i}^{\rightarrow})=\pi(n_{i}^{\leftarrow})=a_{i}$
      for every $i$  with $1 \leq i \leq k$,
    \item[(C5)] $|\pi^{-1}(a_{i,j})|\in \SB \alpha_{i,j}(e)\SM e \in E_{i,j}\SE$ for every $i$ and $j$
      with $1 \leq i < j \leq k$,      
    \item[(C6)] $\pi(n_{i,j})=a_{i,j}$
      for every $i$ and $j$  with $1 \leq i <j \leq k$,
    \item[(C7)] $\alpha_i^{-1}(|\pi^{-1}(a_i)|) \in
      \alpha_{o(i,j)}^{-1}(|\pi^{-1}(a_{o(i,j)})|)$ for every $i$ and
      $j$ with $1\leq i,j \leq k$ and $i \neq j$.
    \end{itemize}
  \end{CLM}
Towards showing (C1) suppose for a
  contradiction that this is not the case. Then because all agents in
  $N_i$ must either be assigned to $a_i$ or $a_\emptyset$ by $\pi$, we
  obtain that all $2n+3$ agents  in
  $N_i$ must be assigned to $a_i$. However
  such an assignment would not be stable because any tuple $(a_i,x)$
  with $x\geq 2n+3$ is less preferred than $(a_\emptyset,1)$ in every
  preference list.

  The proof of (C2) is very similar to the proof of (C1). Namely, suppose for a
  contradiction that this is not the case. Then all $2m+3$ agents in
  $N_{i,j}$ must be assigned to $a_{i,j}$, however
  such an assignment would not be stable because any tuple $(a_{i,j},x)$
  with $x\geq 2m+3$ is less preferred than $(a_\emptyset,1)$ in every
  preference list.

  Towards showing (C3) we first show that $|\pi^{-1}(a_i)|>1$.
  Because of (C1) there is an agent $u \in N_i$ with
  $\pi(u)=\emptyset$. Hence $|\pi^{-1}(a_i)|\neq 0$ since otherwise
  the agent $n$ would prefer to switch to $a_i$. Now suppose for a
  contradiction that $|\pi^{-1}(a_i)|=1$. Because the agents in $N_i$
  are the only agents that prefer the tuple $(a_i,1)$ over
  $(a_\emptyset,1)$, it holds that $\pi^{-1}(a_i) \subseteq N_i$. But
  since the tuple $(a_i,2)$ is a tuple that is in the highest
  equivalence class $C_V^{+1,2}(i)$ of the preference list $P_i^\rightarrow$
  for $n_i^\rightarrow$ and $n_i^\rightarrow$ is linked with every
  vertex in $N_i$, the agent $n_i^\rightarrow$ would prefer to switch
  to $a_i$, contradicting our assumption that $\pi$ is
  stable. Consequently $|\pi^{-1}(a_i)|>1$ and we show next that
  $\pi^{-1}(a_i)$ contains $n_i^\rightarrow$. Observe that the agents in
  $\{n_i^\leftarrow,n_i^\rightarrow\}\cup N_i$ are the only agents in
  $N$ that can be assigned to $a_i$; all other agents prefer the tuple
  $(a_\emptyset,1)$ over any tuple involving the activity $a_i$.
  Since $|\pi^{-1}(a_i)|>1$ the set $\pi^{-1}(a_i)$ can only be connected if it contains
  $n_i^\rightarrow$. Hence we have $|\pi^{-1}(a_i)|>1$ and
  $n_i^\rightarrow \in \pi^{-1}(a_i)$. Because of (C1) there is an
  agent $u \in N_i$ with $\pi(u)=a_\emptyset$. Since $n$ is linked
  with $n_i^\rightarrow$ and prefers $a_i$ over $a_\emptyset$,
  whenever $|\pi^{-1}(a_i)| \in \{0,1\}\cup \SB
  \alpha_i(v)-1\SM v \in V_i\SE=\{0,1,2,4,6, \dotsc, 2n\}$, we
  obtain that either $|\pi^{-1}(a_i)| \in \SB \alpha_i(v) \SM v \in V_i\SE=\{3,5, \dotsc, 2n+1\}$ or
  $|\pi^{-1}(a_i)|>2n+1$. In the former case (C3) holds, so assume
  that the latter case applies.
  Since the agents in $\{n_i^\leftarrow,n_i^\rightarrow\}\cup N_i$ are
  the only agents in $N$ that can be assigned to $a_i$ and we can assume w.l.o.g. that
  $2n+1>2$, we obtain that there is at least one agent $u \in N_i$
  such that $\pi(u)=a_i$. But then because of the
  preference list $P_V(i)$ of $n$, $n$ would prefer activity
  $a_\emptyset$ over $a_i$, contradicting our assumption that $\pi$ is
  a stable assignment. The completes the proof of (C3).

  Towards showing (C5) first observe that the agents in
  $\SB n_i^\leftarrow,n_i^\rightarrow\SM 1\leq i \leq k\}\cup N_{i,j}$ are the only agents in
  $N$ that can be assigned to $a_{i,j}$; all other agents prefer the tuple
  $(a_\emptyset,1)$ over any tuple involving the activity $a_{i,j}$.
  Moreover because of (C4), we obtain that only the agents in
  $N_{i,j}$ can actually be assigned to $a_{i,j}$. Moreover because
  all agents in $N_{i,j}$ must either be assigned to $a_{i,j}$ or to
  $a_\emptyset$ and due to (C2) there is always an agent $n' \in
  N_{i,j}$ with $\pi(n')=a_\emptyset$, and since the set $N_{i,j}$ is
  connected by $L$, we obtain that there is always an agent $n \in
  N_{i,j}$ with $\pi(u)=a_\emptyset$ that is linked to an agent in
  $\pi^{-1}(a_{i,j})$. Hence it follows from the preference list
  $P_E(i,j)$ of the agents in $N_{i,j}$ that $|\pi^{-1}(a_{i,j})|\notin
  \{0,1\}\cup \SB \alpha_{i,j}(e)-1 \SM e \in E_{i,j}\SE$, since otherwise the agent
  $n$ would prefer to switch to $a_{i,j}$. 
  Hence either $|\pi^{-1}(a_{i,j})| \in \SB \alpha_{i,j}(e) \SM e \in E_{i,j}\SE=\{3,5, \dotsc, 2m+1\}$ or
  $|\pi^{-1}(a_{i,j})|>2m+1$. In the former case (C5) holds, so assume
  that the latter case applies. Then there is an agent $u \in N_{i,j}$
  with $\pi(u)=a_{i,j}$, but since $|\pi^{-1}(a_{i,j})|>2m+1$ the
  agent would prefer to be assigned to $a_\emptyset$, contradicting
  our assumption that the assignment $\pi$ is stable. This completes
  the proof of (C5).

  (C6) can be obtained as follows. Because of
  (C4), we have that $|\pi^{-1}(a_{i,j})|\geq
  3$. Since (as observed already in the proof of (C5)) only the agents
  in $N_{i,j}$ can be assigned to activity $a_{i,j}$, we obtain that
  $\pi(n_{i,j})=a_{i,j}$ since otherwise $\pi^{-1}(a_{i,j})$ would not
  be connected.
  
  Towards showing (C7) assume for a contradiction that there are $i$
  and $j$ with $1 \leq i, j \leq k$ and $i\neq j$ such that $v \notin
  e$, where $v=\alpha_i^{-1}(|\pi^{-1}(a_i)|)$ and
  $e=\alpha_{o(i,j)}(|\pi^{-1}(a_{o(i,j)})$. Observe first that
  because of (C3) and (C4) $v$ and $e$ are properly defined and
  moreover $v \in V_i$ and $e \in E_{i,j}$. Let $v'$ be the endpoint
  of $e$ in $V_i$, which because $v \notin e$ is not equal to $v$.
  We distinguish two analogous cases: (1) $\alpha_i(v')<\alpha_i(v)$
  and (2) $\alpha_i(v') > \alpha_i(v)$. In the former case consider
  the agent $n_i^{\leftarrow}$. Because of (C4) and (C6), it holds that
  $\pi(n_i^{\leftarrow})=a_i$ and $\pi(n_{i,j})=a_{o(i,j)}$, which
  implies that $n_i^\leftarrow$ is linked with an agent, namely the
  agent $n_{i,j}$ assigned to $a_{o(i,j)}$ by $\pi$. Together with
  the facts that the equivalence class
  $C_I(i,v')$ contains the tuple $(a_{o(i,j)},\alpha(e))$, the
  equivalence class $C_{I}(i,v)$ contains the tuple $(a_i,\alpha(v))$,
  and $C_{E}(i,v')$ is preferred over $C_{E}(i,v)$ in the preference list
  for $n_i^{\leftarrow}$, we obtain that the agent $n_i^{\leftarrow}$
  prefers the activity $a_{o(i,j)}$ over its current activity $a_i$,
  contradicting the stability of $\pi$. The proof for the latter case 
  is analogous using the agent $n_i^{\rightarrow}$ instead of the
  agent $n_i^{\leftarrow}$. This completes the proof for (C7).
\end{proof}

\section{Conclusion}

We obtained a comprehensive picture of the parameterized complexity of Group
Activity Selection problems parameterized by the number of agent
types, both with and without the number of activities as an additional parameter. Our positive results suggest that using the number of agent types is a highly appealing
parameter for \Gasp{} and its variants; indeed, this parameter will
often be much smaller than the number of agents due to the way
preference lists are collected or estimated (as also argued in initial
work on \Gasp{}~\cite{DarmannEKLSW12}). For instance, in the
large-scale event management setting of \Gasp{} (or \sGasp{}), one
would expect that preference lists for event participants are
collected via simple questionnaires---and so the number of agent types
would remain fairly small regardless of the size of the event.

We believe that the techniques used to obtain the presented results, and especially the Subset Sum tools developed to this end, are of broad interest to the algorithms community. For instance, \textsc{Multidimensional Subset Sum} (\textsc{MSS}) has been used as a starting point for \W{1}-hardness reductions in at least two different settings over the past year~\cite{GanianOrdyniakRamanujan17,GanianKO18}, but the simple and partitioned variant of the problem (i.e., \textsc{SMPSS}) is much more restrictive and hence forms a strictly better starting point for any such reductions in the future. This is also reflected in our proof of the \W{1}-hardness of \textsc{SMPSS}, which is \emph{significantly} more involved than the analogous result for \textsc{MSS}. Likewise, we expect that the developed algorithms for \textsc{Tree Subset Sum} and \textsc{Multidimensional Partitioned Subset Sum} may find applications as subroutines for (parameterized and/or classical) algorithms in various settings.


Note that there is
now an almost complete picture of the complexity of Group Activity
Selection problems
w.r.t. any combination of the parameters number of agents, number of
activities, and number of agent types (see also
Table~\ref{tbl:intro-res}). There is only one piece missing, namely,
the parameterized complexity of \sGasp{} parameterized by the number
of agents, which we resolve for completeness with the following
theorem.
\begin{THE}
  \sGasp{} is fixed-parameter tractable parameterized by the number of agents.
\end{THE}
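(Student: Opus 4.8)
The plan is to reduce, in polynomial time, to an equivalent \sGasp{} instance in which the number of activities is bounded by a function of the number of agents $|N|$; since the number of agent types is trivially at most $|N|$, the statement then follows immediately from Theorem~\ref{the:fpt-agents-act}. The starting observation is that no activity can ever host more than $|N|$ agents, so the only sizes that matter for individual rationality and for NS-deviations are those in $\{1,\dots,|N|\}$. Accordingly, I would call two activities $a,a'\in A$ \emph{equivalent} if $P_n(a)\cap\{1,\dots,|N|\}=P_n(a')\cap\{1,\dots,|N|\}$ for every agent $n\in N$; equivalent activities are fully interchangeable in every assignment, and there are at most $2^{|N|^2}$ equivalence classes.

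Next I would apply the following reduction rule exhaustively: while some equivalence class contains more than $|N|+1$ activities, delete one arbitrary activity of that class. The rule is sound. For the forward direction, if $\pi$ is stable for the current instance and the affected class $\theta$ has more than $|N|$ activities, then by pigeonhole some activity of $\theta$ is empty under $\pi$; as the activities of $\theta$ are interchangeable we may assume it is the deleted one, and removing an empty activity can neither create NS-deviations nor affect individual rationality, so the restriction of $\pi$ is stable for the reduced instance. For the backward direction, let $\pi'$ be stable for the reduced instance and re-introduce the deleted empty activity $a^\star$ of class $\theta$ without changing the assignment; the only potential new instability is an agent assigned to $a_\emptyset$ acquiring an NS-deviation to $a^\star$. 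But in the reduced instance $\theta$ still has at least $|N|+1$ activities, which cannot all be non-empty under $\pi'$, so some activity $a'$ of $\theta$ is empty; stability of $\pi'$ forbids any $a_\emptyset$-agent from approving size $1$ for $a'$, and hence, since $a'$ and $a^\star$ are equivalent, also for $a^\star$. Thus $\pi'$ remains stable, which establishes soundness.

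After exhaustive application one obtains, in time polynomial in the input, an equivalent instance on the same agent set with $|A|\le(|N|+1)\cdot 2^{|N|^2}$ and therefore at most $|N|$ agent types; running the algorithm of Theorem~\ref{the:fpt-agents-act} on it then decides the instance in time depending only on $|N|$, which together with the polynomial-time reduction yields the desired fixed-parameter algorithm. The main obstacle is the backward direction of the soundness argument: because an agent sitting at $a_\emptyset$ is allowed to deviate to \emph{any} activity, including currently empty ones, one has to be sure that pruning redundant copies of an activity type never discards the empty copy on which an otherwise-stable assignment implicitly relies — and it is precisely the bound $|N|$ on the number of agents (via pigeonhole) that guarantees an empty copy always survives.
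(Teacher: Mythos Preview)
Your argument is correct. The equivalence relation on activities is well-defined, the interchangeability claim holds because in \sGasp{} both individual rationality and NS-deviations depend only on the sets $P_n(a)\cap[|N|]$, and your forward/backward soundness arguments go through as written (in particular, you correctly use that in \sGasp{} only agents at $a_\emptyset$ can have NS-deviations, so re-inserting an empty activity can only create one kind of instability, which you rule out via the surviving empty copy). After exhaustive application you indeed obtain an equivalent instance with $|A|\le(|N|+1)\cdot 2^{|N|^2}$ and $|T(N)|\le |N|$, to which Theorem~\ref{the:fpt-agents-act} applies.

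The paper takes a quite different route: it branches over the set $M_\emptyset\subseteq N$ of agents sent to $a_\emptyset$ together with a partition $\mathcal{M}$ of $N\setminus M_\emptyset$ into groups (at most $|N|^{|N|}$ choices), then for each guess reduces the question ``is there a stable assignment realising this grouping?'' to finding a matching in a bipartite graph between the parts of $\mathcal{M}$ and the activities that saturates $\mathcal{M}$ and a certain set $A_{\neq\emptyset}$ of activities that must be non-empty. Your approach is more modular---it leverages Theorem~\ref{the:fpt-agents-act} as a black box rather than designing a dedicated algorithm---but it pays heavily in the parameter dependence: feeding $|A|\le(|N|+1)\cdot 2^{|N|^2}$ into the $2^{|T(N)|(1+|A|)}$ factor of Theorem~\ref{the:fpt-agents-act} yields a triply-exponential function of $|N|$, whereas the paper's branching-plus-matching algorithm runs in time $|N|^{|N|}\cdot\mathrm{poly}(|N|,|A|)$. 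The paper's method also makes transparent \emph{why} the number of activities is not an obstruction (any stable grouping needs only a matching into $A$), while your method hides this behind the kernelisation step.
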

\begin{proof}
  Let $I=(N,A,(P_n)_{n\in N})$ be a \sGasp{} instance.
  The main idea behind the algorithm is to guess (i.e., branch over) the set $M_\emptyset$
  of agents that are assigned to $a_\emptyset$ as well as a partition
  $\MMM$ of the remaining agents, i.e., the
  agents in $N \setminus M_\emptyset$, and then check whether there is
  a stable assignment $\pi$ for $I$ such that:
  \begin{itemize}
  \item[(P1)] $\pi^{-1}(a_\emptyset)=M_\emptyset$ and 
  \item[(P2)] $\SB \pi^{-1}(a) \SM a \in A \SE \setminus
    \{\emptyset\}=\MMM$, i.e., $\MMM$ corresponds to the grouping of
    agents into activities by $\pi$.
  \end{itemize}
  Since there are at most $n^n$ possibilities for $M_\emptyset$ and
  $\MMM$ and those can be enumerated in time $\bigoh(n^n)$, it remains to
  show how to decide whether there is a stable assignent for $I$
  satisfying (P1) and (P2) for any given $M_\emptyset$ and
  $\MMM$. Towards showing this, we first consider
  the implications for a stable assignment resulting from assigning
  the agents in $M_\emptyset$ to $a_\emptyset$. Namely, let $P_n'$ for
  every $n \in N$ be the approval set obtained from $P_n$ after
  removing all alternatives $(a,i)$ such that $i\neq 0$ and there is an agent
  $n_\emptyset \in M_\emptyset$ with $(a,i+1) \in
  P_{n_\emptyset}$. Moreover, let $A_{\neq \emptyset}$ be the set of
  all activities that cannot be left empty if the agents in
  $M_\emptyset$ are assigned to $a_\emptyset$, i.e., the set of all
  activities such that there is an agent $n_\emptyset \in M_\emptyset$
  with $(a,1) \in P_{n_\emptyset}$. Now consider a set $M \in \MMM$,
  and observe that the set $A_M$ of activities that the agents in $M$ can be
  assigned to in any stable assignment satisfying (P1) and (P2) is
  given by: $A_M=\SB a \SM |M| \in \bigcap_{n \in M}P_n(a)'\SE$.
  Let $B$ be the bipartite graph having $\MMM$ on one side and $A$ on
  the other side and having
  an edge between a vertex $M \in \MMM$ and a vertex $a \in A$ if $a
  \in A_M$.
  We claim that $I$ has a stable assignment satisfying (P1)
  and (P2) if and only if $B$ has a matching that saturates $\MMM \cup
  A_{\neq \emptyset}$. Since deciding the existence of such a matching
  can be achieved in time $\bigoh(\sqrt{|V(B)|}|E(B)|)=\bigoh(\sqrt{|N
    \cup A|}|N||A|)$ (see e.g.~\cite[Lemma
  4]{GanianOrdyniakRamanujan17}), establishing this claim is the last component required for the proof of the theorem.

  Towards showing
  the forward direction, let $\pi$ be a stable assignment for $I$
  satisfying (P1) and (P2). Then $O=\SB \{a,\phi^{-1}(a)\} \SM a \in A \SE$ is a matching in $B$ that saturates $\MMM \cup A_{\neq
    \emptyset}$. Note that $O$ saturates $\MMM$ due to (P2), moreover,
  $O$ saturates $A_{\neq \emptyset}$ since otherwise there would be an
  activity $a \in A_{\neq \emptyset}$ with $\pi^{-1}(a)=\emptyset$,
  which due to the definition of $A_{\neq \emptyset}$ and (P1) implies
  there is an agent $n$ with $\pi(n)=a_\emptyset$ such that $1 \in
  P_n(a)$, contradicting our assumption that $\pi$ is stable.

  Towards showing the reverse direction, let $O$ be a matching in $B$
  that saturates $\MMM \cup A$. Then the assignment $\pi$ mapping
  all agents in $M$ (for every $M \in \MMM$) to its partner in $O$ and
  all other agents to $a_\emptyset$ clearly already satisfies (P1)
  and (P2). It remains to show that it is also stable. Note that $\pi$
  is individually rational because of the construction of $B$. Moreover,
  assume for a contradiction that there is an agent $n \in
  N_\emptyset$ with $\pi(n)=a_\emptyset$ and an activity $a \in A$
  such that $(a,|\pi^{-1}(a)|+1) \in P_n$. If $|\pi^{-1}(a)|=0$, then $a
  \in A_{\neq \emptyset}$ and hence $|\pi^{-1}(a)|>0$ (because $O$
  saturates $A_{\neq \emptyset}$), a contradiction. If on the other hand
  $|\pi^{-1}(a)|\neq 0$, then $\{M,a\} \in O$ (for some $M \in \MMM$),
  but $(a,|\pi^{-1}(a)|) \notin P_n'$ and hence $\{M,a\} \notin E(B)$, also
  a contradiction.
\end{proof}

For future work, we believe that it would be interesting to see how
the complexity map changes if one were to consider the number of
activity types instead of the number of activities in our parameterizations.

\bibliographystyle{acm} 
\bibliography{literature} 

\begin{thebibliography}{10}

\bibitem{AignerZiegler10}
{\sc Aigner, M., and Ziegler, G.~M.}
\newblock {\em Proofs from the {B}ook {(3.} ed.)}.
\newblock Springer, 2004.

\bibitem{CyganFKLMPPS15}
{\sc Cygan, M., Fomin, F.~V., Kowalik, L., Lokshtanov, D., Marx, D., Pilipczuk,
  M., Pilipczuk, M., and Saurabh, S.}
\newblock {\em Parameterized Algorithms}.
\newblock Springer, 2015.

\bibitem{Darmann15}
{\sc Darmann, A.}
\newblock Group activity selection from ordinal preferences.
\newblock In {\em Algorithmic Decision Theory - 4th International Conference,
  {ADT} 2015, Lexington, KY, USA, September 27-30, 2015, Proceedings\/} (2015),
  T.~Walsh, Ed., vol.~9346 of {\em Lecture Notes in Computer Science},
  Springer, pp.~35--51.

\bibitem{DarmannDockerDornLangSchneckenburger17}
{\sc Darmann, A., D{\"{o}}cker, J., Dorn, B., Lang, J., and Schneckenburger,
  S.}
\newblock On simplified group activity selection.
\newblock In {\em Algorithmic Decision Theory - 5th International Conference,
  {ADT} 2017, Luxembourg, Luxembourg, October 25-27, 2017, Proceedings\/}
  (2017), J.~Rothe, Ed., vol.~10576 of {\em Lecture Notes in Computer Science},
  Springer, pp.~255--269.

\bibitem{darmann2017}
{\sc Darmann, A., Elkind, E., Kurz, S., Lang, J., Schauer, J., and Woeginger,
  G.}
\newblock Group activity selection problem with approval preferences.
\newblock {\em International Journal of Game Theory\/} (2017), 1--30.

\bibitem{DarmannEKLSW12}
{\sc Darmann, A., Elkind, E., Kurz, S., Lang, J., Schauer, J., and Woeginger,
  G.~J.}
\newblock Group activity selection problem.
\newblock In {\em Internet and Network Economics - 8th International Workshop,
  {WINE} 2012\/} (2012), vol.~7695 of {\em Lecture Notes in Computer Science},
  Springer, pp.~156--169.

\bibitem{Diestel12}
{\sc Diestel, R.}
\newblock {\em Graph Theory, 4th Edition}, vol.~173 of {\em Graduate texts in
  mathematics}.
\newblock Springer, 2012.

\bibitem{DowneyFellows13}
{\sc Downey, R.~G., and Fellows, M.~R.}
\newblock {\em Fundamentals of Parameterized Complexity}.
\newblock Texts in Computer Science. Springer Verlag, 2013.

\bibitem{EibenGanianOrdyniakActivity18}
{\sc Eiben, E., Ganian, R., and Ordyniak, S.}
\newblock A structural approach to activity selection.
\newblock In {\em Proceedings of the 27th International Joint Conference on
  Artificial Intelligence and the 23rd European Conference on Artificial
  Intelligence, IJCAI-ECAI 2018\/} (2018), AAAI Press/IJCAI.
\newblock to appear.

\bibitem{ErdosTuran41}
{\sc Erd\H{o}s, P., and Tur{\'a}n, P.}
\newblock On a problem of {S}idon in additive number theory, and on some
  related problems.
\newblock {\em Journal of the London Mathematical Society 1}, 4 (1941),
  212--215.

\bibitem{FellowsLokshtanovMisra08}
{\sc Fellows, M.~R., Lokshtanov, D., Misra, N., Rosamond, F.~A., and Saurabh,
  S.}
\newblock Graph layout problems parameterized by vertex cover.
\newblock In {\em Algorithms and Computation, 19th International Symposium,
  {ISAAC} 2008, Gold Coast, Australia, December 15-17, 2008. Proceedings\/}
  (2008), pp.~294--305.

\bibitem{GanianKO18}
{\sc Ganian, R., Klute, F., and Ordyniak, S.}
\newblock On structural parameterizations of the bounded-degree vertex deletion
  problem.
\newblock In {\em 35th Symposium on Theoretical Aspects of Computer Science,
  {STACS} 2018, February 28 to March 3, 2018, Caen, France\/} (2018),
  pp.~33:1--33:14.

\bibitem{GanianOrdyniakRamanujan17}
{\sc Ganian, R., Ordyniak, S., and Sridharan, R.}
\newblock On structural parameterizations of the edge disjoint paths problem.
\newblock In {\em 28th International Symposium on Algorithms and Computation,
  {ISAAC} 2017, December 9-12, 2017, Phuket, Thailand\/} (2017), Y.~Okamoto and
  T.~Tokuyama, Eds., vol.~92 of {\em LIPIcs}, Schloss Dagstuhl -
  Leibniz-Zentrum fuer Informatik, pp.~36:1--36:13.

\bibitem{GareyJohnson79}
{\sc Garey, M.~R., and Johnson, D.~R.}
\newblock {\em Computers and Intractability}.
\newblock W. H. Freeman and Company, New York, San Francisco, 1979.

\bibitem{GuptaRoySaurabhZehavi17}
{\sc Gupta, S., Roy, S., Saurabh, S., and Zehavi, M.}
\newblock Group activity selection on graphs: Parameterized analysis.
\newblock In {\em Algorithmic Game Theory - 10th International Symposium,
  {SAGT} 2017, L'Aquila, Italy, September 12-14, 2017, Proceedings\/} (2017),
  V.~Bil{\`{o}} and M.~Flammini, Eds., vol.~10504 of {\em Lecture Notes in
  Computer Science}, Springer, pp.~106--118.

\bibitem{IBR17}
{\sc Igarashi, A., Bredereck, R., and Elkind, E.}
\newblock On parameterized complexity of group activity selection problems on
  social networks.
\newblock In {\em Proceedings of the 16th Conference on Autonomous Agents and
  MultiAgent Systems\/} (2017), AAMAS '17, International Foundation for
  Autonomous Agents and Multiagent Systems, pp.~1575--1577.

\bibitem{IgarashiBredereckElkind17arv}
{\sc Igarashi, A., Bredereck, R., and Elkind, E.}
\newblock On parameterized complexity of group activity selection problems on
  social networks.
\newblock {\em CoRR abs/1703.01121\/} (2017).

\bibitem{IgarashiPE17}
{\sc Igarashi, A., Peters, D., and Elkind, E.}
\newblock Group activity selection on social networks.
\newblock In {\em Proceedings of the Thirty-First {AAAI} Conference on
  Artificial Intelligence, February 4-9, 2017, San Francisco, California,
  {USA.}\/} (2017), {AAAI} Press, pp.~565--571.

\bibitem{LeeW17GASP}
{\sc Lee, H., and Williams, V.~V.}
\newblock Parameterized complexity of group activity selection.
\newblock In {\em Proceedings of the 16th Conference on Autonomous Agents and
  MultiAgent Systems, {AAMAS} 2017\/} (2017), {ACM}, pp.~353--361.

\bibitem{Niedermeier06}
{\sc Niedermeier, R.}
\newblock {\em Invitation to Fixed-Parameter Algorithms}.
\newblock Oxford Lecture Series in Mathematics and its Applications. Oxford
  University Press, Oxford, 2006.

\bibitem{Pietrzak03}
{\sc Pietrzak, K.}
\newblock On the parameterized complexity of the fixed alphabet shortest common
  supersequence and longest common subsequence problems.
\newblock {\em J. of Computer and System Sciences 67}, 4 (2003), 757--771.

\end{thebibliography}

\end{document}